\DeclareMathOperator{\Tr}{Tr}
\newtheorem{theorem}{Theorem}
\newtheorem*{theorem*}{Theorem}
\newtheorem{lemma}[theorem]{Lemma}
\begin{document}

\preprint{APS/123-QED}

\title{Hamiltonian Quantum Generative Adversarial Networks}
\author{Leeseok Kim}
\affiliation{Department of Electrical \& Computer Engineering and Center for Quantum Information and Control, University of New Mexico, Albuquerque, NM 87131, USA}
\author{Seth Lloyd}
\affiliation{Department of Mechanical Engineering, Massachusetts Institute of Technology, Cambridge, MA 02139, USA}
\author{Milad Marvian}
\affiliation{Department of Electrical \& Computer Engineering and Center for Quantum Information and Control, University of New Mexico, Albuquerque, NM 87131, USA}


\begin{abstract}
We propose Hamiltonian Quantum Generative Adversarial Networks (HQuGANs), to learn to generate unknown input quantum states using two competing quantum optimal controls. The game-theoretic framework of the algorithm is inspired by the success of classical generative adversarial networks in learning high-dimensional distributions. The quantum optimal control approach not only makes the algorithm naturally adaptable to the experimental constraints of near-term hardware, {but also offers a more natural characterization of overparameterization compared to the circuit model.} We numerically demonstrate the capabilities of the proposed framework to learn various highly entangled many-body quantum states, using simple two-body Hamiltonians and under experimentally relevant constraints such as low-bandwidth controls. We analyze the computational cost of implementing HQuGANs on quantum computers and show how the framework can be extended to learn quantum dynamics. {Furthermore, we introduce a new cost function that circumvents the problem of mode collapse that prevents convergence of HQuGANs and demonstrate how to accelerate the convergence of them when generating a pure state.}

\end{abstract}

\maketitle

\section{Introduction}
Generative Adversarial Networks (GANs) \cite{GAN} are one of the most powerful tools of unsupervised learning algorithms in classical machine learning to generate complex and high-dimensional distributions. The learning process of GANs is based on an adversarial game between two players, a generator and a discriminator. The generator's goal is to produce fake data similar to real data, and the discriminator's goal is to discriminate between the data generated from the generator and the real data. Such an adversarial game can be seen as a minimax game that converges to a Nash equilibrium in which the generator efficiently simulates the real data under plausible assumptions \cite{GAN}. GANs have worked successfully on several realistic tasks including photorealistic image generations \cite{GAN_image}, image super-resolution \cite{GAN_image_sp}, video generation \cite{GAN_video},  molecular synthesis \cite{GAN_mole}, etc. 

Inspired by the success of classical GANs, a quantum mechanical counterpart of GANs, a quantum GAN (QuGAN) \cite{QGAN, QGAN2}, has recently been proposed. Unlike classical GANs, both input and output data in the QuGAN are quantum mechanical, such as an ensemble of quantum states (which could themselves be generated from classical data). In this framework, the generator can be viewed as a quantum circuit that aims to reproduce the ensemble, and the discriminator makes quantum measurements to distinguish the real ensemble from the generated (fake) ensemble. For convex cost functions, such as $1$-norm \cite{QGAN} or quantum Wasserstein distance of order $1$ \cite{emdistance}, the generator and the discriminator perform convex optimization within compact and convex sets: consequently, there always exists a Nash equilibrium point in the generator-discriminator strategy space \cite{QGAN}. In fact, such an equilibrium point is unique and is achieved when the discriminator is unable to tell the difference between the true ensemble and the generated ensemble \cite{QGAN}. Similar to classical GANs, QuGANs have been used to learn random distributions \cite{QGAN2}, discrete distributions \cite{QGAN_discrete}, quantum states \cite{QGAN_Ent}, and generate images \cite{QGAN_image}. Such applications make use of Variational Quantum Algorithms (VQAs) to train QuGANs: the generator and the discriminator are parameterized quantum circuits, where the parameters are optimized via classical optimizers. This approach makes QuGANs feasible to be implemented on near-term Noisy Intermediate-Scale Quantum computers (NISQ) \cite{NISQ}. In fact, implementations of QuGANs have already been explored in quantum devices such as superconducting quantum processors to learn quantum states of small systems \cite{QGAN_spcircuit, QGAN_sup2}. 

Rather than performing the computational task using parameterized quantum circuits, one can directly control the parameters of the system Hamiltonian. How to control such time-dependent Hamiltonians is a well-studied field --- \textit{quantum optimal control} (QOC) \cite{QOC}. The goal of QOC is to find optimal sets of control parameters, or pulses, to achieve a predefined goal by steering the dynamics of a given quantum system. Examples of such objectives include optimizing the fidelity between two quantum states, average gate fidelity, and expectation values of an observable \cite{QOC_fids, QOC1}. In fact, some applications of QOC give promising results in the field of quantum computation, such as designing high-fidelity quantum entangling gates \cite{QOC_ex1, QOC_ex2}. In addition, QOC can effectively reduce the latency of groups of quantum gates, which current gate-based compilations for quantum systems suffer from \cite{QOC_latency}. There have been extensive studies on developments and applications of popular methods of QOC including gradient-based methods such as GRAPE \cite{GRAPE} and Krotov \cite{Krotov} and gradient-free methods such as CRAB \cite{CRAB}, to many different quantum systems.


In this work, we introduce a Hamiltonian QuGAN (HQuGAN), a framework to generate quantum resources, such as quantum states or unitary transformations, by directly controlling the native parameters of system Hamiltonians using two competing quantum optimal controls, one for the generator and one for the discriminator. 

The proposed HQuGAN has several favorable properties compared to circuit-model variational algorithms. First, unlike parameterized quantum circuit models, HQuGANs perform the learning task by changing the native parameters of the Hamiltonian itself. In the circuit model, each quantum gate must be translated into control pulses, for example electrical signals, that implement the specified operations on the underlying quantum device. But it is not necessarily the case that the variational parameters specified by the algorithm, can be directly translated to the control pulses implementing the gate.  This creates a barrier between the expressibility of the logical gates and the set of operational instructions that can be efficiently implemented on real experimental systems \cite{comparison1}. Besides, even if the approximate translation is possible, the algorithm suffers from possible gate errors accumulated by each translated gate, causing a mismatch between the ideal gates and the implemented pulses \cite{gate_error1}. Controlling the Hamiltonian itself, however, avoids both barriers. 

{For these reasons, recent studies have indicated that replacing variational quantum circuits with QOC methods can be advantageous for NISQ devices, which have a limited gate depth due to a short coherent time and gate errors. For instance, Ref.\cite{vqe_qoc_chem} introduces an algorithm for Variational Quantum Eigensolver (VQE) simulations at the device-level using QOC, which significantly reduces the coherence time required for the state preparation by several orders of magnitude compared to using variational quantum circuits in superconducting transmon platforms. Furthermore, Ref.\cite{vqe_qoc_chem2} has extended the work and demonstrated that one can prepare target molecular ground states on the transmon processors within the optimal time by directly controlling a device Hamiltonian that describes coupled transmon qubits. Similarly, Ref.\cite{pulsebased} conducts a comparison between two methods to approximate molecular ground states of various molecules. The authors show that directly controlling Hamiltonians using QOC generally have better convergence and require fewer quantum resources compared to the gate-based approaches. In addition, when considering short evolution times, it outperforms the gate-based approaches.}

Moreover, HQuGANs can benefit from \textit{overparameterization} due to the continuous nature of the control parameters, leading to a better convergence on the minimax game. For classical GANs, it has recently been shown that overparameterization appears to be a key factor in the successful training of GANs to global saddle points \cite{overparam4}. Furthermore, overaparameterization appears to provide substantial advantages in training deep neural networks \cite{overparam_dnn}. In the (circuit model) quantum setting, it has been shown that while underparameterized Quantum Neural Networks (QNNs) have spurious local minima in the loss landscape, overparameterized QNNs make the landscape more favorable and thus substantially improve a trainability of QNNs \cite{overparam1, overparam2, overparam5_wu}. Given the fact that Hamiltonian Quantum Computing includes the circuit model as a specific subcase, the advantage of the overparameterization phenomenon also applies to QOC models \cite{overparam3}. Importantly, a key observation we make in this work is that, for NISQ devices, directly controlling the parameters of the experimentally available Hamiltonian provides a more natural route to achieve overparameterization. We quantify this observation using optimal control bounds and verify the performance using numerical simulations. We discuss methods to incorporate experimentally relevant constraints on control fields such as low-bandwidth controls and their effect on overparameterization.

The paper is organized as follows. We begin by introducing the concepts of GANs and QuGANs in Section \ref{GAN}, followed by an introduction to quantum optimal control (QOC) in Section \ref{QOC} with a specific focus on the GRAPE method. We then describe in Section \ref{ovp_and_bandwidth} methods to incorporate bandwidth limitations of the control fields and the effect on the parameterization of the control problem.
In Section \ref{HQuGAN}, we introduce our Hamiltonian QuGANs (HQuGANs), followed by numerical simulations on generating different quantum states using the proposed HQuGAN in Section \ref{numerical_results}. We then highlight in Section \ref{remarks} how different cost functions can affect the convergence rate of the HQuGAN. {Specifically, we introduce a new form of cost function for QuGANs in order to circumvent the issue of mode collapse that was first raised in Ref.\cite{QGAN_Ent}.} Executing the QOC for large systems can be  computationally infeasible using classical computers. Hence, we propose methods to use quantum computers as subroutines of the HQuGAN to avoid such intractability in Section \ref{QA} and analyze the required resources. 

\section{Quantum GAN}\label{GAN}
In classical GANs, to learn a distribution $p_g$ over data $x$, we consider a parameterized generative neural network map $G(\theta_g, z)$   where $\theta_g$ represents the parameters of the network and $p_z(z)$ is a prior on the input noise variables. We also define another parameterized map $D(\theta_d, x)$, corresponding to the discriminative neural network,  that outputs the probability that a given $x$ is sampled from the dataset rather than the generator’s distribution $p_g$. The goal of the generator is to fool the discriminator by generating $G(\theta_g, z)$ that is indistinguishable from $D(x)$. The discriminator then tries to distinguish between the true data distribution and the generator’s distribution, the best she can. Hence, GANs alternate between the discriminator maximizing the probability of assigning the correct label to both training examples and samples from $G(\theta_g, z)$ and the generator minimizing the same loss that $\theta_d$ is maximizing. Formally, the two players play the following minimax game by solving:
\begin{equation}
\begin{split}
    \min_{\theta_g}\max_{\theta_d} V(\theta_g, \theta_d) & = \mathbb{E}_{x\sim p_{data}(x)}[\log D(\theta_d, x)] \\ & + \mathbb{E}_{z\sim p_z(z)}[1-D(\theta_d, G(\theta_g, z))],
\end{split}
\end{equation} {where a global Nash equilibrium point exists at $p_g = p_{\text{data}}$ \cite{GAN}. In practice, however, training GANs to reach the desired equilibrium point can be challenging for several reasons such as vanishing gradients \cite{vanishing_gan} and mode collapse \cite{classicalmodecollapse}. While none of these issues have completely solved, there are several attempts to remedy the issues including by using Wasserstein GANs \cite{WGAN} and modifying the minimax cost function \cite{gan_modify}. Since similar problems have observed in quantum GANs \cite{QGAN_Ent, bp1}, we later discuss how to remedy them using quantum Wasserstein GANs \cite{wdistance1, emdistance} and modifying the minimax cost function in Section \ref{remarks}.} 

In a quantum GAN (QuGAN),  the goal is to learn an unknown quantum state $\sigma$, representing the true data. This goal is achieved by an iterative  game played by two quantum agents: a generator and a discriminator. In each iteration, after the generator updates his parameters $\theta_g$ to produce a density matrix $\rho(\theta_g)$, the discriminator takes as input the quantum state from the generator or the true data and performs a discriminating measurement. In other words, the discriminator attempts to find a Hermitian operator $D$ that maximally separates the expected values with respect to the two quantum states, i.e. maximizing $\text{Tr}(D(\sigma-\rho))$. As a consequence, the objective of QuGANs can be expressed as solving \cite{QGAN}
\begin{equation}
\label{QGAN}
    \min_{\theta_g}\max_{D} \text{Tr}(D(\sigma - \rho(\theta_g))).
\end{equation} The core idea of QuGANs, analogous to classical GANs, is based on an \textit{indirect} learning process of the minimax game suggested above, where it aims to generate the true quantum state $\sigma$ without using classical descriptions of $\sigma$. While to include the optimal discriminative measurements the constraint $\Vert D \Vert_\infty \leq 1$ was considered in the original QuGAN proposal \cite{QGAN}, recently a QuGAN based on the quantum Wasserstein distance of order $1$ (or quantum $W_1$ distance) has been proposed \cite{wdistance1}.

The quantum $W_1$ distance is based on the notion of neighboring quantum states. Two states are called neighbors if they differ only by one qubit. The quantum $W_1$ distance is then the maximum norm induced by assigning distance at most one to every couple of neighboring states. Using the quantum $W_1$ distance dual formulation \cite{wdistance1}, minimizing the quantum $W_1$ distance can be expressed as the following minimax game, 
\begin{equation}\label{quantumwdistance}
 \min_{\theta_g} \max_D\{ \Tr(D(\sigma-\rho(\theta_g))), \Vert D \Vert_L \leq 1 \},
\end{equation} where the quantum Lipschitz constant of an observable $H$ is defined as
\begin{align}
    \Vert H \Vert_L = 2 \max_{i=1,\dotsc,n} &\min\{\Vert H - H^{(i)}\Vert_\infty : H^{(i)} \text{ does not} \nonumber \\ & \text{act on $i$-th qubit}\}. \label{lipschitz_quantum}
\end{align}
In fact, a recent work \cite{emdistance} shows that the loss landscape induced by using the quantum $W_1$ distance as the cost function for QuGANs can potentially provide an advantage in learning certain structured states like the GHZ states compared to other metrics such as fidelity. In addition to the optimization landscape, the choice of cost functions also can affect the rate of convergence and the attainability of the equilibrium  point. {The main obstacle preventing proper convergence of QuGANs is mode collapse \cite{QGAN_Ent}. This happens because the generator in Eq.\eqref{QGAN} focuses on producing a state that aligns with $D$ without considering the target state $\sigma$. We introduce a novel form of cost function that prevents this issue, as it will be discussed in Section \ref{remarks}.}

\begin{figure*}[t]
    \centering
    \subfloat{\includegraphics[width=9.5cm]{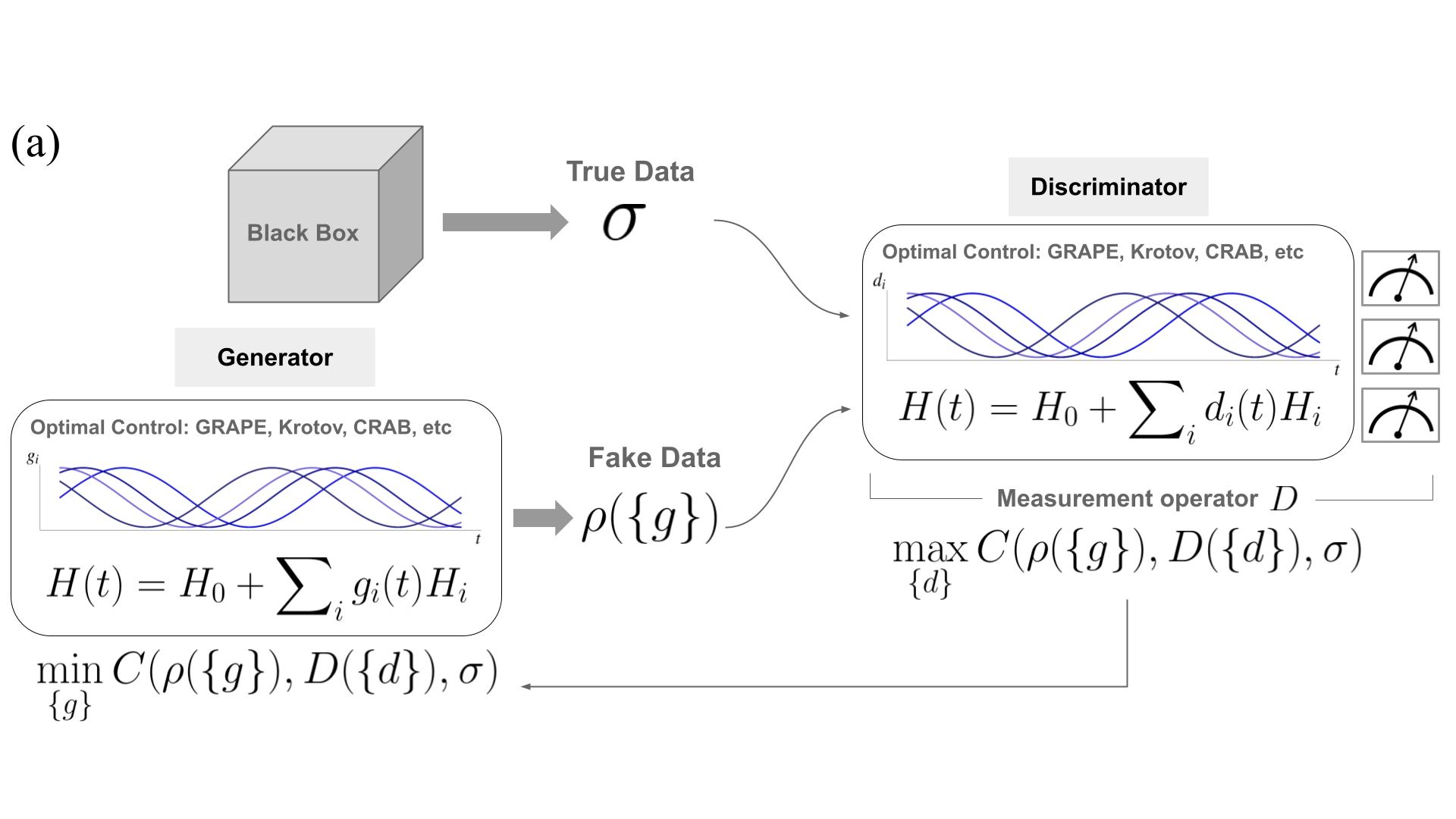}}
    \qquad
    \subfloat{{\includegraphics[width=7.5cm]{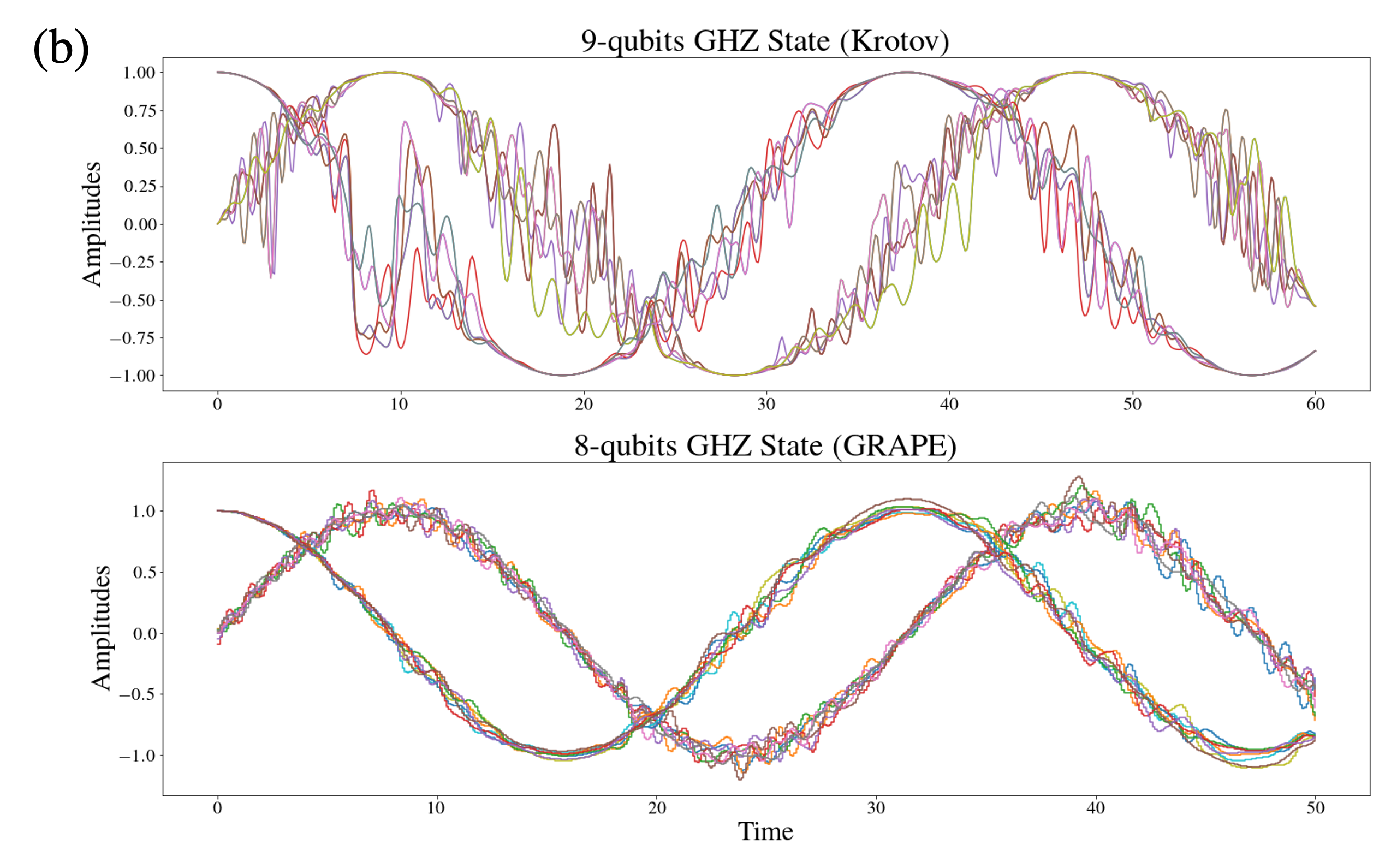}}}
    \caption{\textbf{(a) Schematic of a general Hamiltonian Quantum Generative Adversarial Networks (HQuGANs) protocol.} Given an unknown quantum state $\sigma$ generated from a black box (e.g. an unknown quantum process or experiment), the ultimate goal of the HQuGAN is to find control fields that generate a fake quantum state $\rho(g)$ close to $\sigma$ as much as possible. To achieve this task, the HQuGAN employs an iterative game with the objective function $C$ between two players, a generator and a discriminator, who update control fields ($\{g\},\{d\}$) in every round while the other player's fields are kept fixed. The generator aims to produce $\rho(\{g\})$ to fool the discriminator by minimizing $C$, but the discriminator tries to discriminate the two quantum states by maximizing $C$. Such protocol can be phrased as a minimax game $\min_{\{g\}}\max_{\{d\}} C(\rho(\{g\}), D(\{d \}), \sigma) $, where the objective function $C$ can be in various forms. \textbf{(b) Examples of optimized control fields to generate generalized GHZ states.} Optimized control fields that successfully generate $9$ and $8$-qubit GHZ states using Krotov's method and GRAPE in the HQuGAN setting are presented respectively. Different colors represent the time-dependent control fields for the local Pauli $X$ and $Z$ terms in the Ising chain in Eq.~\eqref{Ham1}.}
    \label{fig:HQuGAN_fig}
\end{figure*}

\section{Quantum Optimal Control}\label{QOC}
The goal of quantum optimal control (QOC) is to find control parameters, or control pulses $\{ \epsilon_i(t) \}$, that achieve a predefined task, for example generating a desired (known) quantum state, using a given Hamiltonian such as
\begin{equation}
\label{QOC_ham}
    H(t) = H_0 + \sum_i \epsilon_i(t) H_i,
\end{equation} where $H_0$ is the drift Hamiltonian and $\{ H_i \}$ is the set of control Hamiltonians. A standard approach in QOC is to optimize an objective functional that depends on the control fields $J[\{\epsilon_i(t)\}]$,
\begin{equation}
\label{J1}
    \min_{\{\epsilon_i\}} J[\{\epsilon_i(t) \}]
\end{equation} subject to the Schr\"{o}dinger equation of the time-dependent Hamiltonian. A common choice for the main functional is the infidelity between fully time-evolved quantum states and a known target state, e.g.
\begin{equation}
\label{infidelity}
    F = 1- \lvert \braket{{\psi_{targ}}|\psi(T)} \rvert^2 = 1-\lvert \bra{\psi_{targ}}U(T)\ket{\psi_0}\rvert^2,
\end{equation} where $U(T) = \mathcal{T}\exp(-i\int_{0}^T H(t) dt)$ is the total evolution propagator of $H(t)$ from $t \in [0,T]$ and $\ket{\psi_0}$ is an initial state. Additional penalty terms can be added to the cost function to achieve specific tasks such as realizing smooth, low-bandwidth controls by suppressing rapid variations of the control fields \cite{penalty_grape_1}, which we will discuss in more detail in Section \ref{ovp_and_bandwidth}. {It is worth noting that quantum optimal control, similar to a gate-based model, is subject to a phenomenon known as barren plateaus in the control landscape associated with variational optimizations \cite{bp1, qoc_bp}. This refers to a situation where the gradient of the objective functional $J[{\epsilon_i(t)}]$ (such as Eq.\eqref{infidelity}) vanishes exponentially in terms of the size of the quantum system. To address the issue at hand, which is particularly relevant when learning certain structured quantum states such as GHZ states, it could be beneficial to consider alternative cost functions such as the quantum $W_1$ distance \cite{wdistance1, emdistance}. Currently, it is an open problem to determine the extent to which the quantum $W_1$ distance can improve the landscape of cost functions in quantum optimal control.}

A popular QOC technique is Gradient Ascent Pulse Engineering (GRAPE) \cite{GRAPE}. Since $U(T)$ is difficult to obtain analytically, the GRAPE algorithm first discretizes the time domain into $N$ equal pieces of intervals ($\Delta t = T/N$) and approximates the Hamiltonian with a piecewise constant Hamiltonian within each interval $[t_j, t_j+\Delta t)$. The total time evolution operator can then be expressed as a product of $N$ unitary matrices,
\begin{equation}
    U(T) = \prod_{j=1}^N U(t_j) = \prod_{j=1}^N \exp{\Big[ -i\Delta t(H_0 + \sum_i \epsilon_i(t_j) H_i)\Big]}.
\end{equation} Then, the control fields at all the time steps are updated concurrently using their gradients with respect to the objective function $J$,
\begin{equation}
    \epsilon_i(t_j) \longleftarrow \epsilon_i(t_j) + \alpha \frac{\partial J}{\partial \epsilon_i(t_j)},
\end{equation} where the gradient can be obtained via approximating the propagator derivatives to the first order of $\Delta t$ \cite{GRAPE},
\begin{equation}
\begin{split}
    \frac{\partial U(t_j) }{\partial \epsilon_i(t_j)} &\approx -i \Delta t H_iU(t_j),
\end{split}
\end{equation} which makes the computation very affordable. To achieve faster and more stable convergence of the optimization process, one can incorporate a quasi-Newton method, particularly Broyden–Fletcher–Goldfarb–Shanno algorithm (BFGS) \cite{bfgs} or Limited-memory BFGS (L-BFGS) \cite{lbfgs} in the GRAPE algorithm, which requires calculating the Hessian matrix of the cost function. 


Another quantum optimal control protocol we consider in this paper is Krotov's method \cite{Krotov}. The method is based on a rigorous examination of conditions for calculating the updated control fields such that it always guarantees a monotonic convergence of the objective functional $J[\{\epsilon_i(t) \}]$ by construction. An appealing feature of Krotov's method is that it mathematically guarantees that control fields are continuous in time \cite{KrotovPython}. However, it is computationally more expensive than GRAPE since a single optimization step requires solving the Schr\"{o}dinger equations $2N$ times, where $N$ is the number of time steps. We discuss the details of Krotov's method and its applications to HQuGANs in Appendix \ref{other_QOC}.

\section{Control bandwidth and Overparameterization}\label{ovp_and_bandwidth}
In practice, it is often desirable to generate bandwidth-limited control fields, as high-frequency control pulses are hard to implement with high accuracy in many experiments. There exist various ways to constrain the bandwidth of control fields in different quantum optimal control techniques \cite{penalty_grape_1, slepian_2, penalty_rabitz, toc}. One of the most common methods is to penalize rapid variations of control fields by adding the following penalty term to the cost function \cite{penalty_grape_1},
\begin{equation}\label{penalty1}
    J_p = \alpha \sum_{i,j} \lvert \epsilon_i(t_j) - \epsilon_i(t_{j-1}) \rvert^2.
\end{equation} Minimizing $J_p$ reduces the variations of every pair of adjacent control pulses and thus serves as a soft penalty term to limit the control bandwidth. 
This penalty term has been successfully used to find low-bandwidth control in many quantum optimal control settings {\cite{penalty_grape_1, penalty_grape_2}}. 

\begin{figure*}
\begin{minipage}{\linewidth}
\begin{algorithm}[H]
    \caption{Hamiltonian Quantum Generative Adversarial Networks for Learning Arbitrary Quantum State}
    \label{algorithm1}
    \begin{algorithmic}[1]
        \Require{time-dependent Hamiltonian $H(t)$, unknown target state $\sigma$, initial control fields $\{\epsilon_i(0)\}$, initial state $\rho_0$, evolution time $T$, Trotter steps $N$, fixed measurement operator $D_0$} 
        \Ensure{Control fields $\{\epsilon_i\}$ (which can be used to generate a quantum state $\rho(\{\epsilon_i\})$ close to the target state $\sigma$)}
        \Statex
        \Procedure{Hamiltonian Quantum Generative Adversarial Networks}{}
            \While{$F(\rho(\{ g_i\}),\sigma) \leq 0.999$}\Comment{Terminates when $F(\rho(\{ g_i\}),\sigma) > 0.999$}
                \Procedure{Generator}{}
                    \If{\text{first round of QuGAN}}
                        \State Initialize initial control pulse $\{g_i(0)\} = \{\epsilon_i(0)\}$ and fixed measurement operator $D=D_0$
                    \EndIf
                    \State Minimize $C(\rho(\{g_i\}), D(\{d_i \}), \sigma)$ using QOC
                    \State $\{ g_i(0) \} \gets \{g_i\}$ \Comment{Updates optimize control fields}
                    \State \textbf{return} $\rho(\{ g_i\})$\Comment{Returns time-evolved quantum state with optimized control fields}
                \EndProcedure
                \Procedure{Discriminator}{}
                \State Initialize $\{d_i(0)\} = \{\epsilon_i(0)\}$
                \State Maximize $C(\rho(\{g_i\}), D(\{d_i \}), \sigma)$ using QOC
                \State $D \gets U^\dagger(\{ d_i\}) D_0 U (\{ d_i\})$ \Comment{Updates new observable $D$}
                \State \textbf{return} $D$
                \EndProcedure
            \EndWhile
        \State $\{\epsilon_i\} \gets \{ g_i\} $ \Comment{Obtain final control fields $\{ g_i\}$}
        \State \textbf{return} $\{\epsilon_i\}$ 
        \EndProcedure
    \end{algorithmic}
\end{algorithm}
\end{minipage}
\end{figure*}

In the circuit model, the number of independent parameters that can be varied to implement an algorithm is directly determined by the number of parameterized quantum gates. In Hamiltonian Quantum Computing (HQC), determining the number of independent parameters can be more involved. Intuitively, the number of free parameters in the HQC setting should increase linearly with the total evolution time $T$. Also, a smaller cost function $J_p$, or equivalently, a smaller control bandwidth should decrease the corresponding number of independent parameters. Such intuition has been formally proven via an information-theoretic argument on the information content of a classical field controlling a quantum system \cite{qsl}.

Such a bound can be derived by first defining a minimum number of $\epsilon$-balls to cover the whole space of reachable states of a given quantum system so that one of the balls identifies a generic target state within a radius $\epsilon$. To uniquely specify which ball the target state is in, the control fields need to be able to express at least as many configurations as the number of balls. As a consequence, one can derive  the following fundamental quantum speed limit in terms of the bandwidth of the control field,
\begin{equation}\label{tblimit}
    T \geq \frac{D}{\Delta \Omega \kappa_s}\log_2 (1/\epsilon),
\end{equation} where $D$ is the dimension of a set of reachable states of a given quantum system $(D=2^{2n}$ in general, $2^n$ for pure states), $\Delta \Omega$ is the bandwidth of the control field, $\kappa_s = \log_2(1+\Delta \gamma / \delta \gamma)$ ($\Delta \gamma$ and $\delta \gamma$ are the maximal and minimal allowed variations of the control field), and $\epsilon$ is a maximum (any) norm difference between the target state and a state generated by the control field~\cite{qsl}. This time-bandwidth quantum speed limit thus tells us that control fields with higher bandwidth require less evolution time $T$ to steer a quantum system to achieve a target state, compared to control fields with lower bandwidth. The bound of Eq.~\eqref{tblimit} has been numerically verified in various settings \cite{dCRAB, slepian1}. Since the number of independent parameters is proportional to $T \Delta \Omega$, the time-bandwidth quantum speed limit in Eq.\eqref{tblimit} provides the dimension of a set of reachable states $D$ as the lower bound on the number of parameters to reach any state in the set. This bound matches the result of  Ref.\cite{overparam1}, where it has been shown that having as many parameters as the dimension of the dynamical lie algebra of a given system is enough to achieve overparameterization in parametrized quantum circuits.

This relationship between the number of independent variables and the control bandwidth provides us with a tool to study the trade-off between the limited bandwidth to implement control fields in experimental settings and the advantage of overparameterization in the performance of classical GANs \cite{overparam4} and also quantum neural networks \cite{overparam1, overparam2}. {On the other hand, overparameterization in the circuit model is soley determined by the number of parameterized quantum circuits, which fails to account for experimental constraints.} We numerically verify this relationship by proposing the penalty term in the cost function of HQuGANs in Appendix \ref{bandwidth_numerics}.

\section{Hamiltonian Quantum GAN}\label{HQuGAN}

We now introduce the HQuGAN algorithm to learn an arbitrary unknown quantum state $\sigma$. As illustrated in Figure \ref{fig:HQuGAN_fig}, 
the learning process is based on a minimax game consisting of two players, a generator and a discriminator, where each player has access to a Hamiltonian in the form of Eq.\eqref{QOC_ham}. At each round of the HQuGAN, each player uses quantum optimal control techniques to update the control parameters of their Hamiltonians to optimize the cost function $C$ while fixing the other player's parameters. More specifically, in each round, the generator finds  optimal control parameters $\{ g \}$ such that the generated quantum state $\rho(\{ g \})$ minimizes the cost function $C$. Once the generator's turn is finished, the discriminator finds her optimal control parameters $\{ d \}$ that produce a measurement operator $D$ that maximally discriminates the two quantum states $\sigma$ and $\rho(\{ g \})$. As illustrated in Figure \ref{fig:HQuGAN_fig}, in this work we restrict the measurements to the ones that  can be decomposed as a parameterized quantum dynamics (generated by Hamiltonian in the form of Eq.\eqref{QOC_ham}) followed by a  fixed quantum measurement $D_0$. Hence, the measurement operator $D = U^\dagger(\{ d \}) D_0 U(\{d \})$. Therefore, the HQuGAN solves the following game,
\begin{equation}\label{HQuGAN_minimax}
    \min_{\{g\}}\max_{\{d\}} C(\rho(\{ g \}), D(\{ d \}), \sigma).
\end{equation} Such an iterative game between the two players continues until the fixed point is approximately reached, or other desired criteria, such as the Uhlmann fidelity between the generator's state and the target state, are achieved. The algorithm is also described in Algorithm \ref{algorithm1}.

As discussed in Section \ref{GAN}, we will study various forms of the cost function $C$. Choosing $C = \Tr(D(\{ d \} )(\rho(\{ g\} ) - \sigma))$ recovers the trace distance, used in \cite{QGAN}, when $\Vert D \Vert_\infty \leq 1$ and the quantum $W_1$ distance when $\Vert D \Vert_L \leq 1$ (the quantum Lipschitz constant is described in Eq.\eqref{lipschitz_quantum}). The choice of the cost function not only changes the optimization landscape, but can also affect the reachability of the fixed point, which will be discussed in Section \ref{remarks}. To study this issue, an additional cost function that we consider is 
\begin{equation}\label{QGAN_abs}
    C = \lvert \Tr(D(\{ d \} )(\sigma - \rho(\{ g\} )))\rvert^2.
\end{equation}

In fact, the minimax game using the cost function above has a Nash equilibrium point at the desired location. The Nash equilibrium is a stationary point where no player can benefit by changing their strategy while the other player keeps their strategy unchanged. In other words, the Nash equilibrium is a point $(\{ g \}^*, \{ d \}^*)$ where $\{ g \}^*$ gives a global minimum of $f(\cdot,\{ d \}^*)$ and $\{ d \}^*$ gives a global maximum of $f(\{ g \}^*, \cdot)$. Therefore, the above minimax game has the Nash equilibrium at the desired location of $\rho(\{ g \}^*) = \sigma$.

\section{Numerical Experiments}\label{numerical_results}
In this section, we present numerical experiments on the performance of the proposed HQuGANs in learning various many-body quantum states and also quantum dynamics.

\subsection{Setup}\label{setups}
Motivated by current experimental capabilities \cite{experimental_1, experimental_2, experimental_3}, to test the performance of the proposed algorithm we consider an $n$-qubit 1D time-dependent Longitudinal and Transverse Field Ising Model (LTFIM) Hamiltonian \cite{LTFIM1} 
with open boundary conditions,
\begin{equation}
\label{Ham1}
    H(t) = \sum_{i=1}^n \epsilon_i(t) X_i + \sum_{i=1}^n \epsilon_{i+n}(t) Z_i - J\sum_{i=1}^{n-1} Z_i Z_{i+1},
\end{equation} 
for both the generator and the discriminator. (We set $\hbar = 1$, and therefore the coupling parameters are expressed in hertz. For example, if the total evolution time $T$ is in nanoseconds then $\epsilon_i(t)$ is in gigahertz.) Note that the strength of all the $ZZ$ couplings is set to a fixed value (i.e. $J=1$), and we only assume the stringent condition of having control over the local fields. (Of course, having more control, especially over the entangling interactions, will introduce more degrees of freedom and therefore will reduce the required time to generate arbitrary quantum states.) Hence, for an $n$-qubit system, each player optimizes over $2n$ control pulses of the local Pauli terms in the Hamiltonians. 

The initial control fields are chosen as simple sinusoidal shapes that can be easily generated by both players, 
\begin{align}
\label{Pulse1}
    \epsilon_i(t=0) = \begin{cases}\sin(10{t}/{T}), & \text{for  } i=1,...,n, \\
    \cos(10{t}/{T}), & \text{for  } i=n+1,...,2n.
    \end{cases}
\end{align} 
(We also consider a constant initial control,  $\epsilon_i(t=0) = 1$, for the bandwidth analysis in Appendix \ref{bandwidth_numerics}.)
We set the initial state to be the easily preparable state  $\ket{1}^{\otimes{n}}$, which is the groundstate of $H(t)$ at $t=0$. We keep control pulses at $t=0$ unchanged for the generator by setting the gradients of control fields at $t=0$ to zero so that the generator always begins with $\ket{1}^{\otimes{n}}$. In addition, we set the observable $D_0$ to be a $1$-local computational basis measurement, i.e., $D_0 = Z \otimes I^{\otimes {n-1}}$. 

We consider the cost function of the form Eq.\eqref{QGAN_abs} and for the discriminator, we consider the constraints $\Vert D \Vert_\infty \leq 1$ and $\Vert D \Vert_L \leq 1$, corresponding to the trace distance  and the quantum $W_1$ distance. (We discuss the effect of the cost function in more detail in Section \ref{remarks}.)
Gradients of cost functions are approximated to the first order of $\Delta t = T/N$ for both the generator and the discriminator, and all experiments are optimized via the L-BFGS method. The termination criterion we consider is achieving at least $0.999$ fidelity with the target state. {We provide a comprehensive description of how the generator and the discriminator are trained using the quantum optimal control method in Appendix \ref{training_details}.} All the simulations are performed using the optimal control module in QuTiP \cite{Qutip}, with the appropriate modifications for the various cost functions studied in this work.

\subsection{Learning $3$-qubit states}
We attempt to learn $50$ different $3$-qubit superposition states, 
\begin{equation}\label{state1}
    \ket{\psi_{targ}} = \cos\theta_k\ket{000} + \sin\theta_k\ket{111},
\end{equation} where $\theta_k = 2\pi k/50$ for $k=0,1,\dotsc, 49$, using the proposed HQuGAN protocol. The GRAPE algorithm is used for the quantum optimal control for both the generator and the discriminator, using a total evolution time $T=5$ with $N = 50$ Trotter steps. Each player performs full optimization at each round. (The optimization terminating criteria is if the cost function is within $10^{-5}$ of the extreme point, or if the norm of the gradients is smaller than $10^{-5}$, or if the maximum iteration of $50$ is reached.)

\begin{figure}[t!]
    \centering
    \includegraphics[width=8.6cm]{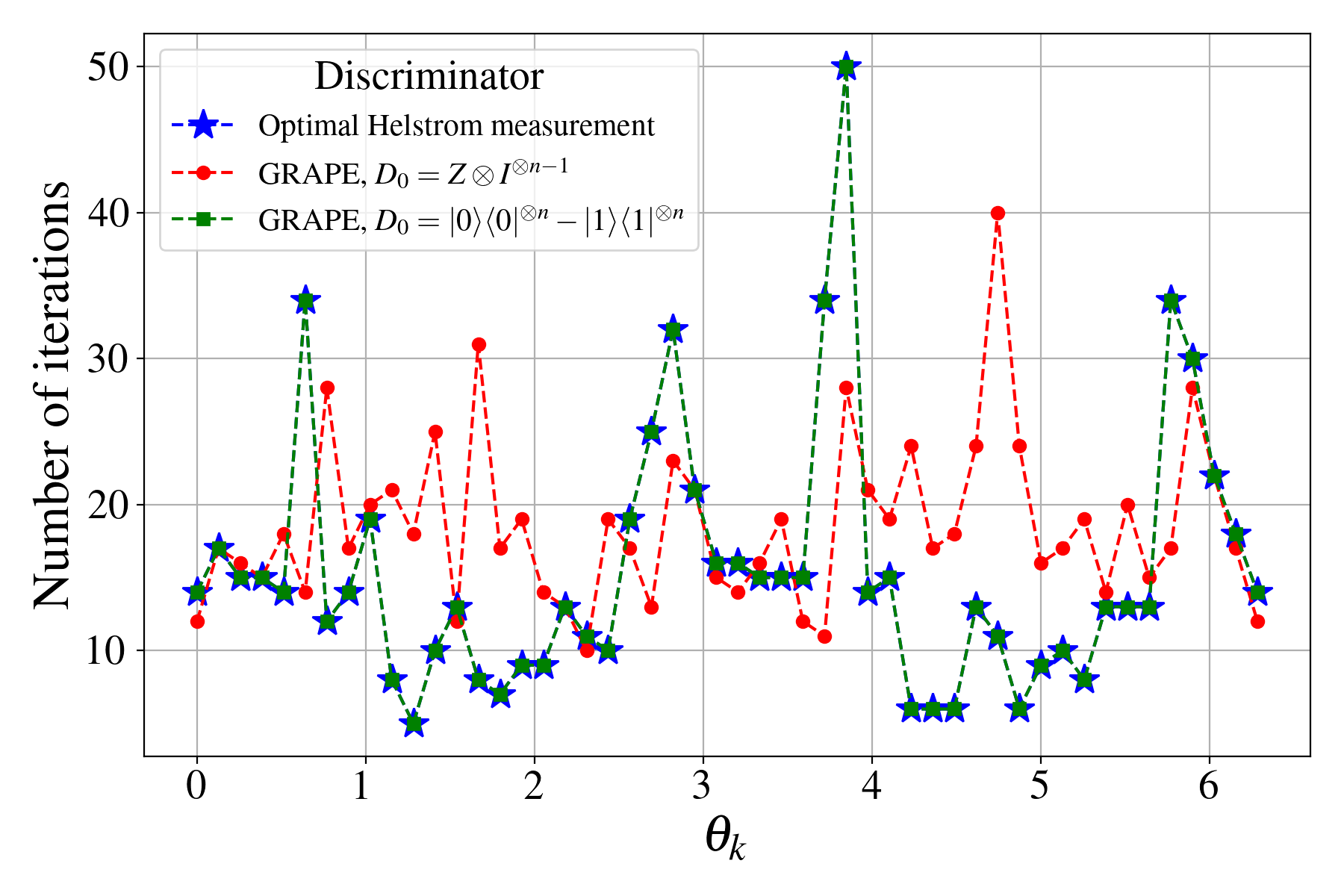}
    \caption{\textbf{HQuGAN experiments for learning $\boldsymbol{3}$-qubit states.} The number of iterations required to generate $50$ different $3$-qubit generalized GHZ states, where $\theta_k$ are angles of the GHZ state in Eq.\eqref{state1}, is presented for three different discriminators: a discriminator that maximizes Eq.\eqref{QGAN_abs} using GRAPE with $D_0 = Z \otimes I^{\otimes n-1}$ (red) and {$D_0 = \ket{0}\!\bra{0}^{\otimes n} - \ket{1}\!\bra{1}^{\otimes n}$ (green)}, and the optimal discriminator performing the Helstrom measurement analytically (blue). While the optimal discriminator performs overall the best, the discriminator using GRAPE with $D_0 = Z \otimes I^{\otimes n-1}$ gives comparable numbers of iterations. {By setting the initial discriminator (with GRAPE) operator $D_0$ as a rank-2 operator, we can obtain the analytic Helstrom measurement operator, given the uniqueness of the rank-2 Helstrom measurement operator. This  is demonstrated by the perfect match between the green and blue lines.}}
    \label{fig:3qubit_comparison}
\end{figure}

In Fig.\ref{fig:3qubit_comparison} the number of iterations of the HQuGAN algorithm to successfully learn to generate rotated GHZ  states with at least $0.999$ fidelity is presented. Clearly, the HQuGAN successfully generates all $50$ states (red lines). To evaluate the performance, we compare the result to an \textit{optimal discriminator} that always chooses the Helstrom measurement (blue lines), which we choose as a sum of two projectors onto positive and negative eigenspaces of $\rho - \sigma$ (see in Appendix \ref{apa}). For almost all instances, the blue line performs comparability well, indicating that the discriminative models using the GRAPE method are expressive enough for the HQuGAN to successfully learn the target states. Indeed, we have numerically observed that optimizing using the GRAPE algorithm always produces a discriminator very close to the Helstrom measurement that fully maximizes $\Tr(D(\rho-\sigma))$. {Another interesting point to note is that the optimal discriminator result can be achieved using GRAPE (instead of calculating it analytically) by initializing the discriminator operator $D_0$ as a rank-2 operator, as the rank-2 Helstrom measurement operator is unique. To demonstrate this, we attempt to generate the same target states using the discriminator with GRAPE, with the initial operator $D_0 = \ket{0}\!\bra{0}^{\otimes n} - \ket{1}\!\bra{1}^{\otimes n}$, which is a rank-2 operator. The performance is presented as the green lines in the figure. As illustrated in the plot, we observe that the performance of the blue and green lines match perfectly. In Section \ref{suggestive_gan} and Appendix \ref{apa}, we discuss how using the optimal discriminator can accelerate the convergence of the minimax game. By fixing $D_0$ to be a rank-2 operator, therefore, we can achieve the speedup when the discriminator is using optimal control protocols.}



\begin{table}[t]
\begin{tabular}{|p{0.5cm}|p{2.3cm}|p{2.3cm}|p{0.8cm}|p{0.8cm}| }
\hline
$n$ & Iter (GRAPE) & Iter (Helstrom) & $T$  & $N$   \\ \hline
\hline
1   & 3                   & 3                     & 5  & 50  \\ \hline
2   & 6                   & 3                     & 5  & 50  \\ \hline
3  & 21                   & 8                 & 5  & 50  \\ \hline
4   &   38               & 35                    & 10 & 100 \\ \hline
5   &     56              & 62                    & 20 & 200 \\ \hline
6   &    88       &       111   &  30 & 300  \\ \hline
\end{tabular}
\caption{\label{tab:table1} \textbf{HQuGAN for learning generalized GHZ states.} The number of iterations required for the HQuGAN to learn generalized $n$-qubit GHZ states with a discriminator that uses GRAPE and also the optimal discriminator. We observe that the numbers of iterations of the HQuGAN for both cases are comparable.}
\end{table}

\subsection{Learning Generalized GHZ States}
After the successful learning of various three-qubit superposition states, we now shift gears to the challenging task of generating  generalized Greenberger-Horne-Zeilinger (GHZ) states, which are extremely useful resource states in quantum information and quantum metrology. We hence focus on generating $n$-qubit GHZ states, 
\begin{equation}
    \ket{\psi_{targ}} = \frac{1}{\sqrt{2}}(\ket{0}^{\otimes n} + \ket{1}^{\otimes n}),
\end{equation} with keeping the HQuGAN settings unchanged from the previous experiment. We keep $T$ proportional to the system sizes and the number of time grids to $N = 10T$ for all instances. As before, we use both GRAPE-equipped and optimal discriminators to evaluate the performance of the HQuGAN. We set $D_0 = Z \otimes I^{\otimes n-1}$.

The numerical experiments are summarized in Table \ref{tab:table1}. The HQuGAN successfully generates up to the $6$-qubit GHZ state using the optimal control (GRAPE) discriminators with a number of iterations similar to the iterations required for the optimal discriminator. 

\subsection{Learning Haar random states}
{Finally, we attempt to learn $50$ Haar random quantum states, i.e. states drawn from the Haar measure, up to $6$-qubits. Similar to the previous experiments, the GRAPE algorithm is used for the quantum optimal control for both the generator and the discriminator. Table \ref{tab:haar} shows both the mean and standard deviation values for the number of iterations of the HQuGAN to successfully learn to generate all $50$ Haar random states with at least $0.999$ fidelity. We find that the mean number of iterations required by the HQuGAN algorithm increases exponentially in terms of the system size, which is not surprising considering the fact that learning generic quantum states demands exponentially many resources \cite{sampleoptimal_Haah}.}

\begin{table}[t]
\begin{tabular}{|p{0.5cm}|p{3.3cm}|p{0.8cm}|p{0.8cm}| }
\hline
$n$ & Number of Iterations  & $T$  & $N$   \\ \hline
\hline
1   & $2.57 \pm 0.70$      & 5  & 50  \\ \hline
2   & $ 6.9 \pm 2.14$  & 5  & 50  \\ \hline
3  & $ 12.4 \pm 3.50 $      & 5  & 50  \\ \hline
4   &  $ 29.79 \pm 10.18$      & 10 & 100 \\ \hline
5   &  $49.95 \pm  26.59$  & 20 & 200 \\ \hline
6   &   $89.78  \pm 55.99$ &  30 & 300  \\ \hline
\end{tabular}
\caption{\label{tab:haar} {\textbf{HQuGAN experiments for learning Haar random states.} The table shows both the mean and standard deviation values for the number of iterations needed for the HQuGAN algorithm (using GRAPE for both players) to reach convergence across $50$ Haar random states. The HQuGAN successfully produces all states with a fidelity of 0.999 or higher.}}
\end{table}

\subsection{Krotov's Method}\label{main_krotov}
As discussed earlier, Krotov's method is another popular gradient-based QOC technique. In contrast to GRAPE, Krotov's method mathematically guarantees that the control pulse sequences remain time-continuous throughout the optimization process \cite{Krotov_apd}. We conduct the same tasks of generating $50$ entangled $3$-qubit states in Eq.\eqref{state1} and the generalized GHZ states using the HQuGAN with Krotov's method. The HQuGAN generates all instances well as presented in Fig.\ref{fig:3qubit_comparison_krotov} and Table \ref{tab:table2}. We further produce up to the $9$-qubit GHZ state using the optimal discriminator, and also using experimentally realizable parameters. More details on the descriptions of Krotov's method and the numerical results are summarized in Appendix \ref{krotov_appendix_num1}.

\begin{figure}[t]
    \centering
    \includegraphics[width=8.6cm]{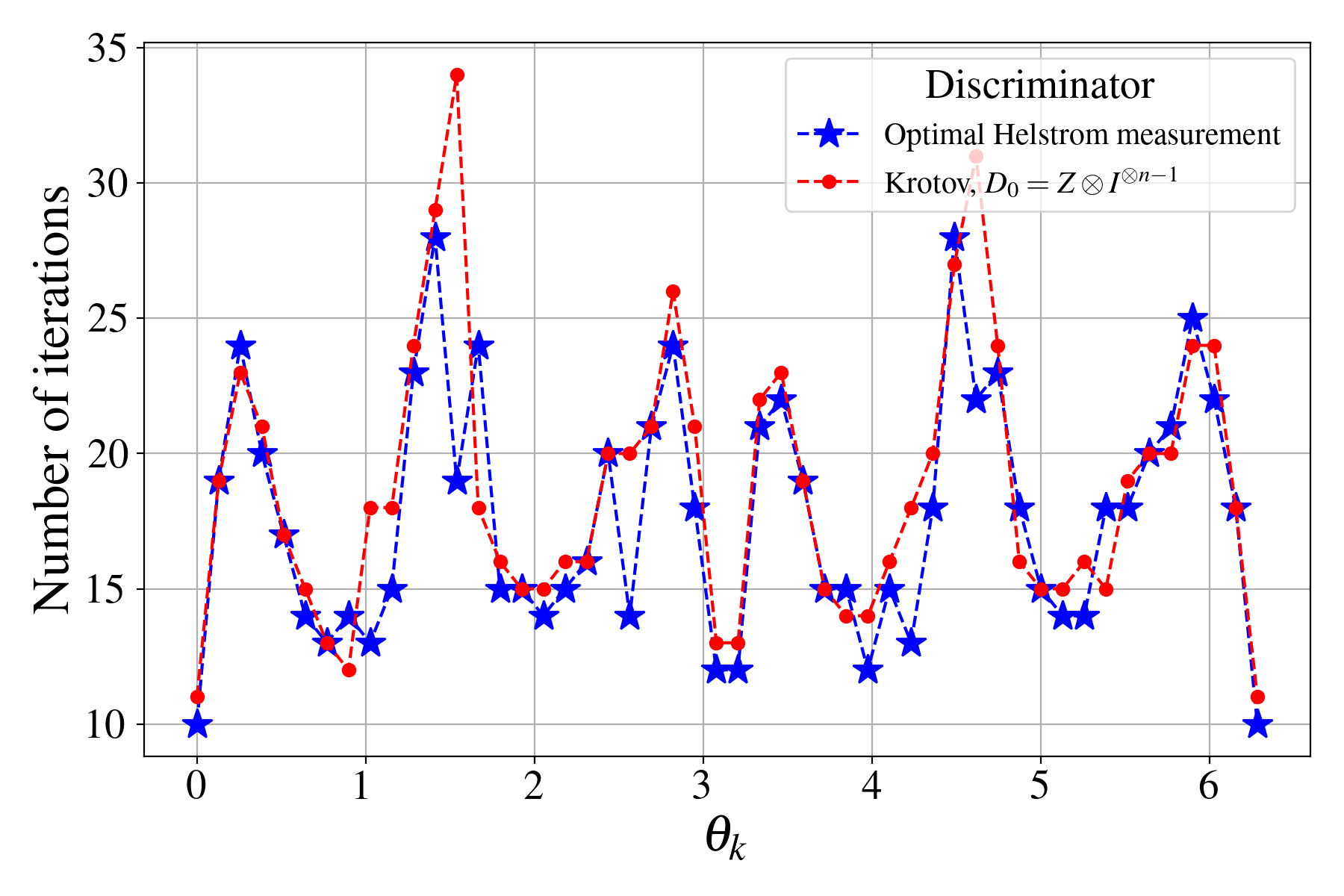}
    \caption{\textbf{HQuGANs for learning $\boldsymbol{3}$-qubit states using Krotov's method.} The number of iterations required to generate $50$ different $3$ qubit states described in Eq.\eqref{state1} for two different discriminators: a discriminator that maximizes $\Tr(D(\rho - \sigma))$ using Krotov's method (blue) and the  optimal discriminator (red). The two cases give similar behaviors.}
    \label{fig:3qubit_comparison_krotov}
\end{figure}

\begin{table}[t!]
\begin{tabular}{|p{0.5cm}|p{2.3cm}|p{2.3cm}|p{0.8cm}|p{0.8cm}| }
\hline
$n$ & Iter (Krotov) & Iter (Helstrom) & $T$  & $N$   \\ \hline
\hline
1   & 3                   & 3                     & 5  & 50  \\ \hline
2   & 6                   & 8                     & 5  & 50  \\ \hline
3  & 15                   & 13                    & 5  & 50  \\ \hline
4   &   36                & 25                    & 10 & 100 \\ \hline
5   &     62              & 55                    & 20 & 200 \\ \hline
6   &  245      &    361      &  30 & 300  \\ \hline
\end{tabular}
\caption{\label{tab:table2} \textbf{HQuGANs for learning generalized $n$-qubit GHZ states using Krotov's method.} The number of iterations required by the HQuGAN to learn the generalized $n$-qubit GHZ states using Krotov's method. We compare the case of a discriminator using Krotov's method to the case of the optimal discriminator. The two cases require a comparable number of iterations for all instances.}
\end{table}

\begin{figure*}[t!]
    \centering
    \subfloat{{\includegraphics[width=8.9cm, height=3.6cm]{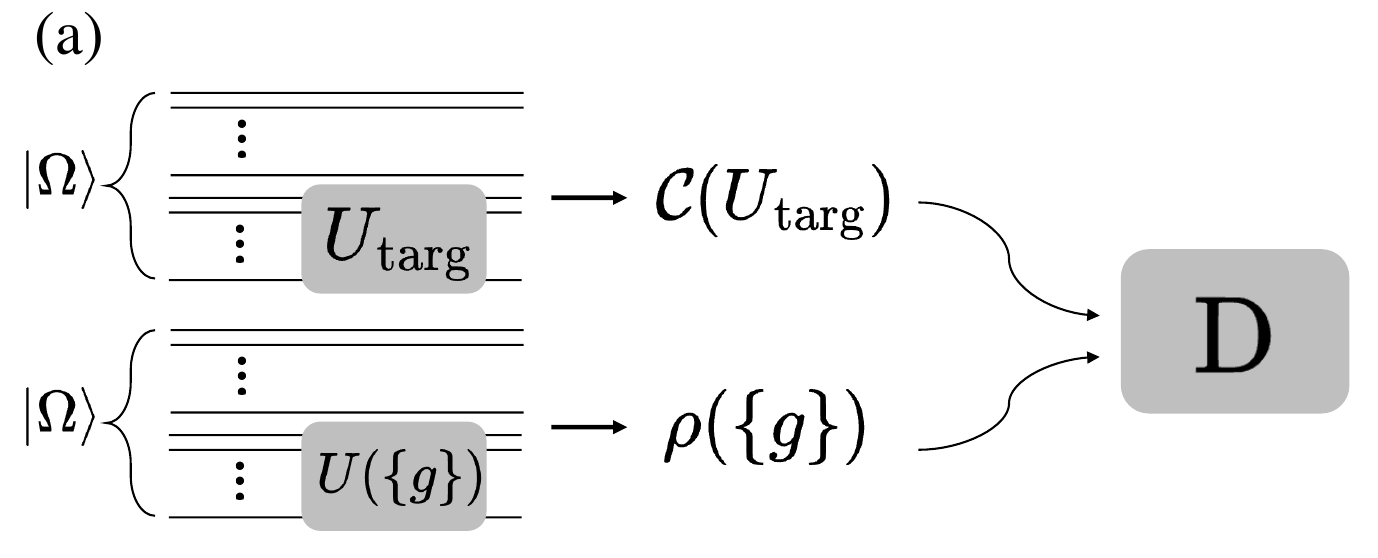} }}%
    \qquad
    \subfloat{{\includegraphics[width=8.1cm]{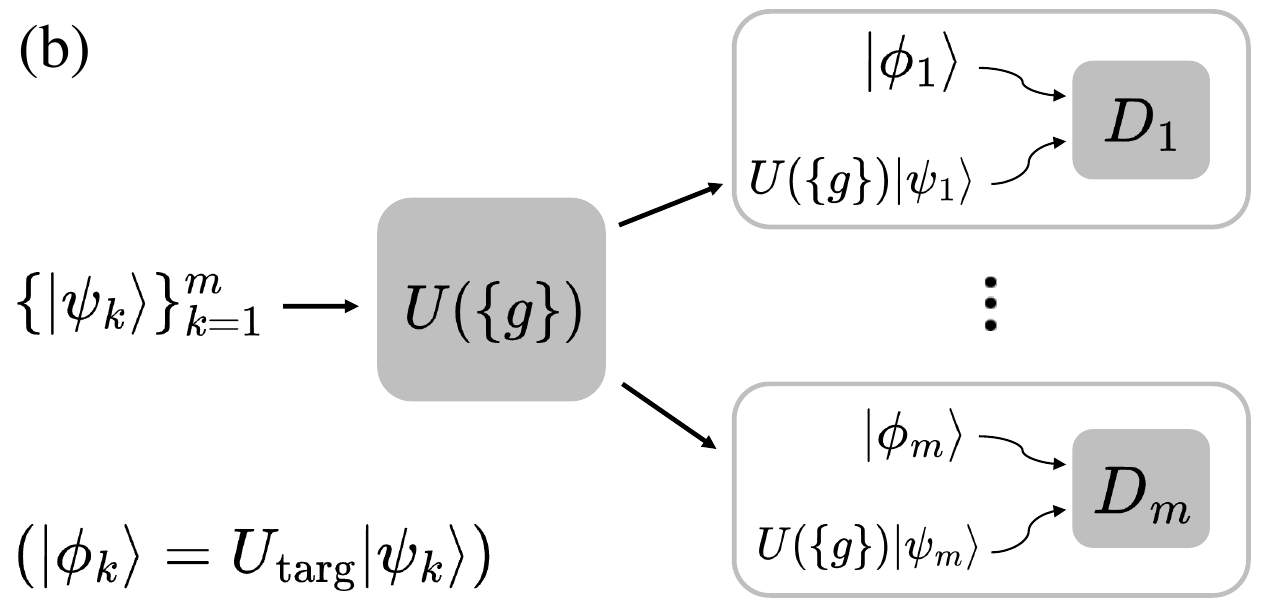} }}%
    \caption{{\textbf{Two different settings for learning an unknown unitary transformation using HQuGANs.} (a) \textbf{Learning using a Choi matrix.} A target unitary transformation $U_{\text{targ}}$ can be represented by a quantum state (or a Choi matrix) $\mathcal{C}(U_{\text{targ}})$ as described in Eq.\eqref{choistate}. Hence, the HQuGAN sets the Choi matrix as a target state and plays the minimax game described in Eq.\eqref{choi_cf}. The HQuGAN terminates if the fidelity between the generator's state $\rho(\{g\})$ and the Choi matrix exceeds $0.999$, which ensures that the two unitary operations are close up to a global phase, as explained in Eq.\eqref{gatefid}. (b) \textbf{Learning using input-output pairs.}  We are now given $m$ input-output pairs of quantum states for the target unitary $U_{\text{targ}}$: $\{ \ket{\psi_k}, \ket{\phi_k} \}$ where $\ket{\phi_k} = U_{\text{targ}}\ket{\psi_k}$. The generator aims to find a unitary $U(\{g\})$ that maps the $m$ input states to their corresponding output states, while $m$ discriminators (denoted as $D_1, \dotsc, D_m$ in the figure) each seek to discriminate between the corresponding pairs of quantum states. Therefore, the HQuGAN optimizes a cost function described in Eq.\eqref{pair_cf}, which is a linear combination of $m$ cost functions associated with each input-output pair. If the fidelity between the generator's state and the output state for every exceeds $0.999$, the HQuGAN terminates. This guarantees that the minimum fidelity between the two operations also exceeds $0.999$.}}
    \label{fig:unitary_learning}
\end{figure*}

\subsection{Bandwidth Limitation}

HQuGANs are capable of producing low-bandwidth control fields by introducing the penalty term of Eq.\eqref{penalty1} into the generator's cost function. In Appendix \ref{bandwidth_numerics}, we show that HQuGANs with the penalty terms lead to low-bandwidth optimal control fields, and demonstrate that increasing the evolution time $T$ allows lower bandwidth of the control fields to accomplish the same learning task, numerically verifying the time-bandwidth quantum speed limit in Eq.\eqref{tblimit}. These results provide concrete tools to estimate the required time for HQuGANs to learn a quantum state using bandwidth-limited control fields,  which shape HQuGANs into more experimental-friendly algorithms for current devices.

\subsection{Learning Unitary Transformation}
{In this section, we extend the HQuGAN (with GRAPE) to learn an unknown unitary transformation. This problem has been widely studied across a range of settings and techniques \cite{optunitlearning,emulator,quantumassisted,unitlearning_spec,varational_unitlearning, huang_process}. We focus on learning a desired unitary operation in two different settings: given a Choi matrix and then pairs of input-output quantum states for the unitary. Both settings are illustrated in Fig.\ref{fig:unitary_learning}.}
\subsubsection{Learning using a given Choi matrix}
First, we aim to generate an arbitrary unitary operation $U_{\text{targ}}$ given the Choi matrix for the operation,
\begin{equation}\label{choistate}
    \mathcal{C}(U_{\text{targ}}) = (I \otimes U_{\text{targ}})\ket{\Omega}\bra{\Omega}(I\otimes U_{\text{targ}}^\dagger),
\end{equation} where $\ket{\Omega} = \frac{1}{\sqrt{d}}\sum_{i=0}^{d-1} \ket{ii}$ is a maximally entangled state. Hence, the HQuGAN sets $\mathcal{C}(U_{\text{targ}})$ as a target state, i.e. 
\begin{equation}\label{choi_cf}
    \min_{\{ g \}}\max_{\{ d \}} \lvert\Tr(D(\rho(\{ g \}) - \mathcal{C}(U_{\text{targ}})))\rvert^2,
\end{equation} where $\rho(\{ g \}) = (I \otimes U(\{ g \})\ket{\Omega}\bra{\Omega}(I \otimes U^\dagger(\{ g \}))$ is generated by a unitary operator $U(\{ g \})$ that the generator creates. The scheme is illustrated in Fig.\ref{fig:unitary_learning} (a). {Note that the fidelity between the Choi matrix $\mathcal{C}(U_{\text{targ}})$ and the generator's state $\rho(\{ g\})$ is
\begin{align}
F(\mathcal{C}(U_{\text{targ})}, \rho(\{ g\}) &= 
\lvert\bra{\Omega} (I \otimes U^\dagger) (I\otimes U_{\text{targ}}) \ket{\Omega} \rvert^2 \nonumber \\
&= \frac{1}{d^2}\Big\lvert \sum_{i,j} \braket{i|j}\bra{i}U^\dagger U_{\text{targ}}\ket{j}\Big\rvert^2 \nonumber \\
&= \frac{1}{d^2} \Big\lvert \sum_i \bra{i}U^\dagger U_{\text{targ}}\ket{i} \Big\rvert^2 \nonumber \\
&= \frac{1}{d^2}\lvert \Tr(U^\dagger U_{\text{targ}})\rvert^2,\label{gatefid}
\end{align} which is $1$ if and only if $U$ and $U_{\text{targ}}$ differ only by a global phase, i.e. $U = e^{i\phi}U_{\text{targ}}$. Therefore, as the generator learns to generate the Choi matrix, it also learns the target unitary $U_{\text{targ}}$ up a global phase. It is worth noting that a similar approach has been explored in the quantum-assisted quantum compiling algorithm \cite{quantumassisted}, which utilizes a hybrid quantum-classical variational technique to maximize the Hilbert-Schmidt inner product between $U$ and $U_{\text{targ}}$, Eq.\eqref{gatefid}. The algorithm consists of three main steps: firstly, it prepares the maximally entangled state on $2n$-qubits starting from $\ket{0}^{\otimes 2n}$; secondly, it performs both $U$ and $U^T_{\text{targ}}$ in parallel; and finally, it measures the state in the Bell basis, where the probability of measuring $\ket{0}^{\otimes 2n}$ corresponds exactly to Eq.\eqref{gatefid}. While our approach may appear similar to this algorithm, it is inherently distinct as we incorporate a minimax game.}

{We use the HQuGAN with the cost function of Eq.\eqref{choi_cf} to generate various unitary operations. We employ the GRAPE algorithm for both players while maintaining the same setups as described in Section \ref{setups}. The HQuGAN terminates when the fidelity between the Choi matrix and the generator's state exceeds $0.999$. We first focus on generating simple $1$-qubit gates ($X, H, I,$ and $T$) and $2$-qubits gates (CNOT, SWAP, and CZ). The number of iterations required by the HQuGAN to successfully generate each target unitary operation is illustrated in Fig.\ref{fig:choi_result}(a). We find that the HQuGAN can generate all unitary operations within $12$ iterations. Next, we attempt a more challenging task of generating $50$ Haar random unitary operations. The optimal Helstrom measurement operator is used for the discriminator. As shown in Fig. \ref{fig:choi_result}(b), HQuGANs successfully generate all Haar random unitary operation up to a gate fidelity of $0.999$.}

\begin{figure}[t!]
    \centering
    \includegraphics[width=8.6cm]{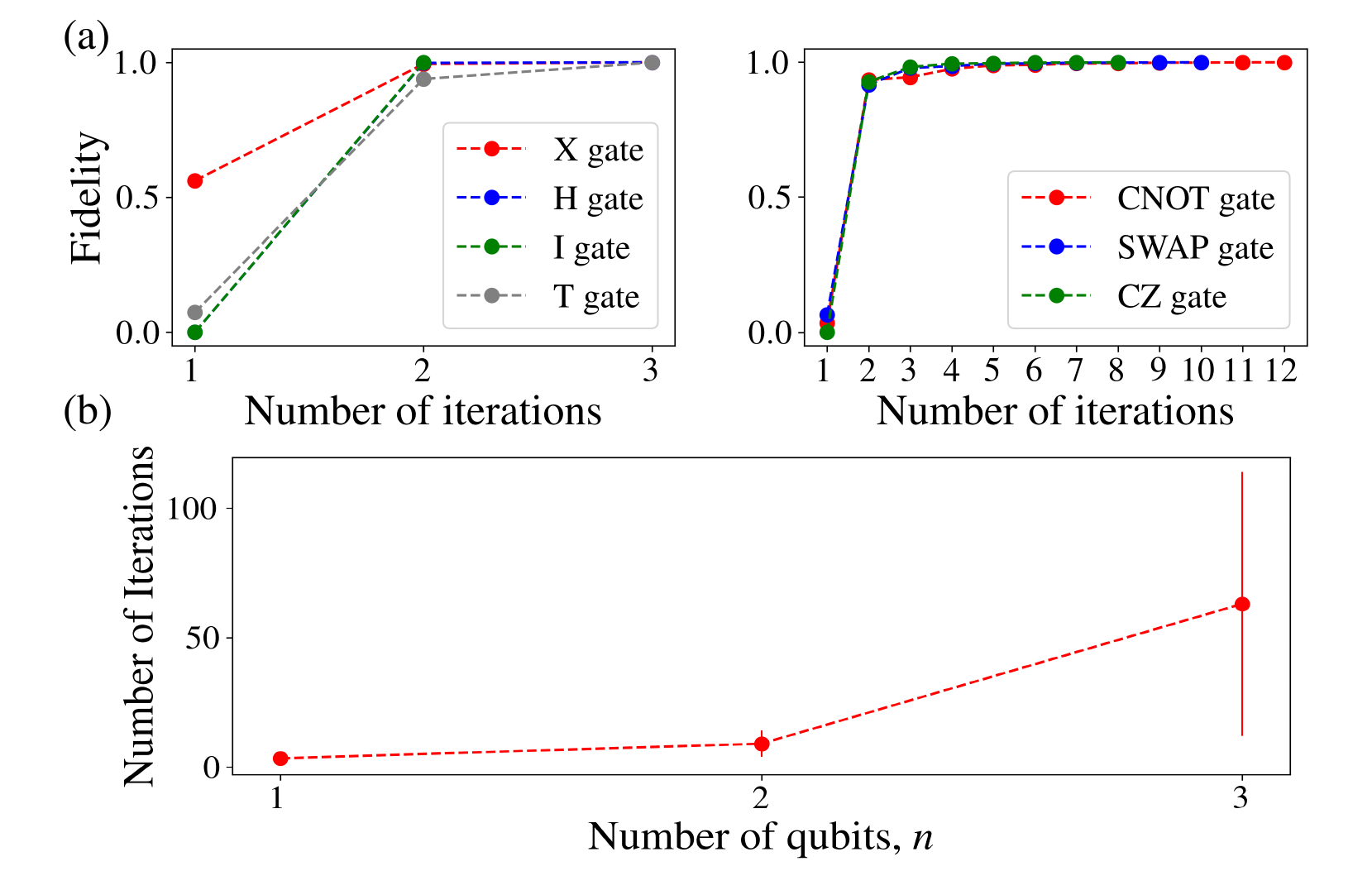}
    \caption{{\textbf{Learning various unknown unitary operations using Choi matrices.} (a) The changes in the fidelity of Eq.\eqref{gatefid} during the minimax game described in Eq.\eqref{choi_cf} are presented. Within $12$ iterations, the HQuGAN generate all unitary operations properly. (b) the mean and standard deviation of the number of iterations taken by the HQuGAN to generate $50$ Haar random unitary operations are presented. HQuGANs can generate every Haar random unitary operations up to 3-qubits.}}
    \label{fig:choi_result}
\end{figure}

\subsubsection{Learning using pairs of input-output quantum states}
{Next, we are given $m$ input-ouput pairs of quantum states for the target unitary $U_{\text{targ}}$: $\{\ket{\psi_k}, \ket{\phi_k}\}_{k=1}^m$ where $\ket{\phi_k} = U_{\text{targ}} \ket{\psi_k}$. Given such pairs, the HQuGAN now optimizes the following cost function,
\begin{equation}\label{pair_cf}
    \min_{\{ g \}}\max_{\{ d \}} \sum_{k=1}^m \lvert \Tr(D_k(\{ d\} )(U(\{ g \}) \rho_k U^\dagger(\{ g \}) - \sigma_k)) \rvert^2,
\end{equation} where $\rho_k = \ket{\psi_k}\bra{\psi_k}$ and $\sigma_k = \ket{\phi_k}\bra{\phi_k}$. Hence, the cost function above is a linear combination of $m$ cost functions associated with each pair of $(\rho_k,\sigma_k)$. The generator tries to find a unitary that maps the $m$ input states to the final states respectively, and $m$ discriminators find each $D_k$ that separates the corresponding pair of quantum states. The scheme is illustrated in Fig.\ref{fig:unitary_learning}(b). The HQuGAN terminates when the fidelity between every pair of the generator's state and the output state exceeds $0.999$, i.e. 
\begin{equation}\label{pair_termination_cond}
    F_k = F(U(\{g\})\ket{\psi_k}, \ket{\phi_k}) > 0.999 \indent \forall k \in [m].
\end{equation} This will guarantee that the \textit{minimum gate fidelity} \cite{min_gatefid_og} between the generator's unitary operation $U(\{g\})$ and the target unitary operation $U_{\text{targ}}$,
\begin{equation}
    \min_{\ket{\psi}} F(U(\{g\})\ket{\psi}, U_{\text{targ}}\ket{\psi}),
\end{equation} is also greater than $0.999$.}

\begin{figure}[t!]
    \centering
    \includegraphics[width=8.6cm]{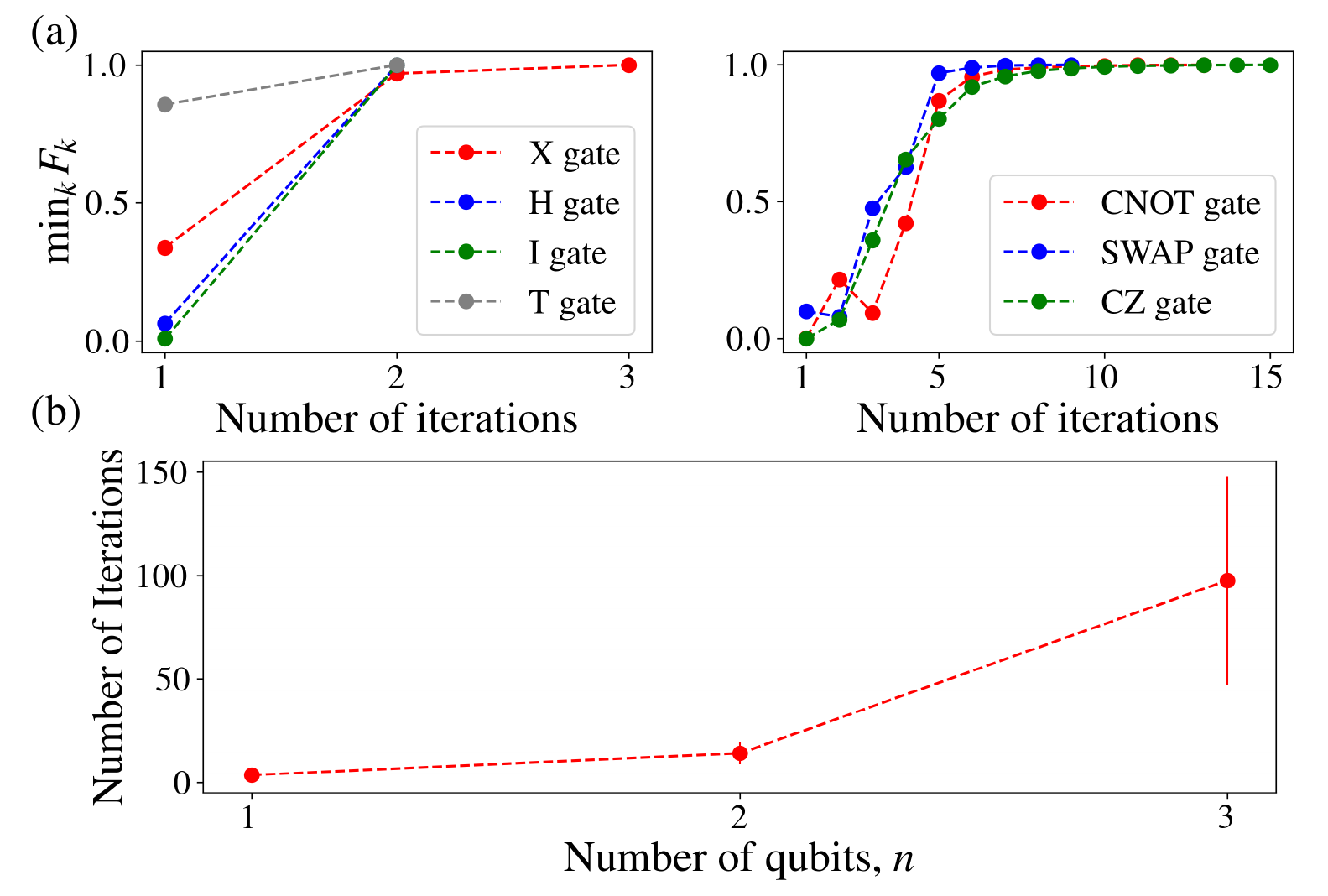}
    \caption{{\textbf{Learning various unknown unitary operations using input-output pairs of quantum states.} (a) The changes in the minimum fidelity between each pair of the generator's state and the output state, considered over all possible pairs (i.e., $\min_k F_k$, where $F_k$ is defined in Eq.\eqref{pair_termination_cond}) during the minimax game described in Eq.\eqref{pair_cf} are presented. Within $15$ iterations, the HQuGAN generate all unitary operations properly. (b) the mean and standard deviation of the number of iterations taken by the HQuGAN to generate $50$ Haar random unitary operations are presented. The HQuGAN successfully generates every target unitary operation up to $3$-qubits.}}
    \label{fig:pair_result}
\end{figure}

 \begin{figure*}[t!]
    \centering
    \includegraphics[width=18cm]{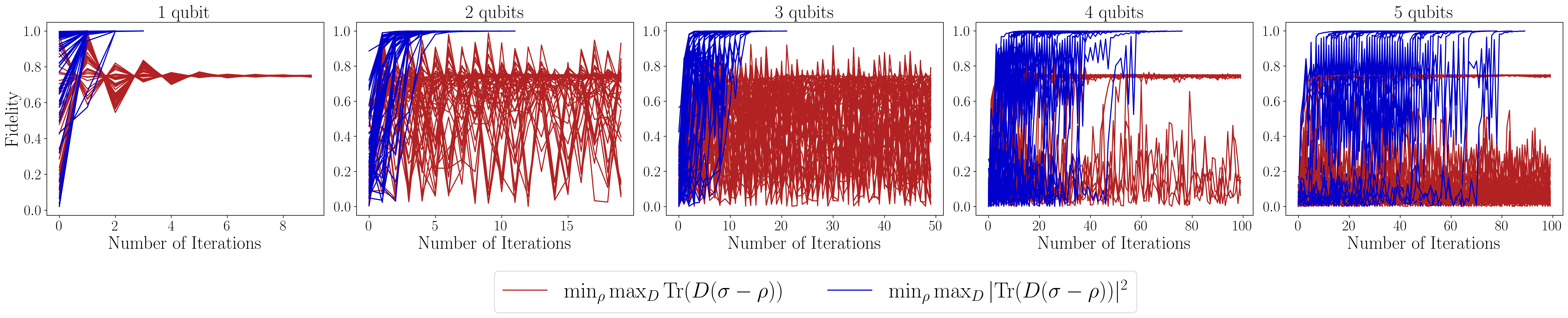}
    \caption{{\textbf{A comparison between two different cost functions of Eq.\eqref{QGAN2mode} and Eq.\eqref{QGAN_2} for learning Haar random states.} We compare the performance of HQuGANs using GRAPE (for both players) with two different cost functions to generate $50$ Haar random quantum states up to $5$ qubits. By using the modified cost function (blue lines), each state can be generated with a fidelity of $0.999$, whereas the original cost function (red lines) fails to produce any state correctly. Moreover, we notice that using the original cost function eventually falls into mode collapse, resulting in the generator oscillating between two quantum states indefinitely.}}
    \label{fig:modecollapse_haar}
    \vspace{0ex}
 \end{figure*} 

{We use the HQuGAN with the cost function of Eq.\eqref{pair_cf} to learn various unitary operations. we have kept all setups unchanged from Section \ref{setups}, and try to learn the same set of unitary operations as those in the previous section: $1$ and $2$-qubits gates, as well as $50$ Haar random unitary operations up to $3$-qubits. To determine a target unitary up to a global phase, we begin by preparing input-output pairs that can uniquely identify it. The input states are chosen as $\{\ket{0}, \dotsc, \ket{2^n-1}, \frac{\ket{0} + \ket{1}}{\sqrt{2}}, \dotsc, \frac{\ket{2^n-2} + \ket{2^n-1}}{\sqrt{2}}\}$, where $\ket{k}$ is defined as the binary representation of integer $k$, with $\ket{b_i}$ representing a computational basis of qubit $i$. The first $2^n$ input states, $\{\ket{0}, \dotsc, \ket{2^n-1}\}$, provide elements to every row of the target unitary matrix up to a phase, while the the rest, $\{\frac{\ket{0} + \ket{1}}{\sqrt{2}}, \dotsc, \frac{\ket{2^n-2} + \ket{2^n-1}}{\sqrt{2}}\}$, remove the relative phases between each row of the matrix. Therefore, by using these input states and their corresponding output states, we can uniquely identify the target unitary operation up to a global phase.}

{Fig.\ref{fig:pair_result}(a) shows how the minimum fidelity between the generator's state and the output state for every such pair (i.e., $\min_k F_k$, where $F_k$ is defined in Eq.\eqref{pair_termination_cond}) changes during the minimax game. Clearly, the HQuGAN is able to generate every gate within $15$ iterations. Moreover, Fig.\ref{fig:pair_result}(b) indicates that the HQuGAN can successfully produce every Haar random unitary operation up to $3$ qubits. We observe that the mean number of iterations increases exponentially, similar to the previous scenario.}

{We remark that there exists a trade-off between the number of qubits and the number of distinct discriminators in the two different settings. The first setting requires a $2n$-qubit system to learn an $n$-qubit unitary operation, as well as the ability to prepare a maximally entangled state every time the generator's state and the Choi matrix are prepared. The second setting does not require any additional qubits, but it demands an exponentially large number of input-output pairs of quantum states, implying that exponentially many distinct discriminators are required.}

\section{Cost function}\label{remarks}
In this section, we provide additional numerical simulations to understand the role of cost functions in the convergence of HQuGANs.
\subsection{Mode Collapse}
Recently, it has been observed that the loss in the minmax game  
\begin{equation}
\label{QGAN2mode}
    \min_{\theta_g}\max_{D} \text{Tr}(D(\sigma - \rho(\theta_g) ))
\end{equation}
can oscillate between a few values and thus the game may never converge to the desired Nash equilibrium point, a phenomenon called \textit{mode collapse} \cite{QGAN_Ent}. 
In the case of Eq. \eqref{QGAN2mode} the fundamental reason for the mode collapse can be understood from the form of the cost function, where the generator's optimization is independent of the target state $\sigma$. 
When the generator minimizes $-\Tr(D\rho(\theta_g))$ or equivalently maximizes $\Tr(D\rho(\theta_g))$, independent of $\sigma$, there is a possibility of overshooting by selecting a generator $\rho(\theta_g)$ that aligns with $D$ \cite{QGAN_Ent}. If $D$ is chosen to be a previous generator's state, then the generator's minimization will output 
the same quantum state, falling into a loop, which prevents the game from converging. This is in agreement with the results in classical machine learning, where the generators of classical GANs tend to characterize only a few modes of the true distribution, but can miss other important modes \cite{classicalmodecollapse}. In fact, we observe that mode collapse occurs in almost all instances of the HQuGAN simulations using the cost function in Eq.\eqref{QGAN2mode}. 

To address this issue, we also use the following cost function, 
\begin{equation}
\label{QGAN_2}
    \min_{\theta_g}\max_{D}\lvert \text{Tr}(D(\sigma - \rho(\theta_g)))\rvert^2.
\end{equation}
As was discussed in Section \ref{HQuGAN}, this cost function still guarantees the existence of Nash equilibrium at the same location as before, i.e., at $\rho = \sigma$. In addition, this choice of cost function guarantees that the generator's quantum state minimizing the cost function is underdetermined. Therefore there are typically infinitely many states reaching the maximum of the cost function, which makes the mode collapse measure zero. In  Appendix \ref{modecollapse} we provide a detailed explanation for a one-qubit example in addition to numerical experiments.

{We have numerically verified that when using the original cost function shown in Eq.\eqref{QGAN2mode}, the global Nash equilibrium cannot be reached for any instance of our numerical experiments. However, when we use the modified cost function presented in Eq.\eqref{QGAN_2}, the equilibrium point is always properly reached. This result is summarized in Fig.\ref{fig:modecollapse_haar}, where we present the change in fidelity between the generator's state and the corresponding target state as the HQuGAN proceeds the minimax game. Here, we aim to generate $50$ Haar random qubits states up to $5$-qubits and use the GRAPE algorithm for both the generator and discriminator (the system set-up remains unchanged compared to the Section \ref{numerical_results}). The blue lines indicate the results produced by using the modified cost function, which successfully generates each target state with a fidelity of $0.999$. However, the original cost function represented by the red lines fails to generate any states within the desired fidelity. When considering the $1$-qubit result, all instances fall into mode collapse within the first $10$ iterations, as shown in the graph on the far left. In other cases, the fidelity fluctuates and fails to converge to the desired value. It is evident that for $4$ and $5$-qubit cases, it is uncommon to achieve even a relatively high fidelity when using the original cost function. Moreover, even if it does achieve relatively high fidelity, it eventually falls into mode collapse, resulting in the generator repeatedly producing only two quantum states. These states have an overlap of approximately $0.75$ with the target state, as indicated in the figure. Therefore, we have used the modified cost function for all numerical experiments presented in this work.}

\subsection{Quantum Wasserstein Distance of Order 1}
As discussed earlier, the dual form of the quantum $W_1$ distance makes it possible to express the learning task in terms of the minimax game described in Eq.\eqref{quantumwdistance}. It has been shown that such quantum Wasserstein GAN (qWGAN) exhibits more favorable loss landscapes compared to other conventional metrics such as fidelity in learning specific quantum states \cite{emdistance}. An intuition behind such advantage lies in the fact that while common (unitary invariant) metrics such as fidelity capture only the global properties of quantum states (which can cause barren plateaus \cite{bp1}), the quantum $W_1$ distance is sensitive to local operations. In fact, the cost function described in Eq.\eqref{QGAN_abs} with $\Vert D \Vert_\infty \leq 1$ is precisely the trace distance squared, which is unitary invariant. Hence, we expect that the quantum $W_1$ distance can give a faster convergence rate compared to Eq.\eqref{QGAN_abs}, similar to what was observed in \cite{emdistance}. 

We thus explore the performance of HQuGANs using the quantum $W_1$ distance to learn up to the $6$-qubit GHZ state. Fig.~\ref{fig:Wdistancegraph} compares the number of iterations of HQuGANs using the Lipschitz discriminator that calculates the quantum $W_1$ distance (blue lines) to the optimal discriminator (red lines) and the quantum optimal control discriminators (green lines) that exploit both GRAPE and Krotov's method. For generating $5$ and $6$-qubit GHZ states, we observe that the HQuGAN using the quantum $W_1$ distance converges faster. 

 \begin{figure}[t!]
    \centering
    \includegraphics[width=8.6cm]{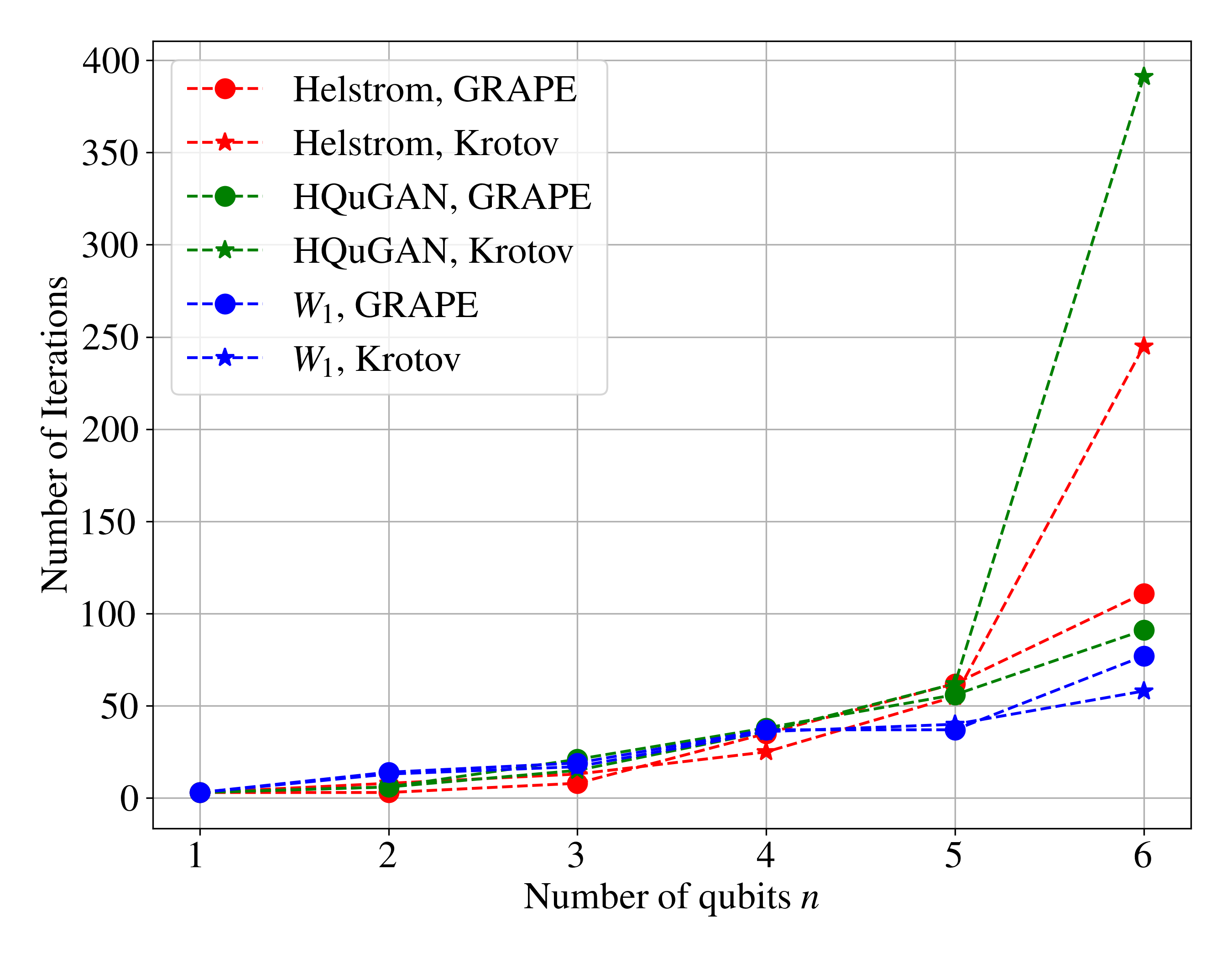}
    \caption{\textbf{HQuGANs for learning $\boldsymbol{n}$-qubit GHZ states using the quantum $\boldsymbol{W_1}$ distance.} Comparing the number of iterations required by the HQuGAN to learn generalized GHZ states using the quantum $W_1$ distance discriminator to the previously discussed discriminators. The utilization of the quantum $W_1$ distance leads to a smaller number of iterations for learning $n=5$ and $6$-qubit GHZ states compared to other cases.}
    \label{fig:Wdistancegraph}
\end{figure}

\subsection{Hybrid Cost Functions}\label{suggestive_gan}
Depending on the nature of the learning task, using multiple cost functions can be more advantageous than using one cost function. Here we discuss one such scenario.

Although the cost function Eq.\eqref{QGAN2mode} can lead to mode collapse in the long run, in Appendix \ref{apa} we show analytically that after using it only for the first $2$ iterations, the generator generates a state that is relatively close to the target state. To avoid the mode collapse we can then switch the cost function to Eq.\eqref{QGAN_abs}, which robustly improves the fidelity to the desired value. Our numerical experiments show that using such a combined method can generate up to the $8$-qubit GHZ state using extremely smaller numbers of iterations compared to previous results in Table \ref{tab:table1}. For example, while using the single cost function of Eq.\eqref{cost1} takes $\sim 120$ iterations to generate the $6$-qubit GHZ state as illustrated in Table \ref{tab:table1}, the combined method takes only $4$ iterations to generate the $8$-qubit GHZ state. {Furthermore, while generating $6$-qubit Haar random states on average requires $\sim90$ iterations, as demonstrated in Table \ref{tab:haar}, the hybrid approach requires an average of only $\sim7$ iterations to generate (up to) $8$-qubit Haar random quantum states.} We discuss more details on the analytical descriptions and numerical results in Appendix \ref{apa}.

\section{Implementation on a quantum computer}\label{QA}
To perform gradient-based quantum optimal control techniques such as GRAPE, in addition to estimating the value of the cost function, we need to estimate the gradients of the pulses.  Calculating the gradients can quickly become intractable as the system size grows due to \textit{the curse of dimensionality}. To remedy this bottleneck, one can use quantum computers to directly estimate not only the cost function but also the gradients of control pulses required \cite{QCgrape}. 
A similar method called \textit{parameter-shift rules} \cite{qcl, psr2} has been widely used in the circuit model variational quantum algorithms to evaluate the gradients of cost functions.
Following the same approach, in this section, we show how one can directly incorporate the GRAPE algorithm into the implementation of the HQuGAN and then analyze the computational costs of the quantum algorithm, such as sample complexity and other classical/quantum resources.

\subsection{Estimation of Gradients}
The gradients for the generator's cost function in Eq.\eqref{QGAN_abs}, to the first order of $\Delta t$, is \cite{QCgrape}

\begin{align}
    & \frac{\partial}{\partial \epsilon_i(t_j)} \lvert \Tr(D(\sigma - (\sigma - \rho(\{g\}))) \rvert^2  \nonumber \\ &= 2\Delta t \Big(\Tr(D(\rho(\{g\}) -\sigma))\Big) \Big[\Tr(D\rho_{i_+}^{kj}) - \Tr(D\rho_{i_-}^{kj})\Big],\label{grad_gen}
\end{align}
 where $\rho_{i_{\pm}}^{kj} = U(t_N)\cdots U(t_{j+1})R_\alpha^k(\pm \pi/2)U(t_j)\cdots U(t_1) \rho_0$ $(U(t_N)\cdots U(t_{j+1}) R_\alpha^k(\pm \pi/2)U(t_j)\cdots U(t_1))^\dagger$ and $R_\alpha$ is a (single qubit) rotation around $\alpha$ axis,  corresponding  to the Hamiltonian term in front of $\epsilon_i$. Hence, the gradient at time $t_j$ can be calculated by estimating expectation values of operator $D$ with respect to two quantum states $\rho_{i_+}^{kj}$ and $\rho_{i_-}^{kj}$. These quantum states can be prepared by implementing $3$ unitary transformations on quantum annealer: 1) $U(t_N, t_{j+1})$, 2) $R_\alpha^k(\pm \pi/2)$, and 3) $U(t_j, t_1)$, where $U(t_m, t_n)$ is a unitary from time $t_n $ to $t_m$. Likewise, calculating cost functions only require to use of the annealer once. Note that penalty terms ($J_P$) can be efficiently calculated via classical computers.

\subsection{Complexity Analysis}\label{complexity_analysis}
HQuGANs find parameters of a given time-dependent Hamiltonian to generate an unknown quantum state. The closest, but not necessarily directly comparable, approach to accomplish the same task is to simply perform quantum state tomography (QST) on the unknown state to obtain the full classical descriptions of the state and then perform an optimal control method to find the parameters of the Hamiltonian that generates the state. Each step of this approach requires exponential, in the number of qubits, resources to learn general quantum states. 
The proposed HQuGAN framework provides an alternative method to generate the unknown quantum state directly, without using the classical description of the state. To examine this more rigorously, we analyze the computational cost of the HQuGAN using GRAPE.


It is important to note that both the sample complexity and classical post-processing time for the HQuGAN are proportional to the total number of iterations of the algorithm which is unknown in general even for  classical GANs. (While the computational complexity of solving approximate local solutions in GANs has been studied \cite{minmax_comp}, a precise bound for global Nash equilibria is not known.) This sets a barrier to comparing the complexity of the HQuGAN to other existing algorithms such as QST in adaptive measurement settings \cite{qst_adap1, QST_incoher}. 
Characterizing the set of quantum states that can provably be learned more efficiently using the direct approach of the HQuGAN framework compared to the QST approach is an open question. 

\subsubsection{Sample Complexity}
In terms of sample complexity, recall that we need to estimate four distinct expectation terms in calculating Eq.\eqref{grad_gen}, where each estimation takes $O(\Vert D \Vert^2 / \epsilon^2)$ copies of $\sigma, \rho, \rho_{i+}^{kj}$ or $\rho_{i-}^{kj}$ with precision $\epsilon$. Similarly, the discriminator takes $O(\Vert D \Vert^2 / \epsilon^2)$ copies of all four states. In our setting, a single optimization step of the generator then requires $O(N/\epsilon^2)$ copies of all four states in order to estimate the gradients for all the $N$ time grids/Trotter number $(\because \Vert D \Vert^2 = 1)$.  For Trotter error $\delta$ and a fixed evolution time $T$, we need the Trotter number $N = O(nT^2/\delta)$ in the first-order product formula \cite{trotter1}. Although a rigorous (global) convergence rate for the optimization required for the optimal control of the generator and discriminator is unknown in general, a favorable scaling is expected for generic problems with a high number of control parameters \cite{Chakrabarti_2007, Brif_2010}.


\subsubsection{Classical Storage}
The maximum storage amount that the HQuGAN requires is proportional to the size of the gradient vector (or Hessian matrix if we utilize quasi-Newton optimization methods) of the cost function at each time for all the control pulses. Since the HQuGAN can forget about past gradient values, the required classical storage is $O(\text{poly}(n))$, independent of the number of iterations. This shows a substantial improvement compared to the classical storage required by Quantum State Tomography (QST) or self-guided QST \cite{self_guided_qst}, which is (at least) exponential in $n$. {Hence, the HQuGAN could prove valuable when we need to generate an unknown quantum state without the need to store an extensive classical memory for its classical representation.}

\section{Conclusion}
We have introduced a new framework to learn arbitrary (unknown) quantum states using two competing optimal control techniques. This framework leverages techniques from QuGANs and quantum optimal controls (QOC), leading to new insights and methodology for learning unknown quantum states under time-continuous dynamics. Applying optimal control directly to the time-dependent Hamiltonian explores a larger set of unitaries than the gate-based approach and is applicable to a wide variety of quantum information processing platforms such as superconducting processors, ion-trap quantum computers, and diabatic quantum annealers. We demonstrated the capabilities of the proposed framework by performing numerical experiments to generate various many-body quantum states using the two popular gradient-based QOC methods, GRAPE and Krotov's method, under experimentally realistic constraints on pulse amplitudes and bandwidths. Also, we extended the HQuGAN to learn quantum processes.

We discussed the role of the cost function in reaching the equilibrium point, by avoiding mode collapse and also the convergence rate.  We provided numerical experiments that show that the quantum $W_1$ distance gives faster convergence of the minimax game when generating generalized GHZ states for higher system sizes. Moreover, we observe that exploiting multiple forms of cost functions properly could give a large advantage in terms of the rate of convergence of the algorithm. Since computations for QOC methods become intractable as system size increases, we remark that estimating the required cost functions and their gradients can be directly incorporated into the HQuGAN framework.

A promising direction to extend this work would be considering the effect of noise and control errors in preparing pure or more generally mixed quantum states.
{Also, it is natural to extend our unitary learning procedures to learn quantum channels.} Given the iterative nature of the algorithm, we expect the framework to provide a natural robustness to noise, similar to what has been observed in variational quantum circuits \cite{VQA_review}. From an algorithmic perspective, finding rigorous bounds on the total iterations of the minimax game is a major open problem.

\section{acknowledgement}
This work has been supported by DARPA’s Reversible Quantum Machine Learning and Simulation (RQMLS) program. MM is supported by the NSF Grant No. CCF-1954960. The authors would like to thank Nicholas Ezzell, Bobak Kiani, and Daniel Lidar for helpful discussions, {Benjamin Anker for assisting with the codes, and the UNM Center for Advanced Research Computing, supported in part by the NSF, for providing parallel computing used in this work}.

\bibliography{main_update.bib}

\begin{thebibliography}{79}%
\makeatletter
\providecommand \@ifxundefined [1]{%
 \@ifx{#1\undefined}
}%
\providecommand \@ifnum [1]{%
 \ifnum #1\expandafter \@firstoftwo
 \else \expandafter \@secondoftwo
 \fi
}%
\providecommand \@ifx [1]{%
 \ifx #1\expandafter \@firstoftwo
 \else \expandafter \@secondoftwo
 \fi
}%
\providecommand \natexlab [1]{#1}%
\providecommand \enquote  [1]{``#1''}%
\providecommand \bibnamefont  [1]{#1}%
\providecommand \bibfnamefont [1]{#1}%
\providecommand \citenamefont [1]{#1}%
\providecommand \href@noop [0]{\@secondoftwo}%
\providecommand \href [0]{\begingroup \@sanitize@url \@href}%
\providecommand \@href[1]{\@@startlink{#1}\@@href}%
\providecommand \@@href[1]{\endgroup#1\@@endlink}%
\providecommand \@sanitize@url [0]{\catcode `\\12\catcode `\$12\catcode
  `\&12\catcode `\#12\catcode `\^12\catcode `\_12\catcode `\%12\relax}%
\providecommand \@@startlink[1]{}%
\providecommand \@@endlink[0]{}%
\providecommand \url  [0]{\begingroup\@sanitize@url \@url }%
\providecommand \@url [1]{\endgroup\@href {#1}{\urlprefix }}%
\providecommand \urlprefix  [0]{URL }%
\providecommand \Eprint [0]{\href }%
\providecommand \doibase [0]{https://doi.org/}%
\providecommand \selectlanguage [0]{\@gobble}%
\providecommand \bibinfo  [0]{\@secondoftwo}%
\providecommand \bibfield  [0]{\@secondoftwo}%
\providecommand \translation [1]{[#1]}%
\providecommand \BibitemOpen [0]{}%
\providecommand \bibitemStop [0]{}%
\providecommand \bibitemNoStop [0]{.\EOS\space}%
\providecommand \EOS [0]{\spacefactor3000\relax}%
\providecommand \BibitemShut  [1]{\csname bibitem#1\endcsname}%
\let\auto@bib@innerbib\@empty
\bibitem [{\citenamefont {J.}\ \emph {et~al.}(2014)\citenamefont {J.},
  \citenamefont {Pouget-Abadie}, \citenamefont {Jean}, \citenamefont {Mirza},
  \citenamefont {Mehdi}, \citenamefont {Xu}, \citenamefont {Bing},
  \citenamefont {Warde-Farley}, \citenamefont {David}, \citenamefont {Ozair},
  \citenamefont {Sherjil}, \citenamefont {Courville}, \citenamefont {Aaron},\
  and\ \citenamefont {Yoshua}}]{GAN}%
  \BibitemOpen
  \bibfield  {author} {\bibinfo {author} {\bibfnamefont {G.~I.}\ \bibnamefont
  {J.}}, \bibinfo {author} {\bibnamefont {Pouget-Abadie}}, \bibinfo {author}
  {\bibnamefont {Jean}}, \bibinfo {author} {\bibnamefont {Mirza}}, \bibinfo
  {author} {\bibnamefont {Mehdi}}, \bibinfo {author} {\bibnamefont {Xu}},
  \bibinfo {author} {\bibnamefont {Bing}}, \bibinfo {author} {\bibnamefont
  {Warde-Farley}}, \bibinfo {author} {\bibnamefont {David}}, \bibinfo {author}
  {\bibnamefont {Ozair}}, \bibinfo {author} {\bibnamefont {Sherjil}}, \bibinfo
  {author} {\bibnamefont {Courville}}, \bibinfo {author} {\bibnamefont
  {Aaron}},\ and\ \bibinfo {author} {\bibfnamefont {B.}~\bibnamefont
  {Yoshua}},\ }\bibfield  {title} {\bibinfo {title} {Generative adversarial
  nets},\ }in\ \href
  {https://proceedings.neurips.cc/paper/2014/file/5ca3e9b122f61f8f06494c97b1afccf3-Paper.pdf}
  {\emph {\bibinfo {booktitle} {Advances in Neural Information Processing
  Systems}}}\ (\bibinfo  {publisher} {Curran Associates, Inc.},\ \bibinfo
  {year} {2014})\ p.\ \bibinfo {pages} {2672}\BibitemShut {NoStop}%
\bibitem [{\citenamefont {Salimans}\ \emph {et~al.}(2016)\citenamefont
  {Salimans}, \citenamefont {Goodfellow}, \citenamefont {Zaremba},
  \citenamefont {Cheung}, \citenamefont {Radford}, \citenamefont {Chen}, ,\
  and\ \citenamefont {Chen}}]{GAN_image}%
  \BibitemOpen
  \bibfield  {author} {\bibinfo {author} {\bibfnamefont {T.}~\bibnamefont
  {Salimans}}, \bibinfo {author} {\bibfnamefont {I.}~\bibnamefont
  {Goodfellow}}, \bibinfo {author} {\bibfnamefont {W.}~\bibnamefont {Zaremba}},
  \bibinfo {author} {\bibfnamefont {V.}~\bibnamefont {Cheung}}, \bibinfo
  {author} {\bibfnamefont {A.}~\bibnamefont {Radford}}, \bibinfo {author}
  {\bibfnamefont {X.}~\bibnamefont {Chen}}, ,\ and\ \bibinfo {author}
  {\bibfnamefont {X.}~\bibnamefont {Chen}},\ }\bibfield  {title} {\bibinfo
  {title} {Generative adversarial nets},\ }in\ \href
  {https://proceedings.neurips.cc/paper/2016/
  file/8a3363abe792db2d8761d6403605aeb7-Paper.pdf} {\emph {\bibinfo {booktitle}
  {Advances in Neural Information Processing Systems}}},\ Vol.~\bibinfo
  {volume} {29}\ (\bibinfo  {publisher} {Curran Associates, Inc.},\ \bibinfo
  {year} {2016})\ pp.\ \bibinfo {pages} {2234--2242}\BibitemShut {NoStop}%
\bibitem [{\citenamefont {Ledig}\ \emph {et~al.}(2017)\citenamefont {Ledig},
  \citenamefont {Theis}, \citenamefont {Huszar}, \citenamefont {Caballero},
  \citenamefont {Cunningham}, \citenamefont {Acosta}, \citenamefont {Aitken},
  \citenamefont {Tejani}, \citenamefont {Totz}, \citenamefont {Wang},\ and\
  \citenamefont {Shi}}]{GAN_image_sp}%
  \BibitemOpen
  \bibfield  {author} {\bibinfo {author} {\bibfnamefont {C.}~\bibnamefont
  {Ledig}}, \bibinfo {author} {\bibfnamefont {L.}~\bibnamefont {Theis}},
  \bibinfo {author} {\bibfnamefont {F.}~\bibnamefont {Huszar}}, \bibinfo
  {author} {\bibfnamefont {J.}~\bibnamefont {Caballero}}, \bibinfo {author}
  {\bibfnamefont {A.}~\bibnamefont {Cunningham}}, \bibinfo {author}
  {\bibfnamefont {A.}~\bibnamefont {Acosta}}, \bibinfo {author} {\bibfnamefont
  {A.}~\bibnamefont {Aitken}}, \bibinfo {author} {\bibfnamefont
  {A.}~\bibnamefont {Tejani}}, \bibinfo {author} {\bibfnamefont
  {J.}~\bibnamefont {Totz}}, \bibinfo {author} {\bibfnamefont {Z.}~\bibnamefont
  {Wang}},\ and\ \bibinfo {author} {\bibfnamefont {W.}~\bibnamefont {Shi}},\
  }\href {https://arxiv.org/abs/1609.04802} {\bibinfo {title} {Photo-realistic
  single image super-resolution using a generative adversarial network}}
  (\bibinfo {year} {2017}),\ \Eprint {https://arxiv.org/abs/1609.04802}
  {arXiv:1609.04802 [cs.CV]} \BibitemShut {NoStop}%
\bibitem [{\citenamefont {Spampinato}\ \emph {et~al.}(2020)\citenamefont
  {Spampinato}, \citenamefont {Palazzo}, \citenamefont {D’Oro}, \citenamefont
  {Giordano},\ and\ \citenamefont {Shah}}]{GAN_video}%
  \BibitemOpen
  \bibfield  {author} {\bibinfo {author} {\bibfnamefont {C.}~\bibnamefont
  {Spampinato}}, \bibinfo {author} {\bibfnamefont {S.}~\bibnamefont {Palazzo}},
  \bibinfo {author} {\bibfnamefont {P.}~\bibnamefont {D’Oro}}, \bibinfo
  {author} {\bibfnamefont {D.}~\bibnamefont {Giordano}},\ and\ \bibinfo
  {author} {\bibfnamefont {M.}~\bibnamefont {Shah}},\ }\bibfield  {title}
  {\bibinfo {title} {Adversarial framework for unsupervised learning of motion
  dynamics in videos},\ }\href {https://doi.org/10.1007/s11263-019-01246-5}
  {\bibfield  {journal} {\bibinfo  {journal} {Int. J. Comput. Vision}\ }\textbf
  {\bibinfo {volume} {128}},\ \bibinfo {pages} {1378–1397} (\bibinfo {year}
  {2020})}\BibitemShut {NoStop}%
\bibitem [{\citenamefont {Putin}\ \emph {et~al.}(2018)\citenamefont {Putin},
  \citenamefont {Asadulaev}, \citenamefont {Ivanenkov}, \citenamefont
  {Aladinskiy}, \citenamefont {Sanchez-Lengeling}, \citenamefont
  {Aspuru-Guzik},\ and\ \citenamefont {Zhavoronkov}}]{GAN_mole}%
  \BibitemOpen
  \bibfield  {author} {\bibinfo {author} {\bibfnamefont {E.}~\bibnamefont
  {Putin}}, \bibinfo {author} {\bibfnamefont {A.}~\bibnamefont {Asadulaev}},
  \bibinfo {author} {\bibfnamefont {Y.}~\bibnamefont {Ivanenkov}}, \bibinfo
  {author} {\bibfnamefont {V.}~\bibnamefont {Aladinskiy}}, \bibinfo {author}
  {\bibfnamefont {B.}~\bibnamefont {Sanchez-Lengeling}}, \bibinfo {author}
  {\bibfnamefont {A.}~\bibnamefont {Aspuru-Guzik}},\ and\ \bibinfo {author}
  {\bibfnamefont {A.}~\bibnamefont {Zhavoronkov}},\ }\bibfield  {title}
  {\bibinfo {title} {Reinforced adversarial neural computer for de novo
  molecular design},\ }\bibfield  {booktitle} {\emph {\bibinfo {booktitle}
  {Journal of Chemical Information and Modeling}},\ }\href
  {https://doi.org/10.1021/acs.jcim.7b00690} {\bibfield  {journal} {\bibinfo
  {journal} {Journal of Chemical Information and Modeling}\ }\textbf {\bibinfo
  {volume} {58}},\ \bibinfo {pages} {1194} (\bibinfo {year}
  {2018})}\BibitemShut {NoStop}%
\bibitem [{\citenamefont {Lloyd}\ and\ \citenamefont {Weedbrook}(2018)}]{QGAN}%
  \BibitemOpen
  \bibfield  {author} {\bibinfo {author} {\bibfnamefont {S.}~\bibnamefont
  {Lloyd}}\ and\ \bibinfo {author} {\bibfnamefont {C.}~\bibnamefont
  {Weedbrook}},\ }\bibfield  {title} {\bibinfo {title} {Quantum generative
  adversarial learning},\ }\href
  {https://doi.org/10.1103/PhysRevLett.121.040502} {\bibfield  {journal}
  {\bibinfo  {journal} {Phys. Rev. Lett.}\ }\textbf {\bibinfo {volume} {121}},\
  \bibinfo {pages} {040502} (\bibinfo {year} {2018})}\BibitemShut {NoStop}%
\bibitem [{\citenamefont {Zoufal}\ \emph {et~al.}(2019)\citenamefont {Zoufal},
  \citenamefont {Lucchi},\ and\ \citenamefont {Woerner}}]{QGAN2}%
  \BibitemOpen
  \bibfield  {author} {\bibinfo {author} {\bibfnamefont {C.}~\bibnamefont
  {Zoufal}}, \bibinfo {author} {\bibfnamefont {A.}~\bibnamefont {Lucchi}},\
  and\ \bibinfo {author} {\bibfnamefont {S.}~\bibnamefont {Woerner}},\
  }\bibfield  {title} {\bibinfo {title} {Quantum generative adversarial
  networks for learning and loading random distributions},\ }\href
  {https://doi.org/10.1038/s41534-019-0223-2} {\bibfield  {journal} {\bibinfo
  {journal} {npj Quantum Information}\ }\textbf {\bibinfo {volume} {5}},\
  \bibinfo {pages} {103} (\bibinfo {year} {2019})}\BibitemShut {NoStop}%
\bibitem [{\citenamefont {Kiani}\ \emph {et~al.}(2021)\citenamefont {Kiani},
  \citenamefont {De~Palma}, \citenamefont {Marvian}, \citenamefont {Liu},\ and\
  \citenamefont {Lloyd}}]{emdistance}%
  \BibitemOpen
  \bibfield  {author} {\bibinfo {author} {\bibfnamefont {B.~T.}\ \bibnamefont
  {Kiani}}, \bibinfo {author} {\bibfnamefont {G.}~\bibnamefont {De~Palma}},
  \bibinfo {author} {\bibfnamefont {M.}~\bibnamefont {Marvian}}, \bibinfo
  {author} {\bibfnamefont {Z.-W.}\ \bibnamefont {Liu}},\ and\ \bibinfo {author}
  {\bibfnamefont {S.}~\bibnamefont {Lloyd}},\ }\href
  {https://doi.org/10.48550/ARXIV.2101.03037} {\bibinfo {title} {Learning
  quantum data with the quantum earth mover's distance}} (\bibinfo {year}
  {2021})\BibitemShut {NoStop}%
\bibitem [{\citenamefont {Situ}\ \emph {et~al.}(2020)\citenamefont {Situ},
  \citenamefont {He}, \citenamefont {Wang}, \citenamefont {Li},\ and\
  \citenamefont {Zheng}}]{QGAN_discrete}%
  \BibitemOpen
  \bibfield  {author} {\bibinfo {author} {\bibfnamefont {H.}~\bibnamefont
  {Situ}}, \bibinfo {author} {\bibfnamefont {Z.}~\bibnamefont {He}}, \bibinfo
  {author} {\bibfnamefont {Y.}~\bibnamefont {Wang}}, \bibinfo {author}
  {\bibfnamefont {L.}~\bibnamefont {Li}},\ and\ \bibinfo {author}
  {\bibfnamefont {S.}~\bibnamefont {Zheng}},\ }\bibfield  {title} {\bibinfo
  {title} {Quantum generative adversarial network for generating discrete
  distribution},\ }\href
  {https://doi.org/https://doi.org/10.1016/j.ins.2020.05.127} {\bibfield
  {journal} {\bibinfo  {journal} {Information Sciences}\ }\textbf {\bibinfo
  {volume} {538}},\ \bibinfo {pages} {193} (\bibinfo {year}
  {2020})}\BibitemShut {NoStop}%
\bibitem [{\citenamefont {Niu}\ \emph {et~al.}(2022)\citenamefont {Niu},
  \citenamefont {Zlokapa}, \citenamefont {Broughton}, \citenamefont {Boixo},
  \citenamefont {Mohseni}, \citenamefont {Smelyanskyi},\ and\ \citenamefont
  {Neven}}]{QGAN_Ent}%
  \BibitemOpen
  \bibfield  {author} {\bibinfo {author} {\bibfnamefont {M.~Y.}\ \bibnamefont
  {Niu}}, \bibinfo {author} {\bibfnamefont {A.}~\bibnamefont {Zlokapa}},
  \bibinfo {author} {\bibfnamefont {M.}~\bibnamefont {Broughton}}, \bibinfo
  {author} {\bibfnamefont {S.}~\bibnamefont {Boixo}}, \bibinfo {author}
  {\bibfnamefont {M.}~\bibnamefont {Mohseni}}, \bibinfo {author} {\bibfnamefont
  {V.}~\bibnamefont {Smelyanskyi}},\ and\ \bibinfo {author} {\bibfnamefont
  {H.}~\bibnamefont {Neven}},\ }\bibfield  {title} {\bibinfo {title}
  {Entangling quantum generative adversarial networks},\ }\href
  {https://doi.org/10.1103/PhysRevLett.128.220505} {\bibfield  {journal}
  {\bibinfo  {journal} {Phys. Rev. Lett.}\ }\textbf {\bibinfo {volume} {128}},\
  \bibinfo {pages} {220505} (\bibinfo {year} {2022})}\BibitemShut {NoStop}%
\bibitem [{\citenamefont {Huang}\ \emph
  {et~al.}(2021{\natexlab{a}})\citenamefont {Huang}, \citenamefont {Du},
  \citenamefont {Gong}, \citenamefont {Zhao}, \citenamefont {Wu}, \citenamefont
  {Wang}, \citenamefont {Li}, \citenamefont {Liang}, \citenamefont {Lin},
  \citenamefont {Xu}, \citenamefont {Yang}, \citenamefont {Liu}, \citenamefont
  {Hsieh}, \citenamefont {Deng}, \citenamefont {Rong}, \citenamefont {Peng},
  \citenamefont {Lu}, \citenamefont {Chen}, \citenamefont {Tao}, \citenamefont
  {Zhu},\ and\ \citenamefont {Pan}}]{QGAN_image}%
  \BibitemOpen
  \bibfield  {author} {\bibinfo {author} {\bibfnamefont {H.-L.}\ \bibnamefont
  {Huang}}, \bibinfo {author} {\bibfnamefont {Y.}~\bibnamefont {Du}}, \bibinfo
  {author} {\bibfnamefont {M.}~\bibnamefont {Gong}}, \bibinfo {author}
  {\bibfnamefont {Y.}~\bibnamefont {Zhao}}, \bibinfo {author} {\bibfnamefont
  {Y.}~\bibnamefont {Wu}}, \bibinfo {author} {\bibfnamefont {C.}~\bibnamefont
  {Wang}}, \bibinfo {author} {\bibfnamefont {S.}~\bibnamefont {Li}}, \bibinfo
  {author} {\bibfnamefont {F.}~\bibnamefont {Liang}}, \bibinfo {author}
  {\bibfnamefont {J.}~\bibnamefont {Lin}}, \bibinfo {author} {\bibfnamefont
  {Y.}~\bibnamefont {Xu}}, \bibinfo {author} {\bibfnamefont {R.}~\bibnamefont
  {Yang}}, \bibinfo {author} {\bibfnamefont {T.}~\bibnamefont {Liu}}, \bibinfo
  {author} {\bibfnamefont {M.-H.}\ \bibnamefont {Hsieh}}, \bibinfo {author}
  {\bibfnamefont {H.}~\bibnamefont {Deng}}, \bibinfo {author} {\bibfnamefont
  {H.}~\bibnamefont {Rong}}, \bibinfo {author} {\bibfnamefont {C.-Z.}\
  \bibnamefont {Peng}}, \bibinfo {author} {\bibfnamefont {C.-Y.}\ \bibnamefont
  {Lu}}, \bibinfo {author} {\bibfnamefont {Y.-A.}\ \bibnamefont {Chen}},
  \bibinfo {author} {\bibfnamefont {D.}~\bibnamefont {Tao}}, \bibinfo {author}
  {\bibfnamefont {X.}~\bibnamefont {Zhu}},\ and\ \bibinfo {author}
  {\bibfnamefont {J.-W.}\ \bibnamefont {Pan}},\ }\bibfield  {title} {\bibinfo
  {title} {Experimental quantum generative adversarial networks for image
  generation},\ }\bibfield  {journal} {\bibinfo  {journal} {Physical Review
  Applied}\ }\textbf {\bibinfo {volume} {16}},\ \href
  {https://doi.org/10.1103/physrevapplied.16.024051}
  {10.1103/physrevapplied.16.024051} (\bibinfo {year}
  {2021}{\natexlab{a}})\BibitemShut {NoStop}%
\bibitem [{\citenamefont {Preskill}(2018)}]{NISQ}%
  \BibitemOpen
  \bibfield  {author} {\bibinfo {author} {\bibfnamefont {J.}~\bibnamefont
  {Preskill}},\ }\bibfield  {title} {\bibinfo {title} {Quantum {C}omputing in
  the {NISQ} era and beyond},\ }\href
  {https://doi.org/10.22331/q-2018-08-06-79} {\bibfield  {journal} {\bibinfo
  {journal} {{Quantum}}\ }\textbf {\bibinfo {volume} {2}},\ \bibinfo {pages}
  {79} (\bibinfo {year} {2018})}\BibitemShut {NoStop}%
\bibitem [{\citenamefont {Hu}\ \emph {et~al.}(2019)\citenamefont {Hu},
  \citenamefont {Wu}, \citenamefont {Cai}, \citenamefont {Ma}, \citenamefont
  {Mu}, \citenamefont {Xu}, \citenamefont {Wang}, \citenamefont {Song},
  \citenamefont {Deng}, \citenamefont {Zou},\ and\ \citenamefont
  {Sun}}]{QGAN_spcircuit}%
  \BibitemOpen
  \bibfield  {author} {\bibinfo {author} {\bibfnamefont {L.}~\bibnamefont
  {Hu}}, \bibinfo {author} {\bibfnamefont {S.-H.}\ \bibnamefont {Wu}}, \bibinfo
  {author} {\bibfnamefont {W.}~\bibnamefont {Cai}}, \bibinfo {author}
  {\bibfnamefont {Y.}~\bibnamefont {Ma}}, \bibinfo {author} {\bibfnamefont
  {X.}~\bibnamefont {Mu}}, \bibinfo {author} {\bibfnamefont {Y.}~\bibnamefont
  {Xu}}, \bibinfo {author} {\bibfnamefont {H.}~\bibnamefont {Wang}}, \bibinfo
  {author} {\bibfnamefont {Y.}~\bibnamefont {Song}}, \bibinfo {author}
  {\bibfnamefont {D.-L.}\ \bibnamefont {Deng}}, \bibinfo {author}
  {\bibfnamefont {C.-L.}\ \bibnamefont {Zou}},\ and\ \bibinfo {author}
  {\bibfnamefont {L.}~\bibnamefont {Sun}},\ }\bibfield  {title} {\bibinfo
  {title} {Quantum generative adversarial learning in a superconducting quantum
  circuit},\ }\href {https://doi.org/10.1126/sciadv.aav2761} {\bibfield
  {journal} {\bibinfo  {journal} {Science advances}\ }\textbf {\bibinfo
  {volume} {5}},\ \bibinfo {pages} {eaav2761} (\bibinfo {year}
  {2019})}\BibitemShut {NoStop}%
\bibitem [{\citenamefont {Huang}\ \emph
  {et~al.}(2021{\natexlab{b}})\citenamefont {Huang}, \citenamefont {Wang},
  \citenamefont {Song}, \citenamefont {Xu}, \citenamefont {Li}, \citenamefont
  {Wang}, \citenamefont {Guo}, \citenamefont {Song}, \citenamefont {Liu},
  \citenamefont {Zheng}, \citenamefont {Deng}, \citenamefont {Wang},
  \citenamefont {Tian},\ and\ \citenamefont {Fan}}]{QGAN_sup2}%
  \BibitemOpen
  \bibfield  {author} {\bibinfo {author} {\bibfnamefont {K.}~\bibnamefont
  {Huang}}, \bibinfo {author} {\bibfnamefont {Z.-A.}\ \bibnamefont {Wang}},
  \bibinfo {author} {\bibfnamefont {C.}~\bibnamefont {Song}}, \bibinfo {author}
  {\bibfnamefont {K.}~\bibnamefont {Xu}}, \bibinfo {author} {\bibfnamefont
  {H.}~\bibnamefont {Li}}, \bibinfo {author} {\bibfnamefont {Z.}~\bibnamefont
  {Wang}}, \bibinfo {author} {\bibfnamefont {Q.}~\bibnamefont {Guo}}, \bibinfo
  {author} {\bibfnamefont {Z.}~\bibnamefont {Song}}, \bibinfo {author}
  {\bibfnamefont {Z.-B.}\ \bibnamefont {Liu}}, \bibinfo {author} {\bibfnamefont
  {D.}~\bibnamefont {Zheng}}, \bibinfo {author} {\bibfnamefont {D.-L.}\
  \bibnamefont {Deng}}, \bibinfo {author} {\bibfnamefont {H.}~\bibnamefont
  {Wang}}, \bibinfo {author} {\bibfnamefont {J.-G.}\ \bibnamefont {Tian}},\
  and\ \bibinfo {author} {\bibfnamefont {H.}~\bibnamefont {Fan}},\ }\bibfield
  {title} {\bibinfo {title} {Quantum generative adversarial networks with
  multiple superconducting qubits},\ }\href
  {https://doi.org/10.1038/s41534-021-00503-1} {\bibfield  {journal} {\bibinfo
  {journal} {npj Quantum Information}\ }\textbf {\bibinfo {volume} {7}},\
  \bibinfo {pages} {165} (\bibinfo {year} {2021}{\natexlab{b}})}\BibitemShut
  {NoStop}%
\bibitem [{\citenamefont {D'Alessandro}(2007)}]{QOC}%
  \BibitemOpen
  \bibfield  {author} {\bibinfo {author} {\bibfnamefont {D.}~\bibnamefont
  {D'Alessandro}},\ }\href@noop {} {\emph {\bibinfo {title} {Introduction to
  Quantum Control and Dynamics}}}\ (\bibinfo  {publisher} {Chapman \& amp;
  Hall},\ \bibinfo {year} {2007})\BibitemShut {NoStop}%
\bibitem [{\citenamefont {Glaser}\ \emph {et~al.}(2015)\citenamefont {Glaser},
  \citenamefont {Boscain}, \citenamefont {Calarco}, \citenamefont {Koch},
  \citenamefont {K{\"o}ckenberger}, \citenamefont {Kosloff}, \citenamefont
  {Kuprov}, \citenamefont {Luy}, \citenamefont {Schirmer}, \citenamefont
  {Schulte-Herbr{\"u}ggen}, \citenamefont {Sugny},\ and\ \citenamefont
  {Wilhelm}}]{QOC_fids}%
  \BibitemOpen
  \bibfield  {author} {\bibinfo {author} {\bibfnamefont {S.~J.}\ \bibnamefont
  {Glaser}}, \bibinfo {author} {\bibfnamefont {U.}~\bibnamefont {Boscain}},
  \bibinfo {author} {\bibfnamefont {T.}~\bibnamefont {Calarco}}, \bibinfo
  {author} {\bibfnamefont {C.~P.}\ \bibnamefont {Koch}}, \bibinfo {author}
  {\bibfnamefont {W.}~\bibnamefont {K{\"o}ckenberger}}, \bibinfo {author}
  {\bibfnamefont {R.}~\bibnamefont {Kosloff}}, \bibinfo {author} {\bibfnamefont
  {I.}~\bibnamefont {Kuprov}}, \bibinfo {author} {\bibfnamefont
  {B.}~\bibnamefont {Luy}}, \bibinfo {author} {\bibfnamefont {S.}~\bibnamefont
  {Schirmer}}, \bibinfo {author} {\bibfnamefont {T.}~\bibnamefont
  {Schulte-Herbr{\"u}ggen}}, \bibinfo {author} {\bibfnamefont {D.}~\bibnamefont
  {Sugny}},\ and\ \bibinfo {author} {\bibfnamefont {F.~K.}\ \bibnamefont
  {Wilhelm}},\ }\bibfield  {title} {\bibinfo {title} {Training
  schr{\"o}dinger's cat: quantum optimal control},\ }\href
  {https://doi.org/10.1140/epjd/e2015-60464-1} {\bibfield  {journal} {\bibinfo
  {journal} {The European Physical Journal D}\ }\textbf {\bibinfo {volume}
  {69}},\ \bibinfo {pages} {279} (\bibinfo {year} {2015})}\BibitemShut
  {NoStop}%
\bibitem [{\citenamefont {Werschnik}\ and\ \citenamefont {Gross}(2007)}]{QOC1}%
  \BibitemOpen
  \bibfield  {author} {\bibinfo {author} {\bibfnamefont {J.}~\bibnamefont
  {Werschnik}}\ and\ \bibinfo {author} {\bibfnamefont {E.~K.~U.}\ \bibnamefont
  {Gross}},\ }\bibfield  {title} {\bibinfo {title} {Quantum optimal control
  theory},\ }\href {https://doi.org/10.1088/0953-4075/40/18/r01} {\bibfield
  {journal} {\bibinfo  {journal} {Journal of Physics B: Atomic, Molecular and
  Optical Physics}\ }\textbf {\bibinfo {volume} {40}},\ \bibinfo {pages} {R175}
  (\bibinfo {year} {2007})}\BibitemShut {NoStop}%
\bibitem [{\citenamefont {Waldherr}\ \emph {et~al.}(2014)\citenamefont
  {Waldherr}, \citenamefont {Wang}, \citenamefont {Zaiser}, \citenamefont
  {Jamali}, \citenamefont {Schulte-Herbr{\"u}ggen}, \citenamefont {Abe},
  \citenamefont {Ohshima}, \citenamefont {Isoya}, \citenamefont {Du},
  \citenamefont {Neumann},\ and\ \citenamefont {Wrachtrup}}]{QOC_ex1}%
  \BibitemOpen
  \bibfield  {author} {\bibinfo {author} {\bibfnamefont {G.}~\bibnamefont
  {Waldherr}}, \bibinfo {author} {\bibfnamefont {Y.}~\bibnamefont {Wang}},
  \bibinfo {author} {\bibfnamefont {S.}~\bibnamefont {Zaiser}}, \bibinfo
  {author} {\bibfnamefont {M.}~\bibnamefont {Jamali}}, \bibinfo {author}
  {\bibfnamefont {T.}~\bibnamefont {Schulte-Herbr{\"u}ggen}}, \bibinfo {author}
  {\bibfnamefont {H.}~\bibnamefont {Abe}}, \bibinfo {author} {\bibfnamefont
  {T.}~\bibnamefont {Ohshima}}, \bibinfo {author} {\bibfnamefont
  {J.}~\bibnamefont {Isoya}}, \bibinfo {author} {\bibfnamefont {J.~F.}\
  \bibnamefont {Du}}, \bibinfo {author} {\bibfnamefont {P.}~\bibnamefont
  {Neumann}},\ and\ \bibinfo {author} {\bibfnamefont {J.}~\bibnamefont
  {Wrachtrup}},\ }\bibfield  {title} {\bibinfo {title} {Quantum error
  correction in a solid-state hybrid spin register},\ }\href
  {https://doi.org/10.1038/nature12919} {\bibfield  {journal} {\bibinfo
  {journal} {Nature}\ }\textbf {\bibinfo {volume} {506}},\ \bibinfo {pages}
  {204} (\bibinfo {year} {2014})}\BibitemShut {NoStop}%
\bibitem [{\citenamefont {Dolde}\ \emph {et~al.}(2014)\citenamefont {Dolde},
  \citenamefont {Bergholm}, \citenamefont {Wang}, \citenamefont {Jakobi},
  \citenamefont {Naydenov}, \citenamefont {Pezzagna}, \citenamefont {Meijer},
  \citenamefont {Jelezko}, \citenamefont {Neumann}, \citenamefont
  {Schulte-Herbr{\"u}ggen}, \citenamefont {Biamonte},\ and\ \citenamefont
  {Wrachtrup}}]{QOC_ex2}%
  \BibitemOpen
  \bibfield  {author} {\bibinfo {author} {\bibfnamefont {F.}~\bibnamefont
  {Dolde}}, \bibinfo {author} {\bibfnamefont {V.}~\bibnamefont {Bergholm}},
  \bibinfo {author} {\bibfnamefont {Y.}~\bibnamefont {Wang}}, \bibinfo {author}
  {\bibfnamefont {I.}~\bibnamefont {Jakobi}}, \bibinfo {author} {\bibfnamefont
  {B.}~\bibnamefont {Naydenov}}, \bibinfo {author} {\bibfnamefont
  {S.}~\bibnamefont {Pezzagna}}, \bibinfo {author} {\bibfnamefont
  {J.}~\bibnamefont {Meijer}}, \bibinfo {author} {\bibfnamefont
  {F.}~\bibnamefont {Jelezko}}, \bibinfo {author} {\bibfnamefont
  {P.}~\bibnamefont {Neumann}}, \bibinfo {author} {\bibfnamefont
  {T.}~\bibnamefont {Schulte-Herbr{\"u}ggen}}, \bibinfo {author} {\bibfnamefont
  {J.}~\bibnamefont {Biamonte}},\ and\ \bibinfo {author} {\bibfnamefont
  {J.}~\bibnamefont {Wrachtrup}},\ }\bibfield  {title} {\bibinfo {title}
  {High-fidelity spin entanglement using optimal control},\ }\href
  {https://doi.org/10.1038/ncomms4371} {\bibfield  {journal} {\bibinfo
  {journal} {Nature Communications}\ }\textbf {\bibinfo {volume} {5}},\
  \bibinfo {pages} {3371} (\bibinfo {year} {2014})}\BibitemShut {NoStop}%
\bibitem [{\citenamefont {Cheng}\ \emph {et~al.}(2020)\citenamefont {Cheng},
  \citenamefont {Deng},\ and\ \citenamefont {Qian}}]{QOC_latency}%
  \BibitemOpen
  \bibfield  {author} {\bibinfo {author} {\bibfnamefont {J.}~\bibnamefont
  {Cheng}}, \bibinfo {author} {\bibfnamefont {H.}~\bibnamefont {Deng}},\ and\
  \bibinfo {author} {\bibfnamefont {X.}~\bibnamefont {Qian}},\ }\href
  {https://arxiv.org/abs/2003.00376} {\bibinfo {title} {Accqoc: Accelerating
  quantum optimal control based pulse generation}} (\bibinfo {year} {2020}),\
  \Eprint {https://arxiv.org/abs/2003.00376} {arXiv:2003.00376 [quant-ph]}
  \BibitemShut {NoStop}%
\bibitem [{\citenamefont {Khaneja}\ \emph {et~al.}(2005)\citenamefont
  {Khaneja}, \citenamefont {Reiss}, \citenamefont {Kehlet}, \citenamefont
  {Schulte-Herbr{\"u}ggen},\ and\ \citenamefont {Glaser}}]{GRAPE}%
  \BibitemOpen
  \bibfield  {author} {\bibinfo {author} {\bibfnamefont {N.}~\bibnamefont
  {Khaneja}}, \bibinfo {author} {\bibfnamefont {T.}~\bibnamefont {Reiss}},
  \bibinfo {author} {\bibfnamefont {C.}~\bibnamefont {Kehlet}}, \bibinfo
  {author} {\bibfnamefont {T.}~\bibnamefont {Schulte-Herbr{\"u}ggen}},\ and\
  \bibinfo {author} {\bibfnamefont {S.~J.}\ \bibnamefont {Glaser}},\ }\bibfield
   {title} {\bibinfo {title} {Optimal control of coupled spin dynamics: design
  of nmr pulse sequences by gradient ascent algorithms},\ }\href
  {https://doi.org/https://doi.org/10.1016/j.jmr.2004.11.004} {\bibfield
  {journal} {\bibinfo  {journal} {Journal of Magnetic Resonance}\ }\textbf
  {\bibinfo {volume} {172}},\ \bibinfo {pages} {296} (\bibinfo {year}
  {2005})}\BibitemShut {NoStop}%
\bibitem [{\citenamefont {Palao}\ and\ \citenamefont {Kosloff}(2003)}]{Krotov}%
  \BibitemOpen
  \bibfield  {author} {\bibinfo {author} {\bibfnamefont {J.~P.}\ \bibnamefont
  {Palao}}\ and\ \bibinfo {author} {\bibfnamefont {R.}~\bibnamefont
  {Kosloff}},\ }\bibfield  {title} {\bibinfo {title} {Optimal control theory
  for unitary transformations},\ }\href
  {https://doi.org/10.1103/PhysRevA.68.062308} {\bibfield  {journal} {\bibinfo
  {journal} {Phys. Rev. A}\ }\textbf {\bibinfo {volume} {68}},\ \bibinfo
  {pages} {062308} (\bibinfo {year} {2003})}\BibitemShut {NoStop}%
\bibitem [{\citenamefont {Caneva}\ \emph {et~al.}(2011)\citenamefont {Caneva},
  \citenamefont {Calarco},\ and\ \citenamefont {Montangero}}]{CRAB}%
  \BibitemOpen
  \bibfield  {author} {\bibinfo {author} {\bibfnamefont {T.}~\bibnamefont
  {Caneva}}, \bibinfo {author} {\bibfnamefont {T.}~\bibnamefont {Calarco}},\
  and\ \bibinfo {author} {\bibfnamefont {S.}~\bibnamefont {Montangero}},\
  }\bibfield  {title} {\bibinfo {title} {Chopped random-basis quantum
  optimization},\ }\href {https://doi.org/10.1103/PhysRevA.84.022326}
  {\bibfield  {journal} {\bibinfo  {journal} {Phys. Rev. A}\ }\textbf {\bibinfo
  {volume} {84}},\ \bibinfo {pages} {022326} (\bibinfo {year}
  {2011})}\BibitemShut {NoStop}%
\bibitem [{\citenamefont {Shi}\ \emph {et~al.}(2019)\citenamefont {Shi},
  \citenamefont {Leung}, \citenamefont {Gokhale}, \citenamefont {Rossi},
  \citenamefont {Schuster}, \citenamefont {Hoffmann},\ and\ \citenamefont
  {Chong}}]{comparison1}%
  \BibitemOpen
  \bibfield  {author} {\bibinfo {author} {\bibfnamefont {Y.}~\bibnamefont
  {Shi}}, \bibinfo {author} {\bibfnamefont {N.}~\bibnamefont {Leung}}, \bibinfo
  {author} {\bibfnamefont {P.}~\bibnamefont {Gokhale}}, \bibinfo {author}
  {\bibfnamefont {Z.}~\bibnamefont {Rossi}}, \bibinfo {author} {\bibfnamefont
  {D.~I.}\ \bibnamefont {Schuster}}, \bibinfo {author} {\bibfnamefont
  {H.}~\bibnamefont {Hoffmann}},\ and\ \bibinfo {author} {\bibfnamefont
  {F.~T.}\ \bibnamefont {Chong}},\ }\bibfield  {title} {\bibinfo {title}
  {Optimized compilation of aggregated instructions for realistic quantum
  computers},\ }in\ \href {https://doi.org/10.1145/3297858.3304018} {\emph
  {\bibinfo {booktitle} {Proceedings of the Twenty-Fourth International
  Conference on Architectural Support for Programming Languages and Operating
  Systems}}}\ (\bibinfo  {publisher} {{ACM}},\ \bibinfo {year}
  {2019})\BibitemShut {NoStop}%
\bibitem [{\citenamefont {Willsch}\ \emph {et~al.}(2017)\citenamefont
  {Willsch}, \citenamefont {Nocon}, \citenamefont {Jin}, \citenamefont
  {De~Raedt},\ and\ \citenamefont {Michielsen}}]{gate_error1}%
  \BibitemOpen
  \bibfield  {author} {\bibinfo {author} {\bibfnamefont {D.}~\bibnamefont
  {Willsch}}, \bibinfo {author} {\bibfnamefont {M.}~\bibnamefont {Nocon}},
  \bibinfo {author} {\bibfnamefont {F.}~\bibnamefont {Jin}}, \bibinfo {author}
  {\bibfnamefont {H.}~\bibnamefont {De~Raedt}},\ and\ \bibinfo {author}
  {\bibfnamefont {K.}~\bibnamefont {Michielsen}},\ }\bibfield  {title}
  {\bibinfo {title} {Gate-error analysis in simulations of quantum computers
  with transmon qubits},\ }\href {https://doi.org/10.1103/PhysRevA.96.062302}
  {\bibfield  {journal} {\bibinfo  {journal} {Phys. Rev. A}\ }\textbf {\bibinfo
  {volume} {96}},\ \bibinfo {pages} {062302} (\bibinfo {year}
  {2017})}\BibitemShut {NoStop}%
\bibitem [{\citenamefont {Meitei}\ \emph {et~al.}(2021)\citenamefont {Meitei},
  \citenamefont {Gard}, \citenamefont {Barron}, \citenamefont {Pappas},
  \citenamefont {Economou}, \citenamefont {Barnes},\ and\ \citenamefont
  {Mayhall}}]{vqe_qoc_chem}%
  \BibitemOpen
  \bibfield  {author} {\bibinfo {author} {\bibfnamefont {O.~R.}\ \bibnamefont
  {Meitei}}, \bibinfo {author} {\bibfnamefont {B.~T.}\ \bibnamefont {Gard}},
  \bibinfo {author} {\bibfnamefont {G.~S.}\ \bibnamefont {Barron}}, \bibinfo
  {author} {\bibfnamefont {D.~P.}\ \bibnamefont {Pappas}}, \bibinfo {author}
  {\bibfnamefont {S.~E.}\ \bibnamefont {Economou}}, \bibinfo {author}
  {\bibfnamefont {E.}~\bibnamefont {Barnes}},\ and\ \bibinfo {author}
  {\bibfnamefont {N.~J.}\ \bibnamefont {Mayhall}},\ }\bibfield  {title}
  {\bibinfo {title} {Gate-free state preparation for fast variational quantum
  eigensolver simulations},\ }\href
  {https://doi.org/10.1038/s41534-021-00493-0} {\bibfield  {journal} {\bibinfo
  {journal} {npj Quantum Information}\ }\textbf {\bibinfo {volume} {7}},\
  \bibinfo {pages} {155} (\bibinfo {year} {2021})}\BibitemShut {NoStop}%
\bibitem [{\citenamefont {Asthana}\ \emph {et~al.}(2023)\citenamefont
  {Asthana}, \citenamefont {Liu}, \citenamefont {Meitei}, \citenamefont
  {Economou}, \citenamefont {Barnes},\ and\ \citenamefont
  {Mayhall}}]{vqe_qoc_chem2}%
  \BibitemOpen
  \bibfield  {author} {\bibinfo {author} {\bibfnamefont {A.}~\bibnamefont
  {Asthana}}, \bibinfo {author} {\bibfnamefont {C.}~\bibnamefont {Liu}},
  \bibinfo {author} {\bibfnamefont {O.~R.}\ \bibnamefont {Meitei}}, \bibinfo
  {author} {\bibfnamefont {S.~E.}\ \bibnamefont {Economou}}, \bibinfo {author}
  {\bibfnamefont {E.}~\bibnamefont {Barnes}},\ and\ \bibinfo {author}
  {\bibfnamefont {N.~J.}\ \bibnamefont {Mayhall}},\ }\bibfield  {title}
  {\bibinfo {title} {Leakage reduces device coherence demands for pulse-level
  molecular simulations},\ }\href
  {https://doi.org/10.1103/PhysRevApplied.19.064071} {\bibfield  {journal}
  {\bibinfo  {journal} {Phys. Rev. Appl.}\ }\textbf {\bibinfo {volume} {19}},\
  \bibinfo {pages} {064071} (\bibinfo {year} {2023})}\BibitemShut {NoStop}%
\bibitem [{\citenamefont {de~Keijzer}\ \emph {et~al.}(2023)\citenamefont
  {de~Keijzer}, \citenamefont {Tse},\ and\ \citenamefont
  {Kokkelmans}}]{pulsebased}%
  \BibitemOpen
  \bibfield  {author} {\bibinfo {author} {\bibfnamefont {R.}~\bibnamefont
  {de~Keijzer}}, \bibinfo {author} {\bibfnamefont {O.}~\bibnamefont {Tse}},\
  and\ \bibinfo {author} {\bibfnamefont {S.}~\bibnamefont {Kokkelmans}},\
  }\bibfield  {title} {\bibinfo {title} {Pulse based {V}ariational {Q}uantum
  {O}ptimal {C}ontrol for hybrid quantum computing},\ }\href
  {https://doi.org/10.22331/q-2023-01-26-908} {\bibfield  {journal} {\bibinfo
  {journal} {{Quantum}}\ }\textbf {\bibinfo {volume} {7}},\ \bibinfo {pages}
  {908} (\bibinfo {year} {2023})}\BibitemShut {NoStop}%
\bibitem [{\citenamefont {Balaji}\ \emph {et~al.}(2021)\citenamefont {Balaji},
  \citenamefont {Sajedi}, \citenamefont {Kalibhat}, \citenamefont {Ding},
  \citenamefont {St{\"o}ger}, \citenamefont {Soltanolkotabi},\ and\
  \citenamefont {Feizi}}]{overparam4}%
  \BibitemOpen
  \bibfield  {author} {\bibinfo {author} {\bibfnamefont {Y.}~\bibnamefont
  {Balaji}}, \bibinfo {author} {\bibfnamefont {M.~A.}\ \bibnamefont {Sajedi}},
  \bibinfo {author} {\bibfnamefont {N.~M.}\ \bibnamefont {Kalibhat}}, \bibinfo
  {author} {\bibfnamefont {M.}~\bibnamefont {Ding}}, \bibinfo {author}
  {\bibfnamefont {D.}~\bibnamefont {St{\"o}ger}}, \bibinfo {author}
  {\bibfnamefont {M.}~\bibnamefont {Soltanolkotabi}},\ and\ \bibinfo {author}
  {\bibfnamefont {S.}~\bibnamefont {Feizi}},\ }\bibfield  {title} {\bibinfo
  {title} {Understanding overparameterization in generative adversarial
  networks},\ }\href {https://api.semanticscholar.org/CorpusID:232360541}
  {\bibfield  {journal} {\bibinfo  {journal} {ArXiv}\ }\textbf {\bibinfo
  {volume} {abs/2104.05605}} (\bibinfo {year} {2021})}\BibitemShut {NoStop}%
\bibitem [{\citenamefont {Allen-Zhu}\ \emph {et~al.}(2019)\citenamefont
  {Allen-Zhu}, \citenamefont {Li},\ and\ \citenamefont {Song}}]{overparam_dnn}%
  \BibitemOpen
  \bibfield  {author} {\bibinfo {author} {\bibfnamefont {Z.}~\bibnamefont
  {Allen-Zhu}}, \bibinfo {author} {\bibfnamefont {Y.}~\bibnamefont {Li}},\ and\
  \bibinfo {author} {\bibfnamefont {Z.}~\bibnamefont {Song}},\ }\bibfield
  {title} {\bibinfo {title} {A convergence theory for deep learning via
  over-parameterization},\ }in\ \href
  {https://proceedings.mlr.press/v97/allen-zhu19a.html} {\emph {\bibinfo
  {booktitle} {Proceedings of the 36th International Conference on Machine
  Learning}}},\ \bibinfo {series} {Proceedings of Machine Learning Research},
  Vol.~\bibinfo {volume} {97},\ \bibinfo {editor} {edited by\ \bibinfo {editor}
  {\bibfnamefont {K.}~\bibnamefont {Chaudhuri}}\ and\ \bibinfo {editor}
  {\bibfnamefont {R.}~\bibnamefont {Salakhutdinov}}}\ (\bibinfo  {publisher}
  {PMLR},\ \bibinfo {year} {2019})\ pp.\ \bibinfo {pages}
  {242--252}\BibitemShut {NoStop}%
\bibitem [{\citenamefont {Larocca}\ \emph {et~al.}(2023)\citenamefont
  {Larocca}, \citenamefont {Ju}, \citenamefont {García-Martín}, \citenamefont
  {Coles},\ and\ \citenamefont {Cerezo}}]{overparam1}%
  \BibitemOpen
  \bibfield  {author} {\bibinfo {author} {\bibfnamefont {M.}~\bibnamefont
  {Larocca}}, \bibinfo {author} {\bibfnamefont {N.}~\bibnamefont {Ju}},
  \bibinfo {author} {\bibfnamefont {D.}~\bibnamefont {García-Martín}},
  \bibinfo {author} {\bibfnamefont {P.~J.}\ \bibnamefont {Coles}},\ and\
  \bibinfo {author} {\bibfnamefont {M.}~\bibnamefont {Cerezo}},\ }\bibfield
  {title} {\bibinfo {title} {Theory of overparametrization in quantum neural
  networks},\ }\href {https://doi.org/10.1038/s43588-023-00467-6} {\bibfield
  {journal} {\bibinfo  {journal} {Nature Computational Science}\ }\textbf
  {\bibinfo {volume} {3}},\ \bibinfo {pages} {542–551} (\bibinfo {year}
  {2023})}\BibitemShut {NoStop}%
\bibitem [{\citenamefont {Anschuetz}\ and\ \citenamefont
  {Kiani}(2022)}]{overparam2}%
  \BibitemOpen
  \bibfield  {author} {\bibinfo {author} {\bibfnamefont {E.~R.}\ \bibnamefont
  {Anschuetz}}\ and\ \bibinfo {author} {\bibfnamefont {B.~T.}\ \bibnamefont
  {Kiani}},\ }\href {https://doi.org/10.48550/ARXIV.2205.05786} {\bibinfo
  {title} {Beyond barren plateaus: Quantum variational algorithms are swamped
  with traps}} (\bibinfo {year} {2022})\BibitemShut {NoStop}%
\bibitem [{\citenamefont {You}\ \emph {et~al.}(2022)\citenamefont {You},
  \citenamefont {Chakrabarti},\ and\ \citenamefont {Wu}}]{overparam5_wu}%
  \BibitemOpen
  \bibfield  {author} {\bibinfo {author} {\bibfnamefont {X.}~\bibnamefont
  {You}}, \bibinfo {author} {\bibfnamefont {S.}~\bibnamefont {Chakrabarti}},\
  and\ \bibinfo {author} {\bibfnamefont {X.}~\bibnamefont {Wu}},\ }\href
  {https://doi.org/10.48550/ARXIV.2205.12481} {\bibinfo {title} {A convergence
  theory for over-parameterized variational quantum eigensolvers}} (\bibinfo
  {year} {2022})\BibitemShut {NoStop}%
\bibitem [{\citenamefont {Larocca}\ \emph {et~al.}(2022)\citenamefont
  {Larocca}, \citenamefont {Czarnik}, \citenamefont {Sharma}, \citenamefont
  {Muraleedharan}, \citenamefont {Coles},\ and\ \citenamefont
  {Cerezo}}]{overparam3}%
  \BibitemOpen
  \bibfield  {author} {\bibinfo {author} {\bibfnamefont {M.}~\bibnamefont
  {Larocca}}, \bibinfo {author} {\bibfnamefont {P.}~\bibnamefont {Czarnik}},
  \bibinfo {author} {\bibfnamefont {K.}~\bibnamefont {Sharma}}, \bibinfo
  {author} {\bibfnamefont {G.}~\bibnamefont {Muraleedharan}}, \bibinfo {author}
  {\bibfnamefont {P.~J.}\ \bibnamefont {Coles}},\ and\ \bibinfo {author}
  {\bibfnamefont {M.}~\bibnamefont {Cerezo}},\ }\bibfield  {title} {\bibinfo
  {title} {Diagnosing {B}arren {P}lateaus with {T}ools from {Q}uantum {O}ptimal
  {C}ontrol},\ }\href {https://doi.org/10.22331/q-2022-09-29-824} {\bibfield
  {journal} {\bibinfo  {journal} {{Quantum}}\ }\textbf {\bibinfo {volume}
  {6}},\ \bibinfo {pages} {824} (\bibinfo {year} {2022})}\BibitemShut {NoStop}%
\bibitem [{\citenamefont {Arjovsky}\ and\ \citenamefont
  {Bottou}(2017)}]{vanishing_gan}%
  \BibitemOpen
  \bibfield  {author} {\bibinfo {author} {\bibfnamefont {M.}~\bibnamefont
  {Arjovsky}}\ and\ \bibinfo {author} {\bibfnamefont {L.}~\bibnamefont
  {Bottou}},\ }\href {https://arxiv.org/abs/1701.04862} {\bibinfo {title}
  {Towards principled methods for training generative adversarial networks}}
  (\bibinfo {year} {2017}),\ \Eprint {https://arxiv.org/abs/1701.04862}
  {arXiv:1701.04862 [stat.ML]} \BibitemShut {NoStop}%
\bibitem [{\citenamefont {Srivastava}\ \emph {et~al.}(2017)\citenamefont
  {Srivastava}, \citenamefont {Valkov}, \citenamefont {Russell}, \citenamefont
  {Gutmann},\ and\ \citenamefont {Sutton}}]{classicalmodecollapse}%
  \BibitemOpen
  \bibfield  {author} {\bibinfo {author} {\bibfnamefont {A.}~\bibnamefont
  {Srivastava}}, \bibinfo {author} {\bibfnamefont {L.}~\bibnamefont {Valkov}},
  \bibinfo {author} {\bibfnamefont {C.}~\bibnamefont {Russell}}, \bibinfo
  {author} {\bibfnamefont {M.~U.}\ \bibnamefont {Gutmann}},\ and\ \bibinfo
  {author} {\bibfnamefont {C.}~\bibnamefont {Sutton}},\ }\href
  {https://arxiv.org/abs/1705.07761} {\bibinfo {title} {Veegan: Reducing mode
  collapse in gans using implicit variational learning}} (\bibinfo {year}
  {2017}),\ \Eprint {https://arxiv.org/abs/1705.07761} {arXiv:1705.07761
  [stat.ML]} \BibitemShut {NoStop}%
\bibitem [{\citenamefont {Arjovsky}\ \emph {et~al.}(2017)\citenamefont
  {Arjovsky}, \citenamefont {Chintala},\ and\ \citenamefont {Bottou}}]{WGAN}%
  \BibitemOpen
  \bibfield  {author} {\bibinfo {author} {\bibfnamefont {M.}~\bibnamefont
  {Arjovsky}}, \bibinfo {author} {\bibfnamefont {S.}~\bibnamefont {Chintala}},\
  and\ \bibinfo {author} {\bibfnamefont {L.}~\bibnamefont {Bottou}},\ }\href
  {https://arxiv.org/abs/1701.07875} {\bibinfo {title} {Wasserstein gan}}
  (\bibinfo {year} {2017}),\ \Eprint {https://arxiv.org/abs/1701.07875}
  {arXiv:1701.07875 [stat.ML]} \BibitemShut {NoStop}%
\bibitem [{\citenamefont {Odena}\ \emph {et~al.}(2017)\citenamefont {Odena},
  \citenamefont {Olah},\ and\ \citenamefont {Shlens}}]{gan_modify}%
  \BibitemOpen
  \bibfield  {author} {\bibinfo {author} {\bibfnamefont {A.}~\bibnamefont
  {Odena}}, \bibinfo {author} {\bibfnamefont {C.}~\bibnamefont {Olah}},\ and\
  \bibinfo {author} {\bibfnamefont {J.}~\bibnamefont {Shlens}},\ }\href
  {https://arxiv.org/abs/1610.09585} {\bibinfo {title} {Conditional image
  synthesis with auxiliary classifier gans}} (\bibinfo {year} {2017}),\ \Eprint
  {https://arxiv.org/abs/1610.09585} {arXiv:1610.09585 [stat.ML]} \BibitemShut
  {NoStop}%
\bibitem [{\citenamefont {Cerezo}\ \emph
  {et~al.}(2021{\natexlab{a}})\citenamefont {Cerezo}, \citenamefont {Sone},
  \citenamefont {Volkoff}, \citenamefont {Cincio},\ and\ \citenamefont
  {Coles}}]{bp1}%
  \BibitemOpen
  \bibfield  {author} {\bibinfo {author} {\bibfnamefont {M.}~\bibnamefont
  {Cerezo}}, \bibinfo {author} {\bibfnamefont {A.}~\bibnamefont {Sone}},
  \bibinfo {author} {\bibfnamefont {T.}~\bibnamefont {Volkoff}}, \bibinfo
  {author} {\bibfnamefont {L.}~\bibnamefont {Cincio}},\ and\ \bibinfo {author}
  {\bibfnamefont {P.~J.}\ \bibnamefont {Coles}},\ }\bibfield  {title} {\bibinfo
  {title} {Cost function dependent barren plateaus in shallow parametrized
  quantum circuits},\ }\bibfield  {journal} {\bibinfo  {journal} {Nature
  Communications}\ }\textbf {\bibinfo {volume} {12}},\ \href
  {https://doi.org/10.1038/s41467-021-21728-w} {10.1038/s41467-021-21728-w}
  (\bibinfo {year} {2021}{\natexlab{a}})\BibitemShut {NoStop}%
\bibitem [{\citenamefont {Palma}\ \emph {et~al.}(2021)\citenamefont {Palma},
  \citenamefont {Marvian}, \citenamefont {Trevisan},\ and\ \citenamefont
  {Lloyd}}]{wdistance1}%
  \BibitemOpen
  \bibfield  {author} {\bibinfo {author} {\bibfnamefont {G.~D.}\ \bibnamefont
  {Palma}}, \bibinfo {author} {\bibfnamefont {M.}~\bibnamefont {Marvian}},
  \bibinfo {author} {\bibfnamefont {D.}~\bibnamefont {Trevisan}},\ and\
  \bibinfo {author} {\bibfnamefont {S.}~\bibnamefont {Lloyd}},\ }\bibfield
  {title} {\bibinfo {title} {The quantum wasserstein distance of order 1},\
  }\href {https://doi.org/10.1109/tit.2021.3076442} {\bibfield  {journal}
  {\bibinfo  {journal} {{IEEE} Transactions on Information Theory}\ }\textbf
  {\bibinfo {volume} {67}},\ \bibinfo {pages} {6627} (\bibinfo {year}
  {2021})}\BibitemShut {NoStop}%
\bibitem [{\citenamefont {Leung}\ \emph {et~al.}(2017)\citenamefont {Leung},
  \citenamefont {Abdelhafez}, \citenamefont {Koch},\ and\ \citenamefont
  {Schuster}}]{penalty_grape_1}%
  \BibitemOpen
  \bibfield  {author} {\bibinfo {author} {\bibfnamefont {N.}~\bibnamefont
  {Leung}}, \bibinfo {author} {\bibfnamefont {M.}~\bibnamefont {Abdelhafez}},
  \bibinfo {author} {\bibfnamefont {J.}~\bibnamefont {Koch}},\ and\ \bibinfo
  {author} {\bibfnamefont {D.}~\bibnamefont {Schuster}},\ }\bibfield  {title}
  {\bibinfo {title} {Speedup for quantum optimal control from automatic
  differentiation based on graphics processing units},\ }\href
  {https://doi.org/10.1103/PhysRevA.95.042318} {\bibfield  {journal} {\bibinfo
  {journal} {Phys. Rev. A}\ }\textbf {\bibinfo {volume} {95}},\ \bibinfo
  {pages} {042318} (\bibinfo {year} {2017})}\BibitemShut {NoStop}%
\bibitem [{\citenamefont {Arenz}\ and\ \citenamefont {Rabitz}(2020)}]{qoc_bp}%
  \BibitemOpen
  \bibfield  {author} {\bibinfo {author} {\bibfnamefont {C.}~\bibnamefont
  {Arenz}}\ and\ \bibinfo {author} {\bibfnamefont {H.}~\bibnamefont {Rabitz}},\
  }\bibfield  {title} {\bibinfo {title} {Drawing together control landscape and
  tomography principles},\ }\href {https://doi.org/10.1103/PhysRevA.102.042207}
  {\bibfield  {journal} {\bibinfo  {journal} {Phys. Rev. A}\ }\textbf {\bibinfo
  {volume} {102}},\ \bibinfo {pages} {042207} (\bibinfo {year}
  {2020})}\BibitemShut {NoStop}%
\bibitem [{\citenamefont {Nocedal}\ and\ \citenamefont {Wright}(2006)}]{bfgs}%
  \BibitemOpen
  \bibfield  {author} {\bibinfo {author} {\bibfnamefont {J.}~\bibnamefont
  {Nocedal}}\ and\ \bibinfo {author} {\bibfnamefont {S.~J.}\ \bibnamefont
  {Wright}},\ }\href@noop {} {\emph {\bibinfo {title} {Numerical
  Optimization}}},\ \bibinfo {edition} {2nd}\ ed.\ (\bibinfo  {publisher}
  {Springer},\ \bibinfo {address} {New York, NY, USA},\ \bibinfo {year}
  {2006})\BibitemShut {NoStop}%
\bibitem [{\citenamefont {Byrd}\ \emph {et~al.}(1995)\citenamefont {Byrd},
  \citenamefont {Lu}, \citenamefont {Nocedal},\ and\ \citenamefont
  {Zhu}}]{lbfgs}%
  \BibitemOpen
  \bibfield  {author} {\bibinfo {author} {\bibfnamefont {R.~H.}\ \bibnamefont
  {Byrd}}, \bibinfo {author} {\bibfnamefont {P.}~\bibnamefont {Lu}}, \bibinfo
  {author} {\bibfnamefont {J.}~\bibnamefont {Nocedal}},\ and\ \bibinfo {author}
  {\bibfnamefont {C.}~\bibnamefont {Zhu}},\ }\bibfield  {title} {\bibinfo
  {title} {A limited memory algorithm for bound constrained optimization},\
  }\href {https://doi.org/10.1137/0916069} {\bibfield  {journal} {\bibinfo
  {journal} {SIAM Journal on Scientific Computing}\ }\textbf {\bibinfo {volume}
  {16}},\ \bibinfo {pages} {1190} (\bibinfo {year} {1995})},\ \Eprint
  {https://arxiv.org/abs/https://doi.org/10.1137/0916069}
  {https://doi.org/10.1137/0916069} \BibitemShut {NoStop}%
\bibitem [{\citenamefont {Goerz}\ \emph {et~al.}(2019)\citenamefont {Goerz},
  \citenamefont {Basilewitsch}, \citenamefont {Gago-Encinas}, \citenamefont
  {Krauss}, \citenamefont {Horn}, \citenamefont {Reich},\ and\ \citenamefont
  {Koch}}]{KrotovPython}%
  \BibitemOpen
  \bibfield  {author} {\bibinfo {author} {\bibfnamefont {M.~H.}\ \bibnamefont
  {Goerz}}, \bibinfo {author} {\bibfnamefont {D.}~\bibnamefont {Basilewitsch}},
  \bibinfo {author} {\bibfnamefont {F.}~\bibnamefont {Gago-Encinas}}, \bibinfo
  {author} {\bibfnamefont {M.~G.}\ \bibnamefont {Krauss}}, \bibinfo {author}
  {\bibfnamefont {K.~P.}\ \bibnamefont {Horn}}, \bibinfo {author}
  {\bibfnamefont {D.~M.}\ \bibnamefont {Reich}},\ and\ \bibinfo {author}
  {\bibfnamefont {C.~P.}\ \bibnamefont {Koch}},\ }\bibfield  {title} {\bibinfo
  {title} {Krotov: A {Python} implementation of {Krotov's} method for quantum
  optimal control},\ }\href {https://doi.org/10.21468/SciPostPhys.7.6.080}
  {\bibfield  {journal} {\bibinfo  {journal} {SciPost Phys.}\ }\textbf
  {\bibinfo {volume} {7}},\ \bibinfo {pages} {80} (\bibinfo {year}
  {2019})}\BibitemShut {NoStop}%
\bibitem [{\citenamefont {Frey}\ \emph {et~al.}(2017)\citenamefont {Frey},
  \citenamefont {Mavadia}, \citenamefont {Norris}, \citenamefont {de~Ferranti},
  \citenamefont {Lucarelli}, \citenamefont {Viola},\ and\ \citenamefont
  {Biercuk}}]{slepian_2}%
  \BibitemOpen
  \bibfield  {author} {\bibinfo {author} {\bibfnamefont {V.~M.}\ \bibnamefont
  {Frey}}, \bibinfo {author} {\bibfnamefont {S.}~\bibnamefont {Mavadia}},
  \bibinfo {author} {\bibfnamefont {L.~M.}\ \bibnamefont {Norris}}, \bibinfo
  {author} {\bibfnamefont {W.}~\bibnamefont {de~Ferranti}}, \bibinfo {author}
  {\bibfnamefont {D.}~\bibnamefont {Lucarelli}}, \bibinfo {author}
  {\bibfnamefont {L.}~\bibnamefont {Viola}},\ and\ \bibinfo {author}
  {\bibfnamefont {M.~J.}\ \bibnamefont {Biercuk}},\ }\bibfield  {title}
  {\bibinfo {title} {Application of optimal band-limited control protocols to
  quantum noise sensing},\ }\href {https://doi.org/10.1038/s41467-017-02298-2}
  {\bibfield  {journal} {\bibinfo  {journal} {Nature Communications}\ }\textbf
  {\bibinfo {volume} {8}},\ \bibinfo {pages} {2189} (\bibinfo {year}
  {2017})}\BibitemShut {NoStop}%
\bibitem [{\citenamefont {Shu}\ \emph {et~al.}(2016)\citenamefont {Shu},
  \citenamefont {Ho}, \citenamefont {Xing},\ and\ \citenamefont
  {Rabitz}}]{penalty_rabitz}%
  \BibitemOpen
  \bibfield  {author} {\bibinfo {author} {\bibfnamefont {C.-C.}\ \bibnamefont
  {Shu}}, \bibinfo {author} {\bibfnamefont {T.-S.}\ \bibnamefont {Ho}},
  \bibinfo {author} {\bibfnamefont {X.}~\bibnamefont {Xing}},\ and\ \bibinfo
  {author} {\bibfnamefont {H.}~\bibnamefont {Rabitz}},\ }\bibfield  {title}
  {\bibinfo {title} {Frequency domain quantum optimal control under multiple
  constraints},\ }\href {https://doi.org/10.1103/PhysRevA.93.033417} {\bibfield
   {journal} {\bibinfo  {journal} {Phys. Rev. A}\ }\textbf {\bibinfo {volume}
  {93}},\ \bibinfo {pages} {033417} (\bibinfo {year} {2016})}\BibitemShut
  {NoStop}%
\bibitem [{\citenamefont {Hirose}\ and\ \citenamefont
  {Cappellaro}(2015)}]{toc}%
  \BibitemOpen
  \bibfield  {author} {\bibinfo {author} {\bibfnamefont {M.}~\bibnamefont
  {Hirose}}\ and\ \bibinfo {author} {\bibfnamefont {P.}~\bibnamefont
  {Cappellaro}},\ }\href {https://doi.org/10.48550/ARXIV.1510.06801} {\bibinfo
  {title} {Time-optimal control with finite bandwidth}} (\bibinfo {year}
  {2015})\BibitemShut {NoStop}%
\bibitem [{\citenamefont {Heeres}\ \emph {et~al.}(2017)\citenamefont {Heeres},
  \citenamefont {Reinhold}, \citenamefont {Ofek}, \citenamefont {Frunzio},
  \citenamefont {Jiang}, \citenamefont {Devoret},\ and\ \citenamefont
  {Schoelkopf}}]{penalty_grape_2}%
  \BibitemOpen
  \bibfield  {author} {\bibinfo {author} {\bibfnamefont {R.~W.}\ \bibnamefont
  {Heeres}}, \bibinfo {author} {\bibfnamefont {P.}~\bibnamefont {Reinhold}},
  \bibinfo {author} {\bibfnamefont {N.}~\bibnamefont {Ofek}}, \bibinfo {author}
  {\bibfnamefont {L.}~\bibnamefont {Frunzio}}, \bibinfo {author} {\bibfnamefont
  {L.}~\bibnamefont {Jiang}}, \bibinfo {author} {\bibfnamefont {M.~H.}\
  \bibnamefont {Devoret}},\ and\ \bibinfo {author} {\bibfnamefont {R.~J.}\
  \bibnamefont {Schoelkopf}},\ }\bibfield  {title} {\bibinfo {title}
  {Implementing a universal gate set on a logical qubit encoded in an
  oscillator},\ }\bibfield  {journal} {\bibinfo  {journal} {Nature
  Communications}\ }\textbf {\bibinfo {volume} {8}},\ \href
  {https://doi.org/10.1038/s41467-017-00045-1} {10.1038/s41467-017-00045-1}
  (\bibinfo {year} {2017})\BibitemShut {NoStop}%
\bibitem [{\citenamefont {Lloyd}\ and\ \citenamefont {Montangero}(2014)}]{qsl}%
  \BibitemOpen
  \bibfield  {author} {\bibinfo {author} {\bibfnamefont {S.}~\bibnamefont
  {Lloyd}}\ and\ \bibinfo {author} {\bibfnamefont {S.}~\bibnamefont
  {Montangero}},\ }\bibfield  {title} {\bibinfo {title} {Information
  theoretical analysis of quantum optimal control},\ }\href
  {https://doi.org/10.1103/PhysRevLett.113.010502} {\bibfield  {journal}
  {\bibinfo  {journal} {Phys. Rev. Lett.}\ }\textbf {\bibinfo {volume} {113}},\
  \bibinfo {pages} {010502} (\bibinfo {year} {2014})}\BibitemShut {NoStop}%
\bibitem [{\citenamefont {Rach}\ \emph {et~al.}(2015)\citenamefont {Rach},
  \citenamefont {M\"uller}, \citenamefont {Calarco},\ and\ \citenamefont
  {Montangero}}]{dCRAB}%
  \BibitemOpen
  \bibfield  {author} {\bibinfo {author} {\bibfnamefont {N.}~\bibnamefont
  {Rach}}, \bibinfo {author} {\bibfnamefont {M.~M.}\ \bibnamefont {M\"uller}},
  \bibinfo {author} {\bibfnamefont {T.}~\bibnamefont {Calarco}},\ and\ \bibinfo
  {author} {\bibfnamefont {S.}~\bibnamefont {Montangero}},\ }\bibfield  {title}
  {\bibinfo {title} {Dressing the chopped-random-basis optimization: A
  bandwidth-limited access to the trap-free landscape},\ }\href
  {https://doi.org/10.1103/PhysRevA.92.062343} {\bibfield  {journal} {\bibinfo
  {journal} {Phys. Rev. A}\ }\textbf {\bibinfo {volume} {92}},\ \bibinfo
  {pages} {062343} (\bibinfo {year} {2015})}\BibitemShut {NoStop}%
\bibitem [{\citenamefont {Lucarelli}(2018)}]{slepian1}%
  \BibitemOpen
  \bibfield  {author} {\bibinfo {author} {\bibfnamefont {D.}~\bibnamefont
  {Lucarelli}},\ }\bibfield  {title} {\bibinfo {title} {Quantum optimal control
  via gradient ascent in function space and the time-bandwidth quantum speed
  limit},\ }\href {https://doi.org/10.1103/PhysRevA.97.062346} {\bibfield
  {journal} {\bibinfo  {journal} {Phys. Rev. A}\ }\textbf {\bibinfo {volume}
  {97}},\ \bibinfo {pages} {062346} (\bibinfo {year} {2018})}\BibitemShut
  {NoStop}%
\bibitem [{\citenamefont {Zhang}\ \emph {et~al.}(2017)\citenamefont {Zhang},
  \citenamefont {Pagano}, \citenamefont {Hess}, \citenamefont {Kyprianidis},
  \citenamefont {Becker}, \citenamefont {Kaplan}, \citenamefont {Gorshkov},
  \citenamefont {Gong},\ and\ \citenamefont {Monroe}}]{experimental_1}%
  \BibitemOpen
  \bibfield  {author} {\bibinfo {author} {\bibfnamefont {J.}~\bibnamefont
  {Zhang}}, \bibinfo {author} {\bibfnamefont {G.}~\bibnamefont {Pagano}},
  \bibinfo {author} {\bibfnamefont {P.~W.}\ \bibnamefont {Hess}}, \bibinfo
  {author} {\bibfnamefont {A.}~\bibnamefont {Kyprianidis}}, \bibinfo {author}
  {\bibfnamefont {P.}~\bibnamefont {Becker}}, \bibinfo {author} {\bibfnamefont
  {H.}~\bibnamefont {Kaplan}}, \bibinfo {author} {\bibfnamefont {A.~V.}\
  \bibnamefont {Gorshkov}}, \bibinfo {author} {\bibfnamefont {Z.~X.}\
  \bibnamefont {Gong}},\ and\ \bibinfo {author} {\bibfnamefont
  {C.}~\bibnamefont {Monroe}},\ }\bibfield  {title} {\bibinfo {title}
  {Observation of a many-body dynamical phase transition with a 53-qubit
  quantum simulator},\ }\href {https://doi.org/10.1038/nature24654} {\bibfield
  {journal} {\bibinfo  {journal} {Nature}\ }\textbf {\bibinfo {volume} {551}},\
  \bibinfo {pages} {601} (\bibinfo {year} {2017})}\BibitemShut {NoStop}%
\bibitem [{\citenamefont {R{\o}nnow}\ \emph {et~al.}(2014)\citenamefont
  {R{\o}nnow}, \citenamefont {Wang}, \citenamefont {Job}, \citenamefont
  {Boixo}, \citenamefont {Isakov}, \citenamefont {Wecker}, \citenamefont
  {Martinis}, \citenamefont {Lidar},\ and\ \citenamefont
  {Troyer}}]{experimental_2}%
  \BibitemOpen
  \bibfield  {author} {\bibinfo {author} {\bibfnamefont {T.~F.}\ \bibnamefont
  {R{\o}nnow}}, \bibinfo {author} {\bibfnamefont {Z.}~\bibnamefont {Wang}},
  \bibinfo {author} {\bibfnamefont {J.}~\bibnamefont {Job}}, \bibinfo {author}
  {\bibfnamefont {S.}~\bibnamefont {Boixo}}, \bibinfo {author} {\bibfnamefont
  {S.~V.}\ \bibnamefont {Isakov}}, \bibinfo {author} {\bibfnamefont
  {D.}~\bibnamefont {Wecker}}, \bibinfo {author} {\bibfnamefont {J.~M.}\
  \bibnamefont {Martinis}}, \bibinfo {author} {\bibfnamefont {D.~A.}\
  \bibnamefont {Lidar}},\ and\ \bibinfo {author} {\bibfnamefont
  {M.}~\bibnamefont {Troyer}},\ }\bibfield  {title} {\bibinfo {title} {Defining
  and detecting quantum speedup},\ }\href
  {https://doi.org/10.1126/science.1252319} {\bibfield  {journal} {\bibinfo
  {journal} {Science}\ }\textbf {\bibinfo {volume} {345}},\ \bibinfo {pages}
  {420} (\bibinfo {year} {2014})},\ \Eprint
  {https://arxiv.org/abs/https://www.science.org/doi/pdf/10.1126/science.1252319}
  {https://www.science.org/doi/pdf/10.1126/science.1252319} \BibitemShut
  {NoStop}%
\bibitem [{\citenamefont {McMahon}\ \emph {et~al.}(2016)\citenamefont
  {McMahon}, \citenamefont {Marandi}, \citenamefont {Haribara}, \citenamefont
  {Hamerly}, \citenamefont {Langrock}, \citenamefont {Tamate}, \citenamefont
  {Inagaki}, \citenamefont {Takesue}, \citenamefont {Utsunomiya}, \citenamefont
  {Aihara}, \citenamefont {Byer}, \citenamefont {Fejer}, \citenamefont
  {Mabuchi},\ and\ \citenamefont {Yamamoto}}]{experimental_3}%
  \BibitemOpen
  \bibfield  {author} {\bibinfo {author} {\bibfnamefont {P.~L.}\ \bibnamefont
  {McMahon}}, \bibinfo {author} {\bibfnamefont {A.}~\bibnamefont {Marandi}},
  \bibinfo {author} {\bibfnamefont {Y.}~\bibnamefont {Haribara}}, \bibinfo
  {author} {\bibfnamefont {R.}~\bibnamefont {Hamerly}}, \bibinfo {author}
  {\bibfnamefont {C.}~\bibnamefont {Langrock}}, \bibinfo {author}
  {\bibfnamefont {S.}~\bibnamefont {Tamate}}, \bibinfo {author} {\bibfnamefont
  {T.}~\bibnamefont {Inagaki}}, \bibinfo {author} {\bibfnamefont
  {H.}~\bibnamefont {Takesue}}, \bibinfo {author} {\bibfnamefont
  {S.}~\bibnamefont {Utsunomiya}}, \bibinfo {author} {\bibfnamefont
  {K.}~\bibnamefont {Aihara}}, \bibinfo {author} {\bibfnamefont {R.~L.}\
  \bibnamefont {Byer}}, \bibinfo {author} {\bibfnamefont {M.~M.}\ \bibnamefont
  {Fejer}}, \bibinfo {author} {\bibfnamefont {H.}~\bibnamefont {Mabuchi}},\
  and\ \bibinfo {author} {\bibfnamefont {Y.}~\bibnamefont {Yamamoto}},\
  }\bibfield  {title} {\bibinfo {title} {A fully programmable 100-spin coherent
  ising machine with all-to-all connections},\ }\href
  {https://doi.org/10.1126/science.aah5178} {\bibfield  {journal} {\bibinfo
  {journal} {Science}\ }\textbf {\bibinfo {volume} {354}},\ \bibinfo {pages}
  {614} (\bibinfo {year} {2016})},\ \Eprint
  {https://arxiv.org/abs/https://www.science.org/doi/pdf/10.1126/science.aah5178}
  {https://www.science.org/doi/pdf/10.1126/science.aah5178} \BibitemShut
  {NoStop}%
\bibitem [{\citenamefont {Ovchinnikov}\ \emph {et~al.}(2003)\citenamefont
  {Ovchinnikov}, \citenamefont {Dmitriev}, \citenamefont {Krivnov},\ and\
  \citenamefont {Cheranovskii}}]{LTFIM1}%
  \BibitemOpen
  \bibfield  {author} {\bibinfo {author} {\bibfnamefont {A.~A.}\ \bibnamefont
  {Ovchinnikov}}, \bibinfo {author} {\bibfnamefont {D.~V.}\ \bibnamefont
  {Dmitriev}}, \bibinfo {author} {\bibfnamefont {V.~Y.}\ \bibnamefont
  {Krivnov}},\ and\ \bibinfo {author} {\bibfnamefont {V.~O.}\ \bibnamefont
  {Cheranovskii}},\ }\bibfield  {title} {\bibinfo {title} {Antiferromagnetic
  ising chain in a mixed transverse and longitudinal magnetic field},\ }\href
  {https://doi.org/10.1103/PhysRevB.68.214406} {\bibfield  {journal} {\bibinfo
  {journal} {Phys. Rev. B}\ }\textbf {\bibinfo {volume} {68}},\ \bibinfo
  {pages} {214406} (\bibinfo {year} {2003})}\BibitemShut {NoStop}%
\bibitem [{\citenamefont {Johansson}\ \emph {et~al.}(2012)\citenamefont
  {Johansson}, \citenamefont {Nation},\ and\ \citenamefont {Nori}}]{Qutip}%
  \BibitemOpen
  \bibfield  {author} {\bibinfo {author} {\bibfnamefont {J.}~\bibnamefont
  {Johansson}}, \bibinfo {author} {\bibfnamefont {P.}~\bibnamefont {Nation}},\
  and\ \bibinfo {author} {\bibfnamefont {F.}~\bibnamefont {Nori}},\ }\bibfield
  {title} {\bibinfo {title} {Qutip: An open-source python framework for the
  dynamics of open quantum systems},\ }\href
  {https://doi.org/https://doi.org/10.1016/j.cpc.2012.02.021} {\bibfield
  {journal} {\bibinfo  {journal} {Computer Physics Communications}\ }\textbf
  {\bibinfo {volume} {183}},\ \bibinfo {pages} {1760} (\bibinfo {year}
  {2012})}\BibitemShut {NoStop}%
\bibitem [{\citenamefont {Haah}\ \emph {et~al.}(2017)\citenamefont {Haah},
  \citenamefont {Harrow}, \citenamefont {Ji}, \citenamefont {Wu},\ and\
  \citenamefont {Yu}}]{sampleoptimal_Haah}%
  \BibitemOpen
  \bibfield  {author} {\bibinfo {author} {\bibfnamefont {J.}~\bibnamefont
  {Haah}}, \bibinfo {author} {\bibfnamefont {A.~W.}\ \bibnamefont {Harrow}},
  \bibinfo {author} {\bibfnamefont {Z.}~\bibnamefont {Ji}}, \bibinfo {author}
  {\bibfnamefont {X.}~\bibnamefont {Wu}},\ and\ \bibinfo {author}
  {\bibfnamefont {N.}~\bibnamefont {Yu}},\ }\bibfield  {title} {\bibinfo
  {title} {Sample-optimal tomography of quantum states},\ }\href
  {https://doi.org/10.1109/tit.2017.2719044} {\bibfield  {journal} {\bibinfo
  {journal} {{IEEE} Transactions on Information Theory}\ ,\ \bibinfo {pages}
  {1}} (\bibinfo {year} {2017})}\BibitemShut {NoStop}%
\bibitem [{\citenamefont {Reich}\ \emph {et~al.}(2012)\citenamefont {Reich},
  \citenamefont {Ndong},\ and\ \citenamefont {Koch}}]{Krotov_apd}%
  \BibitemOpen
  \bibfield  {author} {\bibinfo {author} {\bibfnamefont {D.~M.}\ \bibnamefont
  {Reich}}, \bibinfo {author} {\bibfnamefont {M.}~\bibnamefont {Ndong}},\ and\
  \bibinfo {author} {\bibfnamefont {C.~P.}\ \bibnamefont {Koch}},\ }\bibfield
  {title} {\bibinfo {title} {Monotonically convergent optimization in quantum
  control using krotov{\textquotesingle}s method},\ }\href
  {https://doi.org/10.1063/1.3691827} {\bibfield  {journal} {\bibinfo
  {journal} {The Journal of Chemical Physics}\ }\textbf {\bibinfo {volume}
  {136}},\ \bibinfo {pages} {104103} (\bibinfo {year} {2012})}\BibitemShut
  {NoStop}%
\bibitem [{\citenamefont {Bisio}\ \emph {et~al.}(2010)\citenamefont {Bisio},
  \citenamefont {Chiribella}, \citenamefont {D'Ariano}, \citenamefont
  {Facchini},\ and\ \citenamefont {Perinotti}}]{optunitlearning}%
  \BibitemOpen
  \bibfield  {author} {\bibinfo {author} {\bibfnamefont {A.}~\bibnamefont
  {Bisio}}, \bibinfo {author} {\bibfnamefont {G.}~\bibnamefont {Chiribella}},
  \bibinfo {author} {\bibfnamefont {G.~M.}\ \bibnamefont {D'Ariano}}, \bibinfo
  {author} {\bibfnamefont {S.}~\bibnamefont {Facchini}},\ and\ \bibinfo
  {author} {\bibfnamefont {P.}~\bibnamefont {Perinotti}},\ }\bibfield  {title}
  {\bibinfo {title} {Optimal quantum learning of a unitary transformation},\
  }\href {https://doi.org/10.1103/PhysRevA.81.032324} {\bibfield  {journal}
  {\bibinfo  {journal} {Phys. Rev. A}\ }\textbf {\bibinfo {volume} {81}},\
  \bibinfo {pages} {032324} (\bibinfo {year} {2010})}\BibitemShut {NoStop}%
\bibitem [{\citenamefont {Marvian}\ and\ \citenamefont
  {Lloyd}(2024)}]{emulator}%
  \BibitemOpen
  \bibfield  {author} {\bibinfo {author} {\bibfnamefont {I.}~\bibnamefont
  {Marvian}}\ and\ \bibinfo {author} {\bibfnamefont {S.}~\bibnamefont
  {Lloyd}},\ }\href {https://arxiv.org/abs/1606.02734} {\bibinfo {title}
  {Universal quantum emulator}} (\bibinfo {year} {2024}),\ \Eprint
  {https://arxiv.org/abs/1606.02734} {arXiv:1606.02734 [quant-ph]} \BibitemShut
  {NoStop}%
\bibitem [{\citenamefont {Khatri}\ \emph {et~al.}(2019)\citenamefont {Khatri},
  \citenamefont {LaRose}, \citenamefont {Poremba}, \citenamefont {Cincio},
  \citenamefont {Sornborger},\ and\ \citenamefont {Coles}}]{quantumassisted}%
  \BibitemOpen
  \bibfield  {author} {\bibinfo {author} {\bibfnamefont {S.}~\bibnamefont
  {Khatri}}, \bibinfo {author} {\bibfnamefont {R.}~\bibnamefont {LaRose}},
  \bibinfo {author} {\bibfnamefont {A.}~\bibnamefont {Poremba}}, \bibinfo
  {author} {\bibfnamefont {L.}~\bibnamefont {Cincio}}, \bibinfo {author}
  {\bibfnamefont {A.~T.}\ \bibnamefont {Sornborger}},\ and\ \bibinfo {author}
  {\bibfnamefont {P.~J.}\ \bibnamefont {Coles}},\ }\bibfield  {title} {\bibinfo
  {title} {Quantum-assisted quantum compiling},\ }\href
  {https://doi.org/10.22331/q-2019-05-13-140} {\bibfield  {journal} {\bibinfo
  {journal} {{Quantum}}\ }\textbf {\bibinfo {volume} {3}},\ \bibinfo {pages}
  {140} (\bibinfo {year} {2019})}\BibitemShut {NoStop}%
\bibitem [{\citenamefont {Geller}\ \emph {et~al.}(2021)\citenamefont {Geller},
  \citenamefont {Holmes}, \citenamefont {Coles},\ and\ \citenamefont
  {Sornborger}}]{unitlearning_spec}%
  \BibitemOpen
  \bibfield  {author} {\bibinfo {author} {\bibfnamefont {M.~R.}\ \bibnamefont
  {Geller}}, \bibinfo {author} {\bibfnamefont {Z.}~\bibnamefont {Holmes}},
  \bibinfo {author} {\bibfnamefont {P.~J.}\ \bibnamefont {Coles}},\ and\
  \bibinfo {author} {\bibfnamefont {A.}~\bibnamefont {Sornborger}},\ }\bibfield
   {title} {\bibinfo {title} {Experimental quantum learning of a spectral
  decomposition},\ }\href {https://doi.org/10.1103/PhysRevResearch.3.033200}
  {\bibfield  {journal} {\bibinfo  {journal} {Phys. Rev. Res.}\ }\textbf
  {\bibinfo {volume} {3}},\ \bibinfo {pages} {033200} (\bibinfo {year}
  {2021})}\BibitemShut {NoStop}%
\bibitem [{\citenamefont {Heya}\ \emph {et~al.}(2018)\citenamefont {Heya},
  \citenamefont {Suzuki}, \citenamefont {Nakamura},\ and\ \citenamefont
  {Fujii}}]{varational_unitlearning}%
  \BibitemOpen
  \bibfield  {author} {\bibinfo {author} {\bibfnamefont {K.}~\bibnamefont
  {Heya}}, \bibinfo {author} {\bibfnamefont {Y.}~\bibnamefont {Suzuki}},
  \bibinfo {author} {\bibfnamefont {Y.}~\bibnamefont {Nakamura}},\ and\
  \bibinfo {author} {\bibfnamefont {K.}~\bibnamefont {Fujii}},\ }\href@noop {}
  {\bibinfo {title} {Variational quantum gate optimization}} (\bibinfo {year}
  {2018}),\ \Eprint {https://arxiv.org/abs/1810.12745} {arXiv:1810.12745
  [quant-ph]} \BibitemShut {NoStop}%
\bibitem [{\citenamefont {Huang}\ \emph {et~al.}(2022)\citenamefont {Huang},
  \citenamefont {Chen},\ and\ \citenamefont {Preskill}}]{huang_process}%
  \BibitemOpen
  \bibfield  {author} {\bibinfo {author} {\bibfnamefont {H.-Y.}\ \bibnamefont
  {Huang}}, \bibinfo {author} {\bibfnamefont {S.}~\bibnamefont {Chen}},\ and\
  \bibinfo {author} {\bibfnamefont {J.}~\bibnamefont {Preskill}},\ }\href@noop
  {} {\bibinfo {title} {Learning to predict arbitrary quantum processes}}
  (\bibinfo {year} {2022}),\ \Eprint {https://arxiv.org/abs/2210.14894}
  {arXiv:2210.14894 [quant-ph]} \BibitemShut {NoStop}%
\bibitem [{\citenamefont {Gilchrist}\ \emph {et~al.}(2005)\citenamefont
  {Gilchrist}, \citenamefont {Langford},\ and\ \citenamefont
  {Nielsen}}]{min_gatefid_og}%
  \BibitemOpen
  \bibfield  {author} {\bibinfo {author} {\bibfnamefont {A.}~\bibnamefont
  {Gilchrist}}, \bibinfo {author} {\bibfnamefont {N.~K.}\ \bibnamefont
  {Langford}},\ and\ \bibinfo {author} {\bibfnamefont {M.~A.}\ \bibnamefont
  {Nielsen}},\ }\bibfield  {title} {\bibinfo {title} {Distance measures to
  compare real and ideal quantum processes},\ }\href
  {https://doi.org/10.1103/PhysRevA.71.062310} {\bibfield  {journal} {\bibinfo
  {journal} {Phys. Rev. A}\ }\textbf {\bibinfo {volume} {71}},\ \bibinfo
  {pages} {062310} (\bibinfo {year} {2005})}\BibitemShut {NoStop}%
\bibitem [{\citenamefont {Li}\ \emph {et~al.}(2017)\citenamefont {Li},
  \citenamefont {Yang}, \citenamefont {Peng},\ and\ \citenamefont
  {Sun}}]{QCgrape}%
  \BibitemOpen
  \bibfield  {author} {\bibinfo {author} {\bibfnamefont {J.}~\bibnamefont
  {Li}}, \bibinfo {author} {\bibfnamefont {X.}~\bibnamefont {Yang}}, \bibinfo
  {author} {\bibfnamefont {X.}~\bibnamefont {Peng}},\ and\ \bibinfo {author}
  {\bibfnamefont {C.-P.}\ \bibnamefont {Sun}},\ }\bibfield  {title} {\bibinfo
  {title} {Hybrid quantum-classical approach to quantum optimal control},\
  }\bibfield  {journal} {\bibinfo  {journal} {Physical Review Letters}\
  }\textbf {\bibinfo {volume} {118}},\ \href
  {https://doi.org/10.1103/physrevlett.118.150503}
  {10.1103/physrevlett.118.150503} (\bibinfo {year} {2017})\BibitemShut
  {NoStop}%
\bibitem [{\citenamefont {Mitarai}\ \emph {et~al.}(2018)\citenamefont
  {Mitarai}, \citenamefont {Negoro}, \citenamefont {Kitagawa},\ and\
  \citenamefont {Fujii}}]{qcl}%
  \BibitemOpen
  \bibfield  {author} {\bibinfo {author} {\bibfnamefont {K.}~\bibnamefont
  {Mitarai}}, \bibinfo {author} {\bibfnamefont {M.}~\bibnamefont {Negoro}},
  \bibinfo {author} {\bibfnamefont {M.}~\bibnamefont {Kitagawa}},\ and\
  \bibinfo {author} {\bibfnamefont {K.}~\bibnamefont {Fujii}},\ }\bibfield
  {title} {\bibinfo {title} {Quantum circuit learning},\ }\href
  {https://doi.org/10.1103/PhysRevA.98.032309} {\bibfield  {journal} {\bibinfo
  {journal} {Phys. Rev. A}\ }\textbf {\bibinfo {volume} {98}},\ \bibinfo
  {pages} {032309} (\bibinfo {year} {2018})}\BibitemShut {NoStop}%
\bibitem [{\citenamefont {Schuld}\ \emph {et~al.}(2019)\citenamefont {Schuld},
  \citenamefont {Bergholm}, \citenamefont {Gogolin}, \citenamefont {Izaac},\
  and\ \citenamefont {Killoran}}]{psr2}%
  \BibitemOpen
  \bibfield  {author} {\bibinfo {author} {\bibfnamefont {M.}~\bibnamefont
  {Schuld}}, \bibinfo {author} {\bibfnamefont {V.}~\bibnamefont {Bergholm}},
  \bibinfo {author} {\bibfnamefont {C.}~\bibnamefont {Gogolin}}, \bibinfo
  {author} {\bibfnamefont {J.}~\bibnamefont {Izaac}},\ and\ \bibinfo {author}
  {\bibfnamefont {N.}~\bibnamefont {Killoran}},\ }\bibfield  {title} {\bibinfo
  {title} {Evaluating analytic gradients on quantum hardware},\ }\bibfield
  {journal} {\bibinfo  {journal} {Physical Review A}\ }\textbf {\bibinfo
  {volume} {99}},\ \href {https://doi.org/10.1103/physreva.99.032331}
  {10.1103/physreva.99.032331} (\bibinfo {year} {2019})\BibitemShut {NoStop}%
\bibitem [{\citenamefont {Daskalakis}\ \emph {et~al.}(2020)\citenamefont
  {Daskalakis}, \citenamefont {Skoulakis},\ and\ \citenamefont
  {Zampetakis}}]{minmax_comp}%
  \BibitemOpen
  \bibfield  {author} {\bibinfo {author} {\bibfnamefont {C.}~\bibnamefont
  {Daskalakis}}, \bibinfo {author} {\bibfnamefont {S.}~\bibnamefont
  {Skoulakis}},\ and\ \bibinfo {author} {\bibfnamefont {M.}~\bibnamefont
  {Zampetakis}},\ }\href {https://arxiv.org/abs/2009.09623} {\bibinfo {title}
  {The complexity of constrained min-max optimization}} (\bibinfo {year}
  {2020}),\ \Eprint {https://arxiv.org/abs/2009.09623} {arXiv:2009.09623
  [cs.CC]} \BibitemShut {NoStop}%
\bibitem [{\citenamefont {Flammia}\ \emph {et~al.}(2012)\citenamefont
  {Flammia}, \citenamefont {Gross}, \citenamefont {Liu},\ and\ \citenamefont
  {Eisert}}]{qst_adap1}%
  \BibitemOpen
  \bibfield  {author} {\bibinfo {author} {\bibfnamefont {S.~T.}\ \bibnamefont
  {Flammia}}, \bibinfo {author} {\bibfnamefont {D.}~\bibnamefont {Gross}},
  \bibinfo {author} {\bibfnamefont {Y.-K.}\ \bibnamefont {Liu}},\ and\ \bibinfo
  {author} {\bibfnamefont {J.}~\bibnamefont {Eisert}},\ }\bibfield  {title}
  {\bibinfo {title} {Quantum tomography via compressed sensing: error bounds,
  sample complexity and efficient estimators},\ }\href
  {https://doi.org/10.1088/1367-2630/14/9/095022} {\bibfield  {journal}
  {\bibinfo  {journal} {New Journal of Physics}\ }\textbf {\bibinfo {volume}
  {14}},\ \bibinfo {pages} {095022} (\bibinfo {year} {2012})}\BibitemShut
  {NoStop}%
\bibitem [{\citenamefont {Chen}\ \emph {et~al.}(2022)\citenamefont {Chen},
  \citenamefont {Huang}, \citenamefont {Li},\ and\ \citenamefont
  {Liu}}]{QST_incoher}%
  \BibitemOpen
  \bibfield  {author} {\bibinfo {author} {\bibfnamefont {S.}~\bibnamefont
  {Chen}}, \bibinfo {author} {\bibfnamefont {B.}~\bibnamefont {Huang}},
  \bibinfo {author} {\bibfnamefont {J.}~\bibnamefont {Li}},\ and\ \bibinfo
  {author} {\bibfnamefont {A.}~\bibnamefont {Liu}},\ }\href
  {https://arxiv.org/abs/2204.07155} {\bibinfo {title} {Tight bounds for
  quantum state certification with incoherent measurements}} (\bibinfo {year}
  {2022}),\ \Eprint {https://arxiv.org/abs/2204.07155} {arXiv:2204.07155
  [quant-ph]} \BibitemShut {NoStop}%
\bibitem [{\citenamefont {Childs}\ and\ \citenamefont {Su}(2019)}]{trotter1}%
  \BibitemOpen
  \bibfield  {author} {\bibinfo {author} {\bibfnamefont {A.~M.}\ \bibnamefont
  {Childs}}\ and\ \bibinfo {author} {\bibfnamefont {Y.}~\bibnamefont {Su}},\
  }\bibfield  {title} {\bibinfo {title} {Nearly optimal lattice simulation by
  product formulas},\ }\bibfield  {journal} {\bibinfo  {journal} {Physical
  Review Letters}\ }\textbf {\bibinfo {volume} {123}},\ \href
  {https://doi.org/10.1103/physrevlett.123.050503}
  {10.1103/physrevlett.123.050503} (\bibinfo {year} {2019})\BibitemShut
  {NoStop}%
\bibitem [{\citenamefont {Chakrabarti}\ and\ \citenamefont
  {Rabitz}(2007)}]{Chakrabarti_2007}%
  \BibitemOpen
  \bibfield  {author} {\bibinfo {author} {\bibfnamefont {R.}~\bibnamefont
  {Chakrabarti}}\ and\ \bibinfo {author} {\bibfnamefont {H.}~\bibnamefont
  {Rabitz}},\ }\bibfield  {title} {\bibinfo {title} {Quantum control
  landscapes},\ }\href {https://doi.org/10.1080/01442350701633300} {\bibfield
  {journal} {\bibinfo  {journal} {International Reviews in Physical Chemistry}\
  }\textbf {\bibinfo {volume} {26}},\ \bibinfo {pages} {671} (\bibinfo {year}
  {2007})}\BibitemShut {NoStop}%
\bibitem [{\citenamefont {Brif}\ \emph {et~al.}(2010)\citenamefont {Brif},
  \citenamefont {Chakrabarti},\ and\ \citenamefont {Rabitz}}]{Brif_2010}%
  \BibitemOpen
  \bibfield  {author} {\bibinfo {author} {\bibfnamefont {C.}~\bibnamefont
  {Brif}}, \bibinfo {author} {\bibfnamefont {R.}~\bibnamefont {Chakrabarti}},\
  and\ \bibinfo {author} {\bibfnamefont {H.}~\bibnamefont {Rabitz}},\
  }\bibfield  {title} {\bibinfo {title} {Control of quantum phenomena: past,
  present and future},\ }\href {https://doi.org/10.1088/1367-2630/12/7/075008}
  {\bibfield  {journal} {\bibinfo  {journal} {New Journal of Physics}\ }\textbf
  {\bibinfo {volume} {12}},\ \bibinfo {pages} {075008} (\bibinfo {year}
  {2010})}\BibitemShut {NoStop}%
\bibitem [{\citenamefont {Ferrie}(2014)}]{self_guided_qst}%
  \BibitemOpen
  \bibfield  {author} {\bibinfo {author} {\bibfnamefont {C.}~\bibnamefont
  {Ferrie}},\ }\bibfield  {title} {\bibinfo {title} {Self-guided quantum
  tomography},\ }\bibfield  {journal} {\bibinfo  {journal} {Physical Review
  Letters}\ }\textbf {\bibinfo {volume} {113}},\ \href
  {https://doi.org/10.1103/physrevlett.113.190404}
  {10.1103/physrevlett.113.190404} (\bibinfo {year} {2014})\BibitemShut
  {NoStop}%
\bibitem [{\citenamefont {Cerezo}\ \emph
  {et~al.}(2021{\natexlab{b}})\citenamefont {Cerezo}, \citenamefont
  {Arrasmith}, \citenamefont {Babbush}, \citenamefont {Benjamin}, \citenamefont
  {Endo}, \citenamefont {Fujii}, \citenamefont {McClean}, \citenamefont
  {Mitarai}, \citenamefont {Yuan}, \citenamefont {Cincio},\ and\ \citenamefont
  {Coles}}]{VQA_review}%
  \BibitemOpen
  \bibfield  {author} {\bibinfo {author} {\bibfnamefont {M.}~\bibnamefont
  {Cerezo}}, \bibinfo {author} {\bibfnamefont {A.}~\bibnamefont {Arrasmith}},
  \bibinfo {author} {\bibfnamefont {R.}~\bibnamefont {Babbush}}, \bibinfo
  {author} {\bibfnamefont {S.~C.}\ \bibnamefont {Benjamin}}, \bibinfo {author}
  {\bibfnamefont {S.}~\bibnamefont {Endo}}, \bibinfo {author} {\bibfnamefont
  {K.}~\bibnamefont {Fujii}}, \bibinfo {author} {\bibfnamefont {J.~R.}\
  \bibnamefont {McClean}}, \bibinfo {author} {\bibfnamefont {K.}~\bibnamefont
  {Mitarai}}, \bibinfo {author} {\bibfnamefont {X.}~\bibnamefont {Yuan}},
  \bibinfo {author} {\bibfnamefont {L.}~\bibnamefont {Cincio}},\ and\ \bibinfo
  {author} {\bibfnamefont {P.~J.}\ \bibnamefont {Coles}},\ }\bibfield  {title}
  {\bibinfo {title} {Variational quantum algorithms},\ }\href
  {https://doi.org/10.1038/s42254-021-00348-9} {\bibfield  {journal} {\bibinfo
  {journal} {Nature Reviews Physics}\ }\textbf {\bibinfo {volume} {3}},\
  \bibinfo {pages} {625} (\bibinfo {year} {2021}{\natexlab{b}})}\BibitemShut
  {NoStop}%
\bibitem [{\citenamefont {Tannor}(2002)}]{Krotov2}%
  \BibitemOpen
  \bibfield  {author} {\bibinfo {author} {\bibfnamefont {S.}~\bibnamefont
  {Tannor}},\ }\bibfield  {title} {\bibinfo {title} {Loading a bose-einstein
  condensate onto an optical lattice: An application of optimal control theory
  to the nonlinear schrödinger equation},\ }\href
  {https://doi.org/10.1103/PhysRevA.66.053619} {\bibfield  {journal} {\bibinfo
  {journal} {Phys. Rev. A}\ }\textbf {\bibinfo {volume} {66}} (\bibinfo {year}
  {2002})}\BibitemShut {NoStop}%
\bibitem [{\citenamefont {J\"ager}\ \emph {et~al.}(2014)\citenamefont
  {J\"ager}, \citenamefont {Reich}, \citenamefont {Goerz}, \citenamefont
  {Koch},\ and\ \citenamefont {Hohenester}}]{krotov_vs_grape}%
  \BibitemOpen
  \bibfield  {author} {\bibinfo {author} {\bibfnamefont {G.}~\bibnamefont
  {J\"ager}}, \bibinfo {author} {\bibfnamefont {D.~M.}\ \bibnamefont {Reich}},
  \bibinfo {author} {\bibfnamefont {M.~H.}\ \bibnamefont {Goerz}}, \bibinfo
  {author} {\bibfnamefont {C.~P.}\ \bibnamefont {Koch}},\ and\ \bibinfo
  {author} {\bibfnamefont {U.}~\bibnamefont {Hohenester}},\ }\bibfield  {title}
  {\bibinfo {title} {Optimal quantum control of bose-einstein condensates in
  magnetic microtraps: Comparison of gradient-ascent-pulse-engineering and
  krotov optimization schemes},\ }\href
  {https://doi.org/10.1103/PhysRevA.90.033628} {\bibfield  {journal} {\bibinfo
  {journal} {Phys. Rev. A}\ }\textbf {\bibinfo {volume} {90}},\ \bibinfo
  {pages} {033628} (\bibinfo {year} {2014})}\BibitemShut {NoStop}%
\end{thebibliography}%

\appendix
\section{Training a generator and a discriminator using GRAPE}\label{training_details}
{
We give a detailed explanation of how the generator and the discriminator are trained with respect to the GRAPE method, i.e. line $6$ and $11$ in Algorithm \ref{algorithm1}. Here, we focus on the minimax cost function we used to obtain our main numerical results in Section \ref{numerical_results}:
\begin{equation}\label{minimax}
    \min_{\{g\}}\max_{\{d\}} \lvert \Tr(D(\{d\})(\sigma - \rho(\{g\}))\rvert^2,
\end{equation} where $\{g\}$ and $\{d\}$ indicate control fields for the generator and the discriminator, respectively. Let $N$, $T$, $\Delta t$, and $H(t) = H_0 + \sum_{k} \epsilon_k(t)H_k$ denote the Trotter step, the evolution time, $T/N$, and the time-dependent Hamiltonian given to either the generator or the discriminator with control fields $\epsilon_k(t)$. The GRAPE algorithm discretizes the time domain into small pieces and approximates the Hamiltonian to be time-independent. We denote $H(t_j) = H_0 + \sum_k \epsilon_k(t_j) H_k$ by the time-independent Hamiltonian within interval $[t_j, t_j + \Delta t]$.}
\subsection{Training a generator}
{The GRAPE method requires a gradient of the cost function with respect to the control field $g_i$ at a time grid $t_j$ $\forall i\in [1,m],\forall j \in [1, N]$, which can be expressed as
\begin{align}
    & \frac{\partial }{\partial g_i(t_j)} \lvert \Tr(D(\sigma - \rho(\{g\}))) \rvert^2 \nonumber \\
    &= 2\lvert \Tr(D(\sigma - \rho(\{g\}))) \rvert \frac{\partial }{\partial g_i(t_j)} (\Tr(D(\sigma - \rho(\{g\})))) \nonumber \\
    &= 2\lvert \Tr(D(\sigma - \rho(\{g\}))) \rvert \frac{\partial }{\partial g_i(t_j)} \Tr(D \rho(\{g\})),
\end{align} where the last equality is because the target state $\sigma$ is independent of the generator's control fields $g_i(t_j)$ $\forall i,j$. In the GRAPE method, we usually approximate the gradient to the first order of $\Delta t$ \cite{GRAPE}:
\begin{equation}
    \approx 2\lvert \Tr(D(\sigma - \rho)) \big(-i \Delta t  \Tr(D_j[H_i, \rho_j]) \big), \label{grape_grad}
\end{equation} where $D_j = U^\dagger(t_{j+1}) \dotsc U^\dagger(t_N) D U(t_N) \dotsc U(t_{j+1})$, $\rho_j = U(t_j) \dotsc U(t_1) \rho_0 U^\dagger(t_1) \dotsc U^\dagger(t_N)$, and $U(t_j) = \exp(-i \Delta t H(t_j))$. Hence, training the generator consists of two steps. First, we calculate and store $\rho_j$ $\forall j \in [1,N]$, and similarly for $D_j$ $\forall j \in [1,N]$. Then, we update control fields for all time grids by calculating Eq.\eqref{grape_grad} $\forall i,j$. The update procedure is repeated until termination criterion is achieved, which is when either the norm of the gradient or the objective function $\lvert \Tr(D(\rho-\sigma)) \rvert^2$ is less than $10^{-5}$. Although these calculations are performed classically, it is possible to reformulate the gradient equation Eq.\eqref{grape_grad} in a manner that allows for the use of quantum computers to compute gradients, as described in Eq.\eqref{grad_gen} \cite{QCgrape}.}

\subsection{Training a discriminator}
{Recall that the discriminator aims to find a sequence of control fields $\{ d\} $ that generate a unitary transformation $U(\{ d\})$ before a fixed measurement operator $D_0$, which maximizes Eq.\eqref{minimax}. This is equivalent to
\begin{equation}
    \max_{\{d\}} \Tr(U^\dagger(\{d\}) D_0 U(\{d\})(\rho - \sigma)) \\.
\end{equation} By cyclic property of trace, the objective function is identical to 
\begin{equation}
    \Tr(D_0 U(\{d\})(\rho - \sigma)U^\dagger(\{d\})),
\end{equation} which can be viewed as an expectation value of the fixed measurement operator $D_0$ with respect to a time-evolved state from an initial state of $\rho - \sigma$. Hence, similar to the previous section, we can approximate the gradient to the first order of $\Delta t$:
\begin{align}\label{grad_dis}
    &\frac{\partial }{\partial d_i(t_j)} \Tr(D_0 U(\{d\})(\rho - \sigma)U^\dagger(\{d\})) \\
    &\quad \approx -i \Delta t \Tr(D_j[H_i,(\rho-\sigma)_j]),
\end{align} where $D_j = U^\dagger(t_{j+1}) \dotsc U^\dagger(t_N) D U(t_N) \dotsc U(t_{j+1})$, $\rho_j = U(t_j) \dotsc U(t_1) (\rho - \sigma) U^\dagger(t_1) \dotsc U^\dagger(t_N)$, and $U(t_j) = \exp(-i \Delta t H(t_j))$. Hence, similar to the generator's turn, training the discriminator consists of two steps. First we calculate and store $(\rho-\sigma)_j$ $\forall j \in [1,N]$ and similarly for $D_j$ $\forall j \in [1,N]$. We then update control fields for all time grids by calculating Eq.\eqref{grad_dis} $\forall i, j$. The update procedure is repeated until termination criteria is achieved, which is the norm of the gradient is less than $10^{-5}$. It is worth noting that the convergence criteria for the discriminator should be based solely on the norm of the gradient, rather than the value of the cost function. This is because, in practice, it is not feasible to obtain knowledge of the extreme value of the cost function. We also remark that, like the generator, the gradient of the discriminator can also be computed using quantum computers. This can be accomplished by substituting $\rho - \sigma$ for $\rho_0$ in Eq.\eqref{grad_gen}.}

\subsection{Evaluating fidelity between generated and target states}
{After each round of the minimax game between the generator and the discriminator is completed, it is important to calculate the fidelity between the generated state and the target state to determine the convergence of the HQuGAN algorithm. The discriminator's cost function can be used to obtain this fidelity: if the discriminator successfully maximizes its cost function, it will ultimately become equivalent to the trace distance between the two quantum states. However, it should be noted that there is no guarantee that the discriminator will converge to the optimal Helstrom measurement operator that fully maximizes its cost function. Nonetheless, as mentioned in Section.\ref{numerical_results}, numerical evidence suggests that the discriminator always approaches this optimal measurement operator. This enables it to converge to the trace distance between the two quantum states, which can be utilized to calculate the fidelity (for pure states). When working with mixed states, the trace distance can be used as the figure of merit instead of fidelity.}

\section{Using Hybrid Cost Functions}\label{apa}
In this section, we suggest a method to speed up the convergence of the proposed HQuGAN by using two different cost functions. We first observe that the measurement operator $D$ that maximizes $\text{Tr}(D(\rho - \sigma))$ can be chosen to be proportional to $\rho - \sigma$, if $\rho$ and $\sigma$ are pure quantum states. 

\begin{table}[t!]
\begin{tabular}{|p{0.5cm}|p{2.5cm}|p{2.5cm}|p{0.8cm}|p{0.8cm}| }
\hline
$n$ & Iteration (GHZ) & Iteration (Haar) & $T$  & $N$   \\ \hline
\hline
5 & 4&  $7.44 \pm 3.11   $     & 20 & 200 \\ \hline
6 & 5 & $6.24 \pm 4.32   $      &  30 & 300  \\ \hline
7 & 4 & $6.26 \pm 3.67   $       &  40 & 400  \\ \hline
8 & 4 &  $5.55 \pm 10.29$           &  50 & 500  \\ \hline
\end{tabular}
\caption{\label{tab:table_combinations} {\textbf{Learning Haar random and generalized GHZ states using HQuGANs with a hybrid cost function.} The table summarizes the number of iterations required to learn generalized $n$-qubit Haar random and GHZ states using the HQuGAN with a hybrid cost function. We generate $50$ Haar random states and report the mean and the standard deviation of number of iterations. Using the hybrid cost function requires substantially fewer iterations than using a single cost function (see Table \ref{tab:table1} and \ref{tab:haar}). The generator is optimized using the GRAPE algorithm. In the table, $T$ and $N$ refer to the evolution time and the Trotter number respectively.}}
\end{table}

The constant of proportionality  depends on the Schatten $p$-norm constraint on $D$ (i.e. $||D||_p \leq 1$) and a positive eigenvalue of $\rho - \sigma$. 
\begin{lemma}\label{theorem1}
For any $p$ and any two pure states $\rho$ and $\sigma$, the maximum of the cost function
\begin{equation}\label{lem1}
\max_{||D||_p\leq 1}\Tr(D(\rho-\sigma)) 
\end{equation}
can be achieved by the following operator
\begin{equation}\label{helstrom}
    D^* = 2^{-\frac{1}{p}} (\ket{P+}\bra{P+}-\ket{P-}\bra{P-}) = \frac{2^{-\frac{1}{p}}}{\lambda}(\rho - \sigma),
\end{equation} where $\ket{P\pm}\bra{P\pm}$ are projection operators onto positive and negative eigenspaces of $\rho - \sigma$ respectively, and $\pm \lambda$ are eigenvalues of $\rho - \sigma$.
\end{lemma}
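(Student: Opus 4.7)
The plan is to recognize Eq.~\eqref{lem1} as the standard Schatten-norm duality problem applied to the very low-rank Hermitian operator $A := \rho - \sigma$. First I would pin down the eigenstructure of $A$: since $\rho$ and $\sigma$ are rank-one and $\Tr(A)=0$, the operator $A$ is Hermitian, traceless, and supported entirely on the two-dimensional subspace $\mathrm{span}\{\ket{\psi},\ket{\phi}\}$. It therefore has at most two nonzero eigenvalues, which by tracelessness must be equal in magnitude and opposite in sign, i.e.\ $\pm\lambda$ with $\lambda = \sqrt{1-|\braket{\psi|\phi}|^2}$. This yields the spectral decomposition $A = \lambda \ket{P_+}\bra{P_+} - \lambda \ket{P_-}\bra{P_-}$ already appearing in the statement.

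Next I would invoke H\"older's inequality for Schatten norms: for conjugate exponents $1/p + 1/q = 1$ and any Hermitian $A$,
\begin{equation}
\max_{\Vert D \Vert_p \leq 1} \Tr(DA) \;=\; \Vert A \Vert_q,
\end{equation}
with equality achieved by any $D$ whose absolute-value spectrum is proportional to $|A|^{q-1}$ and whose signs align with those of $A$. Using the spectrum $(\lambda,\lambda,0,\ldots,0)$ of $|A|$, the right-hand side evaluates to $(2\lambda^q)^{1/q} = 2^{1/q}\lambda = 2^{1-1/p}\lambda$, and the equality condition forces the maximizer $D$ to be supported exactly on $\mathrm{span}\{\ket{P_+},\ket{P_-}\}$ with two equal-magnitude nonzero eigenvalues of opposite signs matching those of $A$.

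Finally I would verify that the stated $D^\ast$ satisfies both the constraint and the optimality condition. Its two nonzero eigenvalues are $\pm 2^{-1/p}$, so $\Vert D^\ast \Vert_p = (2\cdot 2^{-1})^{1/p} = 1$, saturating the constraint; a direct calculation gives $\Tr(D^\ast A) = 2^{-1/p}(\lambda-(-\lambda)) = 2^{1-1/p}\lambda$, matching the H\"older upper bound. The second equality $D^\ast = (2^{-1/p}/\lambda)A$ then follows immediately from the spectral decomposition of $A$. I do not foresee a serious technical obstacle; the only delicate point is the degenerate case $\rho = \sigma$ (so $\lambda = 0$), in which the second form of $D^\ast$ in Eq.~\eqref{helstrom} is undefined but the maximum is trivially zero and is attained by any feasible $D$, so this corner case can be handled with a single remark.
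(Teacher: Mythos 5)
Your proposal is correct and follows essentially the same route as the paper's proof: invoke H\"older's inequality for Schatten norms with conjugate exponents $1/p+1/q=1$, compute $\Vert \rho-\sigma\Vert_q = 2^{1/q}\lambda$ and $\Vert D^*\Vert_p = 1$, and verify that $\Tr(D^*(\rho-\sigma)) = 2^{1-1/p}\lambda$ saturates the bound. Your additional remarks --- the explicit derivation of the $\pm\lambda$ spectrum from tracelessness and rank, and the treatment of the degenerate case $\rho=\sigma$ --- are sound refinements of details the paper leaves implicit, but the core argument is identical.
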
 
\begin{proof}
It is straightforward to check that the proposed  $D^*$ saturates the upperbound posed by H\"{o}lder's inequality:
\begin{equation}
    \Tr(AB) = ||A||_p ||B||_q , 
\end{equation}
for any $p$ and $q$ satisfying $ 1-\frac{1}{p} = \frac{1}{q}$. First using the definitions we have $||\rho-\sigma||_q = 2^{\frac{1}{q}}\lambda$ and 
\begin{equation}
    ||D^*||_p = (|2^{-\frac{1}{p}}|^{p}+|2^{-\frac{1}{p}}|^p)^{1/p} = 1.
\end{equation}
The proof is then completed by noting that 
\begin{align}
    \Tr(D^*(\rho-\sigma)) &= \Tr(2^{-\frac{1}{p}} (\ket{P+}\bra{P+}-\ket{P-}\bra{P-}) \nonumber \\
    & \cdot \lambda (\ket{P+}\bra{P+} - \ket{P-}\bra{P-})) \nonumber\\
    &= 2^{-\frac{1}{p}} 2 \lambda \nonumber\\
    &= 1\cdot 2^{1-\frac{1}{p}} \lambda\nonumber \\
    &= ||D^*||_p ||\rho-\sigma||_{\frac{1}{1-\frac{1}{p}}}. 
\end{align}

\end{proof}

Now consider the HQuGAN optimizing a minimax game described in Eq.\eqref{QGAN},
\begin{equation}
    \min_{\rho}\max_{D} \text{Tr}(D(\rho - \sigma)),
\end{equation} where a target state $\sigma$ and an initial choice for $D$ are arbitrarily chosen. We consider the optimal discriminator that analytically calculates her operator via Eq.\eqref{helstrom}. Here we argue that, after two rounds of interactions between two players, the generator will output a quantum state with high fidelity to the target state.

\begin{figure}[t]
    \centering
    \includegraphics[width=8.6cm]{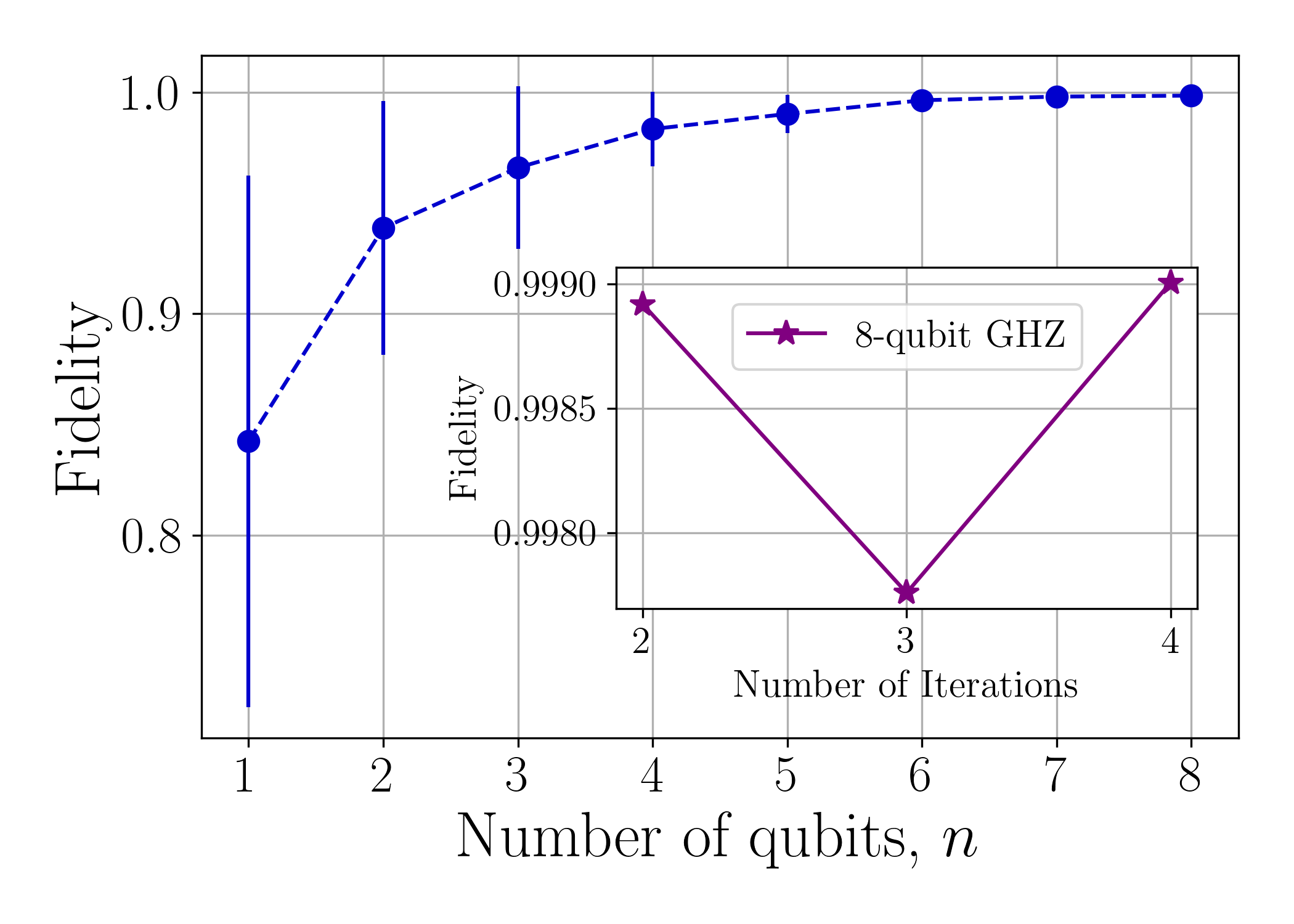}
    \caption{{\textbf{Fidelity between the generator's state and the target state after two rounds of the minimax game.} We plot the mean and the standard deviation of the fidelity between the generator's state and the target state after two rounds of the minimax game for learning $50$ Haar random states (blue). As expected, the generator generates a state with a high fidelity with the target state after two rounds of the  game, and this fidelity approaches $1$ as the system size increases. For example, when generating $8$-qubit GHZ state (purple), the generator after the second iteration already gives a very high fidelity $\approx 0.999$. Thus, only two more iterations are needed to achieve a desired fidelity $>0.999$.}}
    \label{fig:hybrid}
    \vspace{0ex}
 \end{figure} 

In the first round, the generator minimizes $\Tr(D\rho_1)$, which will output a random quantum state $\rho_1$ as $D$ is initialized randomly. In the next round, the optimal discriminator chooses $D_1 \propto (\rho_1 - \sigma)$ as shown in Lemma \ref{theorem1}. The generator then tries to find $\rho_2$ such that $\Tr(D_1 \rho_2) \propto \Tr((\rho_1 - \sigma)\rho_2)$ is minimized. Since $\rho_1$ and $\sigma$ are random quantum states, the fidelity between the two states would generically be exponentially small. Therefore, assuming that the generator always outputs a pure state, $\rho_2 \approx \sigma$. As a consequence, the generator at the $2$nd round of the algorithm already gives a quantum state that achieves a high fidelity with the target state. To avoid the described mode collapse in future rounds, and improve the fidelity to any desired accuracy, we then switch our cost function to Eq.\eqref{QGAN_abs}. Subsequently, we can combine two different cost functions to make the algorithm converge faster.

{We numerically confirm that using such a protocol, the HQuGAN successfully generates up to $8$-qubit Haar random and GHZ states with substantially fewer iterations. As shown in Table \ref{tab:table_combinations}, it only took $4$ iterations to generate the $8$-qubit GHZ state whereas using a single cost function of Eq.\eqref{QGAN_abs} described in the main paragraph (see Table \ref{tab:table1}) took $\sim 90$ iterations to generate the $6$-qubit GHZ state. Furthermore, it only took on average $\sim7$ iterations to generate up to $7$-qubits Haar random quantum states, whereas using a single cost function took $\sim90$ iterations for generating $6$-qubit Haar random states. In fact, Fig.\ref{fig:hybrid} shows the mean and the standard deviation of the fidelity between the generator's state and the target state after two rounds of the minimax game for learning $50$ Haar random states. The fidelity approaches to $1$ as the system size increases. As previously stated, the primary reason for this is that two random states, $\rho_1$ and $\sigma$, in general have small overlaps, which decreases exponentially as the system size increases. Therefore, the generator in the second round of the game will have more support on $\sigma$ rather than $\rho_1$ as the system size grows. In fact, as shown in the figure, the fidelity for the $6$, $7$, $8$-qubit system is approximately $0.9965$, $0.9976$, and $0.9986$ respectively. As a result, by switching the cost function to Eq.\eqref{QGAN_abs}, it only takes a few additional iterations to achieve the desired fidelity of $0.999$.}

When the discriminator uses quantum optimal control methods to find the measurement, we can still achieve the advantage by initializing $D_0$ as a rank-2 matrix because the rank-2 Helstrom measurement is unique ({we have numerically validated this in Fig.\ref{fig:3qubit_comparison}.}) However, if $D_0$ has a higher rank, the aforementioned advantage cannot be necessarily achieved. This is due to the fact that the optimal choice of the discriminator will have a higher rank than two, and therefore the generator's state in the next round can have support on eigenvectors of the discriminator that are not proportional to $\rho_1 - \sigma$. {It is also worth noting that if either the generator's state or the target state is a mixed state, then the hybrid approach no longer accelerates convergence, as the optimal Helstrom measurement must have a rank greater than 2.}

\section{Mode Collapse}\label{modecollapse}
A QuGAN might not always converge to a good Nash equilibrium point due to the mode collapse phenomenon. In this section, we review this issue raised in Ref.~\cite{QGAN_Ent} and study how alternative cost functions can remedy this problem. To be consistent with the notations used in Ref.~\cite{QGAN_Ent}, we assume all discriminator operators $D$ are POVMs (only in this section).

\subsection{Mode Collapse Example}
Below, we review a concrete example of mode collapse presented in \cite{QGAN_Ent}, by considering the minimax game
\begin{equation}
\label{cost1}
    \min_{\theta_g}\max_{D}\text{Tr}(D(\sigma-\rho(\theta_g))).
\end{equation} 
It is important to note that for this cost function, the generator always tries to align his state $\rho$ with the discriminator's previous operator $D$, independent of the target state $\sigma$, and therefore can overshoot. More concretely, starting from the following generator's initial state $\rho$ and target state $\sigma$:
\begin{align}
    \sigma = \frac{1 + \cos(\pi/6)X + \sin(\pi/6)Y}{2}, \\
    \rho = \frac{1 + \cos(\pi/6)X - \sin(\pi/6)Y}{2}. \label{mcrho}
\end{align}
the players will follow the following steps: \\

\noindent \textbf{Step 1 (Discriminator)}: Since $\sigma - \rho = Y/2$, the optimal Helstrom measurement operator is $D_1 = P^+(\sigma-\rho) = (1+Y)/2$. \\

\noindent \textbf{Step 2 (Generator)}: Given $D_1$, the generator tries to minimize \eqref{cost1}, or equivalently to maximize $\text{Tr}(D_1\rho(\theta_g))$. If we set $\rho_1 = (1 + a_x X + a_y Y + a_z Z)/2$, then $\Tr(D_1\rho_1) = 1/2 + a_y/2$ is maximized when $a_y = 1$, which yields $\rho_1 = D_1 = (1+Y)/2$ as the unique solution. \\

\noindent \textbf{Step 3 (Discriminator)}: The optimal Helstrom measurement operator is $D_2 = P^+(\sigma-\rho_1) = \rho$. \\

\noindent \textbf{Step 4 (Generator)}: Again, the generator tries to align his state $\rho_2$ with $D_2 = \rho$ to maximize $\text{Tr}(D_2 \rho_2)$, which is achieved uniquely by choosing $\rho_2 = \rho$. Therefore, we are back to \textbf{Step 1} and the algorithm repeats forever.

\subsection{Cost Function with Absolute Value}
In this section, we consider a cost function that is minimized only if the generated state has the same expected value  as the  target state $\sigma$ with respect to the discriminator $D$: 
\begin{equation}
\label{cost2}
    \min_{\theta_g}\max_{D} \lvert \text{Tr}(D(\sigma - \rho(\theta_g))) \rvert^2, 
\end{equation} and first show that this cost function can prevent the mode collapse issue discussed in the previous section. \\

\noindent \textbf{Step 1 (Discriminator)}: This round is the same as the previous section; the (optimal) Helstrom measurement operator is $D_1 = P^+(\sigma-\rho) = (1+Y)/2$. \\

\noindent \textbf{Step 2 (Generator)}: Unlike before, the generator this time tries to find $\rho(\theta_g)$ such that $\text{Tr}(D_1\rho(\theta_g))$ is equal to $\text{Tr}(D_1\sigma) = 3/4$. Since $\text{Tr}(D_1\rho_1)=1/2+\text{Tr}(\rho_1Y/2)$ must be $3/4$, any  $\rho_1$  satisfying $\text{Tr}(\rho_1Y/2)=1/4$ is a solution or equivalently any $\rho_1 = (1+a_xX + Y/2 + a_zZ)/2$ where $a_x^2+a_z^2 = 3/4$ (assuming unitary dynamics), satisfies this constraint. Note that in contrast to the previous section, there are infinitely many values of $a_x$ and $a_z$ that the generator chooses to produce $\rho_1$, and therefore the probability of a random choice of generator in the next round producing $\rho_2 = \rho$ is zero.

It is important to remark that the modified cost function in Eq.\eqref{cost2} can cure mode collapse more generally, beyond the example shown above. The main observation is that the equation $\Tr(D_i \rho_i) =\Tr(D_i\sigma)$ is always underdetermined and therefore there are infinitely many $\rho_i$ satisfying this equation. (The only exception is when $D_i$ is rank one, and $\Tr(D_i \rho_i)=\Tr(D_i\sigma)=1$, but this is only possible if $D_i=\sigma=\rho_i$, which is the desired fixed point.) Although there are infinitely many choices for $\rho_i$,  mode collapse only occurs  when $\rho_i = \rho_{i-2}$. This indicates that the set of states that cause mode collapse using the modified cost function has measure zero. Indeed, using the modified cost function, mode collapse is never observed in our numerical experiments.

\section{HQuGANs using Krotov's Method}\label{other_QOC}
In this section, we describe more details about Krotov's method \cite{Krotov} as well as additional numerical experiments using HQuGANs with Krotov's method. Krotov's method rigorously investigates the conditions for updating a time-continuous control field such that the updated field guarantees to decrease a cost function. To achieve this task, the method carefully updates a control field at time $t_j$ based on all of the control fields at $t_k$ for $\forall k < j$ that are previously updated. This guarantee of monotonic convergence for time-continuous control fields is what distinguishes Krotov’s method from other quantum optimal control methods. We discuss Krotov's method more in detail below.


\subsection{Krotov's Method}\label{krotov_appendix}
Krotov's method \cite{Krotov} is based on the rigorous examination of the conditions for calculating the updated control fields such that it always guarantees \textit{monotonic convergence} of $J[\{\epsilon_i(t) \}]$ by construction. Krotov's method considers a more standard form of the cost functional \cite{Krotov},

\begin{align}
    J[\{ \epsilon_i (t) \}, \{ \ket{\psi(t)} \}] =& J_T[\{ \ket{\psi(T)} \}] \nonumber \\
    &+ \int_0^T dt \text{ }g[\{ \epsilon_i (t) \}, \{\rho(t) \}, t], \label{J_tot}
\end{align}
where $J_T$ is the main objective functional that depends on the final time $T$ (e.g. F in Eq. \eqref{infidelity}) and $g = g_a[\{ \epsilon_l(t) \},t] + g_b[\{ \ket{\psi_k(t)}, t]$ captures additional costs or constraints at intermediate times, for instance by restricting the field spectra or by penalizing population in certain subspaces. 

To minimize the field intensity and to smoothly switch the field on and off, $g_a$ can be chosen to be in the following form \cite{Krotov_apd},
\begin{equation}\label{krotov_g_a}
    g_a[\{ \epsilon(t) \}] = \frac{\lambda}{S(t)}[\epsilon(t) - \epsilon_{ref}(t)]^2,
\end{equation} where $\epsilon_{ref}(t)$ denotes some reference field, $S(t)$ is a shape function and $\lambda$ is a step size (we discuss more details on these parameters later). Given such a specific choice of the functional $g_a$, Krotov's method updates control fields \cite{Krotov, Krotov2}
\begin{equation}
\label{updateeq}
    \Delta \epsilon_i^{(k)}(t) = \frac{S_i(t)}{\lambda_{i}}\text{Im} \Bigg[\bra{\chi^{(k-1)}(t)} \big(\frac{\partial H}{\partial \epsilon_i(t)}\big)\bigg\vert_{(k)} \ket{\phi^{(k)}(t)} \Bigg].
\end{equation} $\ket{\phi^{(k)}(t)}$ and $\ket{\chi^{(k-1)}(t)}$ are forward-propagated and backward-propagated under the guess controls $\{\epsilon_{i}^{(k-1)}(t)\}$ and optimized controls $\{\epsilon_{i}^{(k)}(t)\}$ in each iteration $k$, respectively, through:
\begin{align}
\frac{\partial }{\partial t}\ket{\phi^{(k)}(t)} = -\frac{i}{\hbar} H^{(k)}\ket{\phi^{(k)}(t)},
    \end{align} 
\begin{equation} \label{backprop}
    \frac{\partial }{\partial t}\ket{\chi^{(k-1)}(t)} = -\frac{i}{\hbar} H^{\dagger (k-1)}\ket{\chi^{(k-1)}(t)} + \frac{\partial g_b}{\partial \bra{\phi}}\bigg\vert_{(k-1)},
\end{equation} with the boundary condition of 
\begin{equation} \label{chi_boundary}
\ket{\chi^{(k-1)}(T)} = -\frac{\partial J_T}{\partial \bra{\phi(T)}}\bigg\vert_{(k-1)}.
\end{equation} The optimization process of Krotov's method proceeds as follows: It first constructs $\ket{\chi^{(k-1)}(T)}$ according to Eq.\eqref{chi_boundary}, which is propagated through the backward propagation of Eq.\eqref{backprop} over the entire time grid from $t=[T,0]$. During the back-propagation stage, all states $\ket{\chi^{(k-1)}(t)}$ at each time  $t=t_k$ must be stored in a  memory. Then, starting from a given initial state $\ket{\phi^{(k)}(0)}$, the method updates a control pulse at the first time grid $t=t_1$ using the update equation Eq.\eqref{updateeq}, where $\bra{\chi^{(k-1)}(0)}$ is one of the back-propagated states we stored in the first step. From this updated control field, we obtain a time-evolved state $\ket{\phi^{(k)}(t_1)}$. We then update the next control field at $t=t_2$ via the update equation in Eq.{\eqref{updateeq}} using the previously obtained $\ket{\phi^{(k)}(t_1)}$. The procedure is repeated until control fields at all $N$ time grids are updated. This updated control field guarantees to decrease the cost functional $J$ \cite{KrotovPython}. 

In a single iteration, Krotov's method thus requires more resources compared to GRAPE because it needs to solve the Schr\"{o}dinger equation $2N$ times where $N$ is the number of time grids. In addition, the method requires an exponentially large memory to store all the backward-propagated states. However, due to the monotonic convergence of Krotov's method, the method is not only more stable but can also converge faster than other quantum optimal control techniques depending on the cost functions \cite{krotov_vs_grape}.

\subsection{Numerical Experiments with Limited Control Amplitudes}\label{krotov_appendix_num1}

In Section \ref{main_krotov}, we presented the numerical results of generating various entangled states using the HQuGAN with Krotov's method. In this section, we perform two additional numerical experiments by constraining amplitudes of control fields to show that the HQuGAN successfully works for larger systems and can be experimental-friendly. First, we bound the control amplitudes by $\pm 1$ and try to generate generalized GHZ states using the HQuGAN with the optimal discriminator. To reduce the simulation time, we use less number of steps for the generator for a high number of qubits. As summarized in Table \ref{tab:table3}, the HQuGAN using Krotov's method successfully generates (up to) the $9$-qubit GHZ state. As we monotonically increase the evolution time $T$ by $10$ for one qubit increment, the number of iterations of the HQuGAN grows exponentially. To see how $T$ affects the number of iterations, we generate the $9$-qubit GHZ state with three different $T=60, 70$, and $100$. When $T$ is increased, the number of iterations of the HQuGAN reduces significantly. This behavior numerically validates the intuition that longer $T$ introduces more parameters that assist to achieve a faster convergence rate, and provides a way to examine a more rigorous relationship between the convergence rate and the number of parameters.

We next consider an experimental setup where the coefficient for $ZZ$-interaction term in Eq.\eqref{Ham1} is set to $J  = 100$ (MHz) and the amplitudes of control fields are limited by $\lvert \epsilon_i(t) \rvert \leq 1$ (GHz). We try to learn the GHZ state for various total evolution times from $T=20$ to $100$ (ns). We use the optimal discriminator and set the generator's optimization steps to be $10$ for all instances. As shown in Fig.\ref{fig:3qubit_experiments_krotov}, the HQuGAN successfully produces the GHZ state for all instances. As we increase evolution time from $T=20$ to $70$, the number of iterations decreases monotonically and stays around similar values after then, which again verifies that more evolution time improves the convergence rate of the algorithm.

\begin{table}[t!]
\begin{tabular}{|p{0.5cm}|p{2.0cm}|p{2.0cm}|p{1.0cm}|p{1.0cm}|}
\hline
$n$ & Gen. it & Tot. it & T  & N   \\ \hline
\hline
1   & 10      & 3       & 5  & 50  \\ \hline
2   & 10      & 8       & 5  & 50  \\ \hline
3   & 10      & 18      & 5  & 50  \\ \hline
4   & 10      & 52      & 10 & 100 \\ \hline
5   & 10      & 128     & 10 & 100 \\ \hline
6   & 10      & 264     & 20 & 200 \\ \hline
7   & 5       & 530     & 30 & 300 \\ \hline
8   & 3       & 1330    & 40 & 400 \\ \hline
9   & 3       & 1234    & 60 & 600 \\ \hline
9   & 3       & 911    & 70 & 700 \\ \hline
9   & 3       & 415    & 100 & 1000 \\ \hline
\end{tabular}
\caption{\label{tab:table3} \textbf{HQuGANs using Krotov's method for learning $n$-qubit GHZ states using the optimal discriminator.} The number of iterations required to learn $n$-qubit GHZ states up to $n=9$, with limited control amplitudes using the Krotov's method. Gen. it and Tot.it refer to the maximum number of generator's steps in each round and the total number of iterations taken by the HQuGAN to converge respectively. (To reduce simulation time, an optimal discriminator for all instances and a smaller number of generator steps for larger systems are used.)}
\end{table}

\subsection{Parameters of Krotov's Method}\label{krotov_appendix_params}
In this section, we describe the parameters of Krotov's method used in the numerical experiments. First, note that Krotov's method primarily requires backpropagating the Schr\"{o}dinger equation from the boundary condition $\ket{\chi(T)}$ in Eq.\eqref{chi_boundary}, which depends on the cost function $J_T$. Since the generator of the HQuGAN minimizes $J_T = \lvert \Tr (D (\rho -\sigma)) \rvert^2$, the boundary condition becomes
\begin{equation}
\begin{split}
    \ket{\chi_k(T)} &= -\frac{\partial J_{T}}{\partial \bra{\psi(T)}} = -2\Tr(D(\rho(T)-\sigma))(D\ket{\psi(T)}).
\end{split}
\end{equation} We can define the boundary condition similarly for the discriminator.

In addition, there are two main hyperparameters of Krotov's method that we need to set: the shape function $S(t)$ and the step width $\lambda$, as introduced earlier in Eq.\eqref{krotov_g_a}. The shape function contributes to the update size of the control pulses at each time grid and is encouraged to be smoothly switched on and off around $t=0$ and $T$ to make the optimized pulses smooth, ensuring the boundary condition of pulses, if needed. The step width $\lambda$ determines the overall magnitude of $\Delta \epsilon$ as can be observed in Eq.\eqref{updateeq}. If $\lambda$ is too large, then the size of the pulse update $\Delta \epsilon$ becomes very small, causing a slow convergence. If $\lambda$ is too small, on the other hand, then $\Delta \epsilon$ becomes very large, causing the optimization unstable \cite{KrotovPython}.

For all numerical experiments that we have performed using Krotov's method, the shape function $S(t)$ is chosen as the following flat-top function,
\begin{equation}
    S(t) = \begin{cases}
\label{default_shape}
    \sin^2(\pi t/(2\text{ t.rise})), & \text{if } t\leq \text{t.rise}\\
    \sin^2(\pi (t-T)/(2\text{ t.fall})),  & \text{if } t\geq T-\text{t.fall} \\
    0 & \text{if } t = 0 \text{ or } t = T \\
    1 & \text{otherwise},
\end{cases}
\end{equation} where $\text{t.rise} = \text{t.fall} = T/20$ (This shape function, which has been used in previous studies \cite{KrotovPython, krotov_vs_grape}, ensures a boundary condition and switches on and off smoothly around $t=0$ and $t=T$). As there is no rigorous method to find an ideal value for the step width $\lambda$, we found proper values of $\lambda$ for different numerical experiments by trials and errors. For learning $n=1,\dotsc,6$-qubits GHZ states shown in Table \ref{tab:table2}, we set $\lambda = 2, 5, 10, 10, 50, 50$, respectively. For generating Table \ref{tab:table3} and Fig.\ref{fig:3qubit_experiments_krotov}, we set $\lambda = 2$ and $10$ to generate, respectively.

\begin{figure}[t!]
    \centering
    \includegraphics[width=8.6cm]{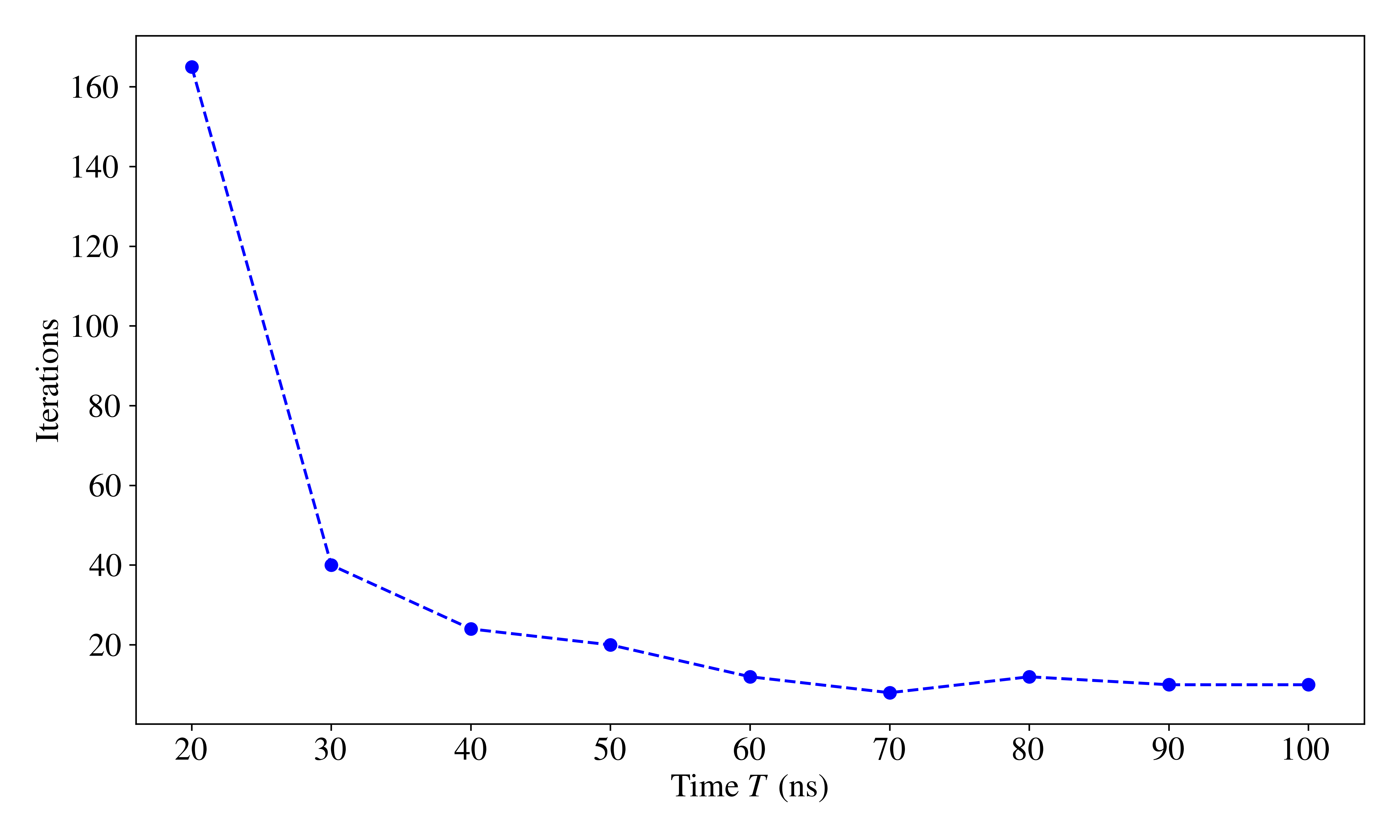}
    \caption{\textbf{HQuGANs for learning the GHZ state using Krotov's method under experimental parameters.} For the LFTIM Hamiltonian described in Eq.\eqref{Ham1}, we set $J = 100$ (MHz) and aim to generate the GHZ state by limiting control amplitudes as $\lvert \epsilon_i(t) \rvert \leq 1$. The learning task was performed for various evolution time from $T=20$ to $100$ (ns). The step size is kept the same for all cases ($\lambda = 10$). }
    \label{fig:3qubit_experiments_krotov}
\end{figure}

\section{Bandwidth Limitations}\label{bandwidth_numerics}
Generating bandwidth-limited control fields is an important task as precise high-bandwidth control pulses are difficult to implement in many experiments. In this section, we show that the proposed HQuGAN with GRAPE can generate low-bandwidth control fields by proposing the penalty term $J_p$ described earlier in Eq.\eqref{penalty1}. Also, we verify that the algorithm obeys the time-bandwidth quantum speed limit in Eq.\eqref{tblimit}. 

We consider a $3$-qubit LTFIM Hamiltonian with only a single control field $\epsilon(t)$ that controls all local Pauli terms in the Hamiltonian,
\begin{equation}\label{bandwidth_ham_1}
    H(t) = \epsilon(t)\sum_{i=1}^n (X_i + Z_i) - \sum_{i=1}^{n-1} Z_i Z_{i+1},
\end{equation} where $\epsilon(0) = 1$ is set to be a constant pulse. The reason for having only one control field is to compare the bandwidths of optimized control fields in different cases more directly. The goal of the HQuGAN is to produce the GHZ state, and we consider two different evolution times $T= 10$ and $20$ to examine how the bandwidth of optimal control fields depends on $T$. To estimate the bandwidth of a control field, we perform the Fast Fourier Transform (FFT) of the control field and then record the maximum value of frequency where its amplitude component is greater than $0.05$. For accurate FFT, the number of samples (i.e. Trotter number) is set to be $100T$ in both cases.

\begin{figure}[t!]
    \centering
    \subfloat{{\includegraphics[width=8.6cm]{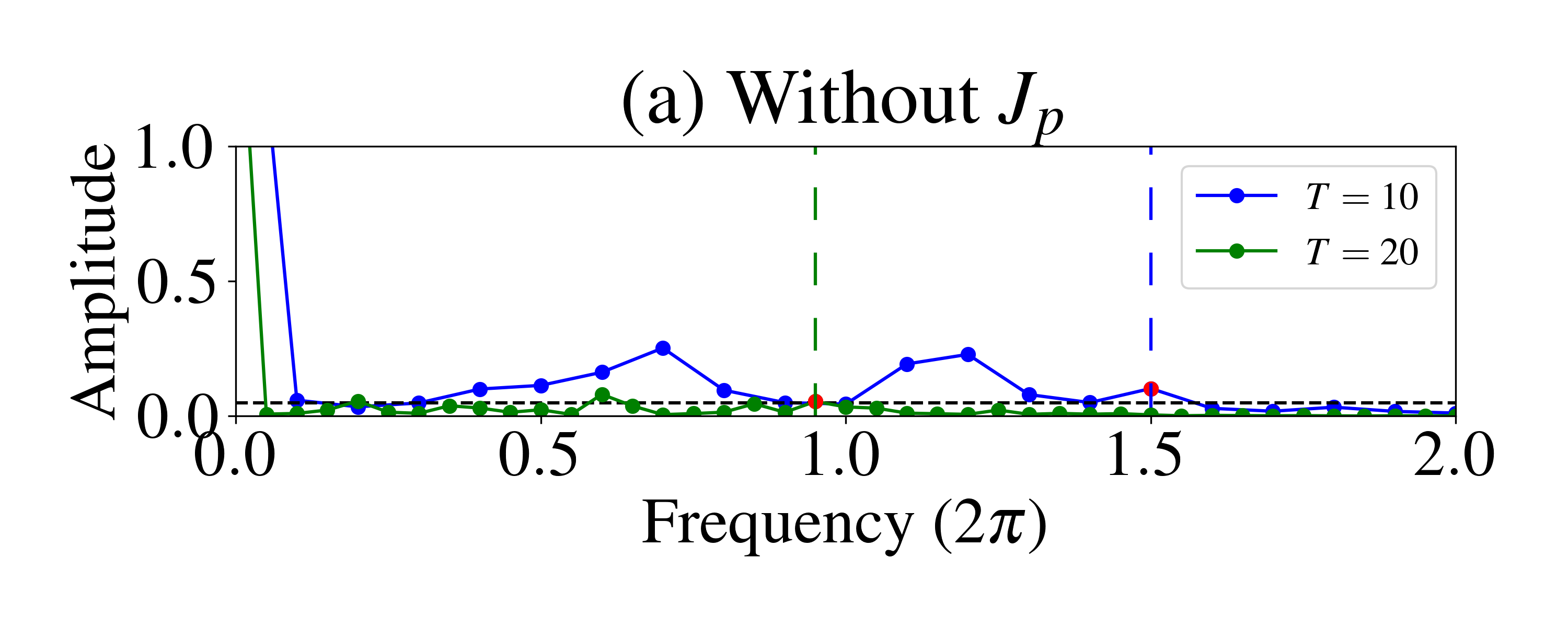} }}%
    \qquad
    \subfloat{{\includegraphics[width=8.6cm]{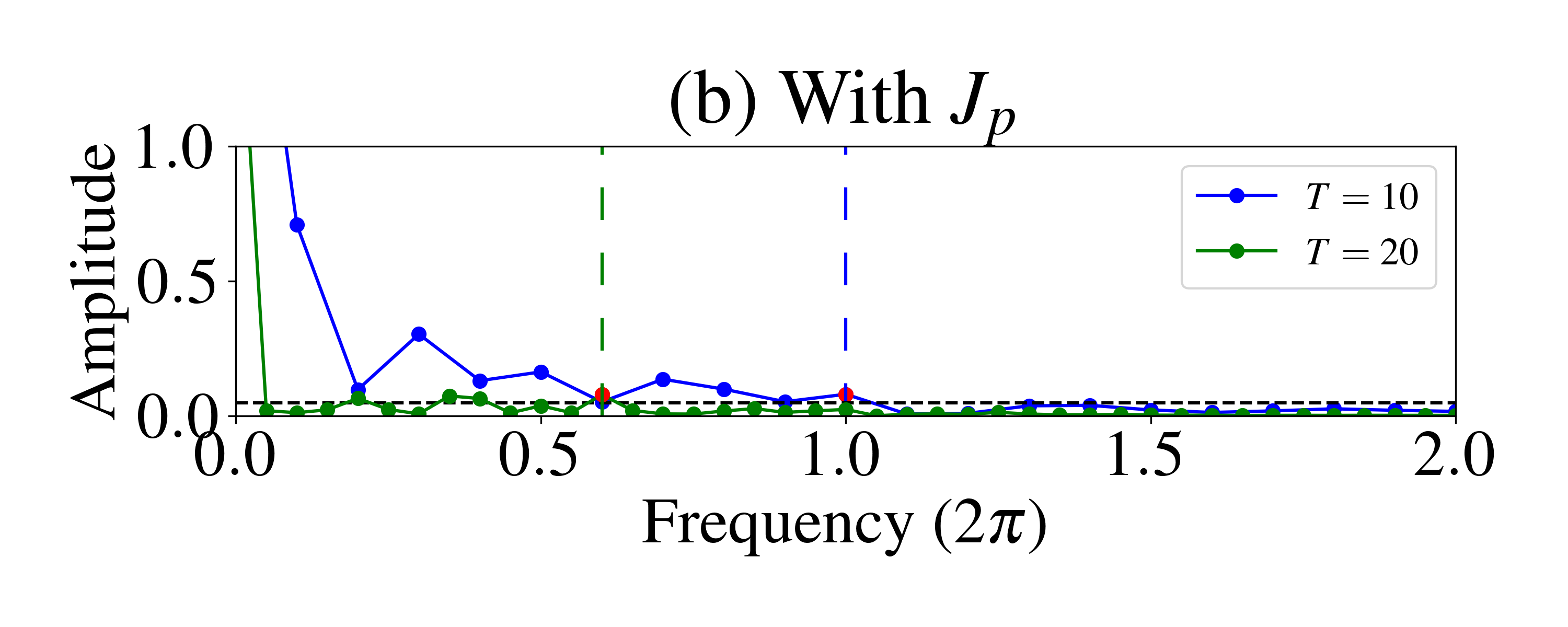} }}%
    \caption{\textbf{(Zoomed) The Fast Fourier Transforms (FFTs) of the optimized control fields.} The FFTs of optimized control fields that generate the GHZ state under the LFTIM Hamiltonian with a single control field Eq.\eqref{bandwidth_ham_1} for $T=10$ and $20$, where the optimization is performed (a) without the penalty term and (b) with the penalty term. In figure (b), the HQuGAN generates low-bandwidth control fields (compared to the free optimization case in (a)) with the assistance of the penalty term. In addition, as the evolution time gets doubled, for both cases, the maximum frequency or the bandwidth of controls decreased significantly, which is in agreement with the time-bandwidth quantum speed limit.}
    \label{fig:fft1}%
\end{figure}

We first optimize the HQuGAN without the penalty term $J_p$ to generate the GHZ state. The FFTs of the optimized control fields are displayed in Fig.~\ref{fig:fft1}(a). In the figure, the bandwidth $w_{max}$ 
for each case is marked using a red dot and a dashed line: $w_{max} = 1.5 \times 2\pi$ for $T=10$ and $w_{max} = 0.95 \times 2\pi$ for $T=20$. We then conduct the same task by adding the penalty term $J_p$ to the cost function. Since such constrained optimization highly depends on the values of the hyper-parameter $\alpha$ in $J_p$, we try different values of $\alpha$ and report the case that gives the smallest value of $J_p$ in the same figure (b). The bandwidths are reported as $w_{max} = 2\pi$ for $T=10$ and $w_{max} = 0.6 \times 2\pi$ for $T=20$. This clearly shows that introducing $J_p$ to the HQuGAN leads the algorithm to produce a control field with lower bandwidth. We also observe that doubling $T$ lowers the bandwidth of the control field almost by half. This numerically validates the time-bandwidth limit in Eq.\eqref{tblimit}, which allows us to understand a rigorous relationship between a number of parameters of the HQuGAN that depends on total evolution time $T$ and permissible values of the bandwidth of controls.

\end{document}